\newif\iftwocol\twocolfalse     

\documentclass[12pt,twoside]{article}
 \newdimen\paravsp  \paravsp=1.3ex 
\topmargin=-10mm  \oddsidemargin=5mm \evensidemargin=5mm
\textwidth=15cm \textheight=22.2cm

\usepackage{graphicx} 
\usepackage{latexsym,amsmath,amssymb} 
\usepackage{amsthm} 
\usepackage{hyperref} 
\usepackage{thmtools} 
\usepackage[capitalize]{cleveref} 
\crefname{equation}{}{} \Crefname{equation}{}{} 
\crefname{figure}{Figure}{Figures} 
\crefname{open}{Open Problem}{Open Problems} 
\usepackage{color}

\sloppy 

\def\,{\mskip 3mu} \def\>{\mskip 4mu plus 2mu minus 4mu} \def\;{\mskip 5mu plus 5mu} \def\!{\mskip-3mu}
\thinmuskip=0mu \medmuskip=1mu plus 1mu minus 1mu \thickmuskip=2mu plus 3mu minus 1mu

\newtheoremstyle{myTheoremStyle}{\topsep}{\topsep}{\itshape}{0pt}{\bfseries\boldmath}{}{5pt}{\thmname{#1}\thmnumber{ #2}\thmnote{ (#3)}}
\theoremstyle{myTheoremStyle}
\newtheorem{theorem}{Theorem}
\newtheorem{corollary}[theorem]{Corollary}
\newtheorem{lemma}[theorem]{Lemma}
\newtheorem{definition}[theorem]{Definition}
\newtheorem{proposition}[theorem]{Proposition}
\newtheorem{assumption}[theorem]{Assumption}
\newenvironment{keywords}{\centerline{\bf\small
Keywords}\begin{quote}\small}{\par\end{quote}\vskip 1ex}
\newtheorem{open}[theorem]{Open Problem}
\newtheorem{remark}[theorem]{Remark}
\renewenvironment{proof}{{\vskip 1ex\noindent\bf Proof.}}{\qed\vskip 1ex}
\iftwocol\def\tc#1{#1}\def\ntc#1{}\else\def\tc#1{}\def\ntc#1{#1}\fi 

\def\paradot#1{\vspace{\paravsp plus 0.5\paravsp minus 0.5\paravsp}\noindent{\bf\boldmath{#1.}}} 
\def\hrefurl#1{\href{#1}{\rule{0ex}{1.7ex}\color{blue}\underline{\smash{#1}}}} 

\def\eps{\varepsilon}           
\def\epstr{\epsilon}            
\def\nq{\hspace{-1em}}          
\def\qed{\hspace*{\fill}\rule{1.4ex}{1.4ex}$\quad$\\} 
\def\eoe{\hspace*{\fill} $\blacklozenge\quad$} 
\def\eor{\hspace*{\fill} {\rm\LARGE\textbullet}$\quad$} 
\AtEndEnvironment{example}{\eoe\endexample}
\AtEndEnvironment{remark}{\eor\endremark}
\def\fr#1#2{{\textstyle{#1\over#2}}} 
\def\SetR{\mathbb{R}}           
\def\SetN{\mathbb{N}}           
\def\SetB{\mathbb{B}}           
\def\SetZ{\mathbb{Z}}           
\def\E{{\mathbb E}}             
\def\lb{{\log}}                 

\def\cA{{\cal A}}                
\def\cC{{\cal C}}                
\def\cI{{\cal I}}                
\def\cU{{\cal U}}                
\def\cV{{\cal V}}                
\def\cW{{\cal W}}                
\def\cX{{\cal X}}                
\def\cY{{\cal Y}}                

\def\Km{{K\!m}}

\def\BsdKs{\SetB_K^*}             
\def\BsdKx{(\SetB^*,d_K)}         

\iftrue 
\makeatletter
\@ifundefined{@tempboxa}{\@nameuse{newbox}\@tempboxa}{}
\@ifundefined{@tempboxb}{\@nameuse{newbox}\@tempboxb}{}
\def\xxstackrel#1#2#3#4#5#6{\setbox\@tempboxa=\hbox{$#3$}
                      \mathrel{%
                        \setbox\@tempboxb=\hbox{%
                        \m@th \hbox {\ooalign {%
                                 \raise #2\hbox to \wd\@tempboxa%
                                    {\hfil$\scriptstyle #1$\hfil}\crcr%
                                 \lower #4\box\@tempboxa}}}%
                        \ht\@tempboxb=#5
                        \dp\@tempboxb=#6
                        \box\@tempboxb\relax
                          }}
\makeatother
\def\equa{\xxstackrel{+}{1.2ex}{=}{0ex}{1.6ex}{0.45ex}}
\def\leqa{\xxstackrel{+}{1.4ex}{\leq}{0.1ex}{2ex}{0.45ex}}
\def\geqa{\xxstackrel{+}{1.4ex}{\geq}{0.1ex}{2ex}{0.45ex}}
\def\equs{\xxstackrel{s}{1.2ex}{=}{0ex}{1.6ex}{0.45ex}}
\def\lesstar{\xxstackrel{*}{1.2ex}{<}{0ex}{1.6ex}{0.45ex}}
\def\leqs{\xxstackrel{s}{1.4ex}{\leq}{0.1ex}{2ex}{0.45ex}}
\def\geqs{\xxstackrel{s}{1.4ex}{\geq}{0.1ex}{2ex}{0.45ex}}
\else 
\def\equa{\smash{\;\stackrel+=\;}}
\def\leqa{\smash{\;\stackrel+\leq\;}}
\def\geqa{\smash{\;\stackrel+\geq\;}}
\def\equs{\smash{\;\stackrel{s}=\;}}
\def\lesstar{\smash{\;\stackrel*<\;}}
\def\leqs{\smash{\;\stackrel{s}\leq\;}}
\def\geqs{\smash{\;\stackrel{s}\geq\;}}
\fi

\begin{document}


\title{\vspace{-4ex}
\vskip 2mm\bf\Large\hrule height5pt \vskip 4mm
Properties of Algorithmic Information Distance
\vskip 4mm \hrule height2pt}
\author{{\bf Marcus Hutter}\\[3mm]
\normalsize Google DeepMind \& Australian National University\\[2mm]
\normalsize \hrefurl{http://www.hutter1.net/publ/kmetric.pdf}
}
\date{July 2025}
\maketitle

\begin{abstract}
The domain-independent universal Normalized Information Distance based on Kolmogorov complexity
has been (in approximate form) successfully applied to a variety of difficult clustering problems.
In this paper we investigate theoretical properties of the un-normalized algorithmic information distance $d_K$.
The main question we are asking in this work is what properties this curious distance has, besides being a metric.
We show that many (in)finite-dimensional spaces can(not) be isometrically 
scale-embedded into the space of finite strings with metric $d_K$.
We also show that $d_K$ is not an Euclidean distance, 
but any finite set of points in Euclidean space can be scale-embedded into $(\{0,1\}^*,d_K)$.
A major contribution is the development of the necessary framework 
and tools for finding more (interesting) properties of $d_K$ in future, 
and to state several open problems.
\vspace{5ex}\def\contentsname{\centering\normalsize Contents}\setcounter{tocdepth}{1}
{\parskip=-2.7ex\tableofcontents}
\end{abstract}

\begin{keywords} 
Kolmogorov complexity; information distance; metric; isometric embedding; Euclid; Graphs; Hamming; Hilbert space.
\end{keywords}

\section{Introduction}\label{sec:Intro}

\paradot{Metric spaces and embeddings}
Isomorphisms and isomorphic embeddings play an important role in all parts of mathematics:
to relate different mathematical structures or even mathematical sub-fields, 
to reduce problems to each other,
to deepen our understanding, and of course for mathematical `economy'.
Isometries are isomorphisms between metric spaces that preserve distance.
An isometric embedding $\phi:\cV\to\cW$ from a metric space $\cV$ with distance $d_\cV$
to a metric space $\cW$ with metric $d_\cW$ by definition 
satisfies $d_\cW(\phi(v),\phi(v'))=d_\cV(v,v')$.
For instance, orthogonal transformations from $\cV=\SetR^m$ to $\cW=\SetR^m$
preserve the Euclidean distance $d_\cV=d_\cW=||\cdot||_2$. 
There also exist universal metric spaces $(\cU,d_\cU)$,
meaning \emph{all} metric spaces can be isometrically embedded into $(\cU,d_\cU)$.
Studying the relation between metric spaces, 
and their additional properties has a long history \cite{Frechet:10,Urysohn:27,Husek:08,Graham:85}
and has even been featured in recreational/artistic work \cite{Zeng:20}.

\paradot{Euclidean spaces}
Euclidean spaces such as $(\SetR^m,||\cdot||_2)$ for $m<\infty$ and 
their infinite-dimensional extension, 
Hilbert spaces, for $m=\infty$ (also called infinite-dimensional Euclidean spaces) 
play a particularly important role.
Not all metric spaces are Euclidean.
So whether a metric has the Euclidean property, that is,
whether it can be isometrically embedded into $(\SetR^m,||\cdot||_2)$ or a Hilbert space,
is one of the most natural and first questions to ask,
since a lot is known about Euclidean spaces,
and Euclidean spaces enjoy particularly nice properties.
For instance, unlike most other interesting metrics one can impose on $\SetR^m$,
e.g.\ $p$-norms, the Euclidean distance on $\SetR^m$ 
has a very large automorphism group (isomorphisms onto itself).
An important application is in machine learning:
If $d(x,y)$ measures some (abstract) distance between data points,
then $k_\lambda(x,y):=\exp(-\lambda d(x,y)^2)$ 
is a kernel which measures the similarity between $x$ and $y$.
$k_\lambda$ is a positive definite kernel ($\forall\lambda>0$)
if and only if $d$ can be isometrically embedded into some Euclidean 
space (the kernel trick), which allows one to 
apply linear methods for classification (such as SVMs) 
or (e.g.\ logistic) regression in this Euclidean feature space.
Many algorithms have been developed to approximately map points from
a high-dimensional or abstract metric space into a lower-dimensional Euclidean space
\cite[Chp.15]{Matousek:02}.
Not least, for $m\leq 3$ this is our local space we live in, 
so objects in $\SetR^{\leq 3}$ have convenient representations
(see e.g.\ the popular $t$-SNE \cite{VanderMaaten:08} algorithm).

\paradot{Algorithmic information distance}
The algorithmic information distance \cite{Bennett:98} 
is a peculiar complexity metric $d_K$ 
over the set of finite binary string $\SetB^*$.
It measures the distance of (an object encoded as) string $x$ 
from (an object encoded as) string $y$ in terms of the length 
of the shortest computer program that transforms $x$ into $y$ and $y$ into $x$.
If $x$ and $y$ share a lot of structure (in the extreme case $x=y$),
then $d_K(x,y)$ is small. Otherwise it will be large(r), 
e.g.\ if $x$ is a piece of text and $y$ is a digitized photo.
But if $y$ is a photo of the text $x$, $d_K$ would be smaller,
since $y$ can be compressed with the help of $x$ 
(and $x$ is close to trivial given $y$, think OCR).
Most mathematically ``natural'' distances would not capture this relation.
The distance is universal in the sense that it minorizes most other distances 
(like Hamming or Edit distance).
By normalizing it (NID) \cite{Li:04} and approximating it 
by standard file compressors like gzip (NCD) \cite{Cilibrasi:05},
it has led to impressive applications including a completely automatic reconstruction 
of the evolutionary tree of 24 mammals based on complete mtDNA, 
of the classification tree of 52 languages based on the declaration of human rights,
of the phylogeny of SARS-CoV-2 virus \cite{Cilibrasi:22},
and various others \cite{Cilibrasi:05,Jiang:23fewshot,Jiang:23text},
without using any domain-specific knowledge.
Its theoretical properties have barely been studied.
It has been shown that it is indeed (nearly) a metric,
and that it minorizes all other metrics with limited neighborhood size \cite[Sec.III]{Li:04}.
Very recently $d_K$ has been applied in kernel methods \cite{Hutter:25kccluster}.
For instance, while we show that $d_K$ is not Euclidean on all of $\SetB^*$,
it can accommodate any finite Euclidean point set, 
so the kernel trick (see above) could be applied to kernel $\exp(-\lambda d_K^2)$ on such subspace.

\paradot{This paper}
The goal of this paper is to initiate a deeper study of
the un-normalized complexity distance $d_K$.
We will encounter sufficient obstacles even for this somewhat simpler cousin of NID.
We expect many results to transfer in some form or another to NID,
but this is beyond the scope of this article.
It is known that $d_K$ is also a metric and has a certain minorization property.

Besides the properties of $d_K$ we were able to prove,
the main contribution of the paper is to provide the necessary framework and tools 
for finding more (interesting) properties of $d_K$ in future, 
and to state some open problems, which hopefully stirs interest.

One way to study the properties of metric space $\BsdKs:=\BsdKx$,
is to find isometric embeddings from or into other metric spaces or into itself.
This is the approach we take in this paper.
By showing that $\BsdKs$ isometrically embeds into some $(\cV^*,d_\cV)$,
it will inherit any ``positive'' properties of the latter.
For instance, if we could isometrically embed $\BsdKs$ into a Hilbert space,
we would have shown that $d_K$ is Euclidean.
Conversely, if we isometrically embed some $(\cV^*,d_\cV)$ into $\BsdKs$,
the latter would inherit any ``negative'' property of the former.
Indeed, we will embed a non-Euclidean space into $\BsdKs$,
showing that $d_K$ is \emph{not} Euclidean. On the other hand,
this does not preclude that $d_K$ is Euclidean on some sub-spaces of $\SetB^*$.
Possibly $\BsdKs$ is a universal $\langle${\it qualifier}$\rangle$ metric space,
where $\langle${\it qualifier}$\rangle$ has yet to be determined,
and could include/embed rich Euclidean and non-Euclidean spaces.

\paradot{Contents}
\Cref{sec:Metrics} introduces notation,
(further) properties distance functions may possess,
and in particular (Euclidean) metrics.
A minor variation of the algorithmic information distance 
is a proper metric, which we will call `complexity metric'.
\Cref{sec:IsoEmbedding} briefly discusses universal metric spaces,
and isometric embeddings into $\ell_1$ and the Hamming cube.
Due to the discrete nature and additive constants in $d_K$,
we need to define the relaxed notion of `scale-isometries',
which allows to embed the Hamming cube 
and bounded subspaces of $(\SetR^m,||\cdot||_1$) into $\BsdKs$.
In \cref{sec:RandomEmbedding} we prove a general theorem that allows to find scale-embeddings 
via probability measures and string-valued random variables.
We use this to prove a (restricted) product space embedding theorem.
\Cref{sec:KraftIneq} looks for extra properties which $d_K$ possesses
that may be useful to prove negative embedding results. 
For instance, Kraft's inequality implies that any $d_K$-ball 
of radius $r$ contains at most $2^r$ points.
This prevents embedding many infinite-dimensional spaces into $\BsdKs$.
We also show that not every finite metric space embeds into $\BsdKs$.
\Cref{sec:Euclid} shows that $\BsdKs$ is not Euclidean, 
but any finite Euclidean point set can be scale-embedded.
\Cref{sec:Disc} concludes.
\Cref{app:Notation} contains a complete list of notation.

Most sections identify further open problems, e.g.\ 
whether $d_K$ is universal in some interesting sense,
whether any finite metric space can be scale-embedded into $d_K$,
whether a small power of $d_K$ is Euclidean, and others.
Some more elementary calculations or proofs 
that don't require creativity are marked as `(Exercise)'.

\section{Distances and Metrics}\label{sec:Metrics}

We start by establishing basic notation,
define positive and conditionally negative definite kernels,
and various distances and metric spaces and their properties,
including Hamming distance, Euclidean metrics, 
(Shannon) information pseudo-metric, Jensen-Shannon divergence, 
and the algorithmic information distance or complexity metric 
based on prefix Kolmogorov complexity.

\paradot{Notation}
$m$ will usually denote the dimension of a matrix or vector space;
$i,j$ are the vector and matrix indices $\in\{1,...,m\}$;
$s$ is a scaling factor;
$k$ are kernels;
$d$ are distances;
$K$ will denote the Kolmogorov complexity;
$n$ is an integer used generically.
With $\SetB:=\{0,1\}$, $\SetB^*$ ($\SetB^m$) denotes set of all finite binary strings (of length $m$), 
and $\SetB^\infty$ the set of all infinite sequences.
$\BsdKs:=(\SetB^*,d_K)$ abbreviates $\SetB^*$ equipped with metric $d_K$.
$\cX^m$ are vectors of length $m\leq\infty$ over $\cX$.
So $\SetB^m$ is alternatively interpreted as vectors when convenient.
The Iverson bracket $[\![\text{\it bool}]\!]=1$ if $\text{\it bool}$ is true, and $0$ otherwise.
All logarithms are base $2$.

Throughout this work, $\cX$ (and $\cU,\cV,\cW$) will be a set (space) 
equipped with some function $f:\cX\times\cX\to\SetR$.
Depending on the properties of $f$ to be clarified later, 
these functions are called distances or metrics or inner products or kernels.
For an (ordered) finite subset of $\{x_1,...,x_m\}\subseteq\cX$,
we define the matrix $M^f$ with entries $M^f_{ij}:=f(x_i,x_j)$.

See \cref{app:Notation} for a more complete list of notation.

\begin{definition}[Kernels]\label{def:kernel}
A set $\cX$ equipped with some function $k:\cX\times\cX\to\SetR$
is called a kernel space, and $k$ a kernel.
The kernel usually possesses further properties, e.g.\ some of 
\begin{itemize}\parskip=0ex\parsep=0ex\itemsep=0ex
\item[(P)] Non-negative: $k(x,y)\geq 0$
\item[(Z)] Zero on diagonal: $k(x,x)=0$
\item[(N)] Non-zero off-diagonal: $k(x,y)\neq 0$ for $x\neq y$
\item[(S)] Symmetry: $k(x,y)=k(y,x)$
\item[(TI)] Triangle Inequality: $k(x,z)\leq k(x,y)+k(y,z)$
\item[(PD)] Positive Definite: $\forall m\in\SetN~\forall x_1,...,x_m\in\cX~\forall c_1,...,c_m\in\SetR: \sum_{i,j} c_i c_j k(x_i,x_j)\geq 0\nq$
\item[(CND)\tc{\nq}] \tc{\quad} Conditionally Negative Definite: \ntc{\\} 
    $\forall m\in\SetN~\forall x_i\in\cX~\forall c_i\in\SetR$: $\sum_{i=1}^m c_i=0\Rightarrow \sum_{i,j} c_i c_j k(x_i,x_j)\leq 0$
\item[(E)] Euclidean: $\exists\Phi:\cX\to H\!$ilbert space such that $k(x,y)=||\Phi(x)-\Phi(y)||_2$
\item[(IP)] Inner product: If $\cX$ is a vector space with inner product $k$.
\end{itemize}
The conditions are meant to hold for all $x,y,z,...\in\cX$.
\end{definition}

Most definitions require $k$ to be symmetric (S) (or Hermitian in the complex case).
If $k$ satisfies (P,Z,N,S,TI) it is called a metric.
Note that (E) implies but is not implied by (P,Z,S,TI).
Relaxing some of the conditions we arrive at pseudo- (no N), quasi- (no S), meta- (no Z), semi- (no TI), pre- (only P,Z) metrics.
We sometimes call a metric `proper' to emphasize that it has none of the defects above.
The word `distance' is used inconsistently in the literature, 
so we will only use it generically if some of the first 5 properties hold.

\paradot{Compression distance}
There are many different distances, suitable for different applications.
A natural quest is for \emph{universal} distances which subsume as many favorable 
properties as possible, and can be used as widely as possible.
One such distance is based on Kolmogorov complexity,
based on the idea of universal optimal data compression of 
(binary encodings of) any digital objects \cite[Chp.3]{Li:19}:

\begin{definition}[Conditional Kolmogorov Complexity]
Let $U$ be some fixed (optimal) reference prefix Universal Turing Machine (UTM), 
$p\in\SetB^*$ be a binary-encoded program and $x,y\in\SetB^*$ be some binary-encoded objects, then
\iftwocol
\begin{align*}
  K(x)~&:=~ \min\{\ell(p):U(p)=x\} \\
  K(x|y) ~&:=~ \min\{\ell(p):U(p,y)=x\}
\end{align*}
\else $$
    K(x)~:=~ \min\{\ell(p):U(p)=x\} ~~~\text{and}~~~ K(x|y) ~:=~ \min\{\ell(p):U(p,y)=x\}
$$ \fi
\end{definition}
That is, $K(x)$ ($K(x|y)$) is the length of the shortest program for $x$ (given $y$ as input).
Naturally, the more similar $x$ and $y$, the smaller $K(x|y)$,
where `similar' can be understood very flexibly,
as long as there is \emph{some} computable transformation of $y$ into $x$.
One can show that $K(x|y)$ minorizes most other distances (\Cref{thm:minorize}). 
Despite its incomputability, it has found applications in many areas of 
science, engineering, mathematics, statistics, and even the arts and humanities \cite{Li:07,Schmidhuber:12,Wikipedia:24kart}.

Most AIT results are plagued with additive constants 
which arise from constructing `compiler programs' in the proofs.
We use the notation $f\leqa g$ if $\sup_x f(x)-g(x) \leq c<\infty$, and similarly $\geqa$ and $\equa$. 
for some universal constant $c=O(1)$ independent of all parameters.
We will use the following basic properties of $K$ 
\cite[Thm.2.10]{Hutter:04uaibook}, \cite[Sec.2.7]{Hutter:24uaibook2} \cite[Chp.3]{Li:19}:

\begin{lemma}[Properties of $K$]\hfil\par\ntc{\vspace{-2ex}}\label{lem:Kprop}
\begin{itemize}\parskip=0ex\parsep=0ex\itemsep=0ex
\item[$(i)$] $K(x) ~\leqa~ \ell(x)+1.1\log\ell(x) ~\leqa 1.1\ell(x) ~\leq~ 2\ell(x)$ ~~~\text{and}~~~\tc{\\} $K(n)\leqa 1.1\lb n$
\item[$(ii)$] $K(x|y) ~\leqa~ K(x) ~\leqa~ K(x,y) ~\tc{\\}\leqa~ K(x|y)+K(y) ~\leq~ K(x)+K(y)$
\item[$(iii)$] $K(x) ~\leqa~ -\log P(x) + K(P)~$ if $~\sum_{x\in\SetB^*} P(x)\leq 1$ ~~~\tc{\\}(MDL bound) 
\item[$(iv)$] $K(X)+\log P(X) ~\geq~ \log\delta$ w.p.$\geq 1-\delta~$ if $~X\sim P$   ~~\tc{\\}(randomness deficiency)
\item[$(v)$] $\sum_{x\in\SetB^*}2^{-K(x|y)}<1$    \ntc{~~~~~~~~~~~~~~~~~~~~~~~~~~~~~~~~~~~}~~~ (Kraft inequality)
\end{itemize}
\end{lemma}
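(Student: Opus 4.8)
The plan is to derive all five items from the two standard engines of algorithmic information theory: \emph{compiler arguments} (explicit self-delimiting program constructions) for the upper bounds, and \emph{Kraft's inequality} for the prefix machine $U$ for the counting and probabilistic bounds. I would prove them in the order $(v)\!\to\!(iv)\!\to\!(i)\!\to\!(ii)\!\to\!(iii)$, since $(iv)$ rests on $(v)$ and the remaining items are independent of one another.

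For $(v)$: since $U$ is a prefix UTM, for each fixed $y$ the halting programs $\{p:U(p,y)\!\downarrow\}$ form a prefix-free set, so $\sum_p 2^{-\ell(p)}\le 1$. A shortest witness of $K(x|y)$ outputs the unique string $x$, hence distinct $x$ have distinct witnesses; restricting the Kraft sum to these witnesses gives $\sum_x 2^{-K(x|y)}\le\sum_p 2^{-\ell(p)}\le 1$, with strictness because not every prefix-free program halts. The unconditional $\sum_x 2^{-K(x)}\le 1$ is identical. For $(iv)$, I would observe that the complementary event is exactly $\{P(X)<\delta\,2^{-K(X)}\}$, whose probability is $\sum_{x:\,P(x)<\delta 2^{-K(x)}}P(x)\le\delta\sum_x 2^{-K(x)}\le\delta$ by $(v)$; hence the stated bound holds with probability at least $1-\delta$.

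The remaining three are compiler arguments. For $(i)$ I would describe $x$ self-delimitingly: prepend an Elias-type prefix-free encoding of $\ell(x)$, costing $\log\ell(x)+2\log\log\ell(x)+O(1)\leqa 1.1\log\ell(x)$, then append the literal bits of $x$; a fixed decoder reads the length and then that many bits, yielding $K(x)\leqa\ell(x)+1.1\log\ell(x)$. Since $\log t/t\to 0$ gives $1.1\log\ell(x)\leqa 0.1\ell(x)$, the chain continues to $1.1\ell(x)\le 2\ell(x)$, and $K(n)\leqa 1.1\log n$ is the same estimate applied to the binary numeral of $n$. The four links of $(ii)$ are each one line: $K(x|y)\leqa K(x)$ because a shortest unconditional program for $x$ ignores $y$; $K(x)\leqa K(x,y)$ by composing a shortest program for the pair with the first projection; $K(x,y)\leqa K(x|y)+K(y)$ by concatenating a shortest program for $y$ with a shortest program for $x$ given $y$, which the prefix decoder can split, decode to $y$, then to $x$; and the final $K(x)+K(y)$ is just the first link with $K(y)$ added. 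Finally $(iii)$ is the coding theorem: from a program for the semimeasure $P$ (costing $K(P)$ bits) a fixed decoder enumerates $P$ and builds the Shannon--Fano prefix code giving $x$ length $\lceil-\log P(x)\rceil$ (valid by Kraft since $\sum_x 2^{-\lceil-\log P(x)\rceil}\le\sum_x P(x)\le 1$), so $K(x)\leqa-\log P(x)+K(P)$.

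There is no conceptual obstacle — all five are classical and are precisely the properties cited. The only real care is bookkeeping: ensuring every concatenation in $(i)$, $(ii)$, $(iii)$ is uniquely decodable on a prefix machine, and checking that each construction adds only an $O(1)$ overhead that the relation $\leqa$ absorbs. The mildly delicate point is the rounding of the lower-order $\log\log$ terms in $(i)$ into the single constant $1.1$, which holds for all but finitely many lengths and so is likewise absorbed into $\leqa$.
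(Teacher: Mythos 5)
Your proposal is correct and follows exactly the classical arguments (compiler constructions for the upper bounds, Kraft's inequality and a Markov-type union bound for $(v)$ and $(iv)$, and the Shannon--Fano/Kraft--Chaitin coding argument for $(iii)$); the paper itself gives no proof of this lemma and instead cites standard references, whose proofs are precisely the ones you sketch. Nothing is missing beyond the routine bookkeeping you already acknowledge.
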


All bounds remain valid if conditioned on some further string $z$.
The bounds and terminology will be discussed and motivated in more detail where used.
A string $x$ is called algorithmically random ($K$-random) iff $K(x)\geqa \ell(x)$,
i.e.\ if $x$ is incompressible.
Strings $x$ and $y$ are called algorithmically independent ($K$-independent) 
iff $K(x,y)=K(x)+K(y)\pm O(\log\ell(x)+\log\ell(y)))$, 
i.e.\ if $x$ and $y$ do not share information.

Back to metrics, if we allow additive slack in \cref{def:kernel}, then $K(x|y)$ satisfies (P,Z,N,TI). 
(S) can easily be achieved by taking the maximum or average of $K(x|y)$ and $K(y|x)$
(see proof of \cref{thm:dKmetric}).
This leads to the well-known algorithmic information distance \cite{Bennett:98}.
One can also get rid of the additive slack and arrive at a proper metric,
by adding adding a constant of $O(1)$ if $x\neq y$ and setting $K(x|x)=0$.
Since $K(x|x)=O(1)$, this is again only an $O(1)$ correction (\cref{thm:dKmetric}).

The {\bf main question} we are asking in this paper is what additional properties does this metric satisfy.
On the one hand, being uniquely defined 
(apart from the choice of UTM, which for most results is irrelevant), 
it has of course very specific properties. 
The question is whether it possesses some natural properties that also other distances have.
On the other hand, its universality gives hope that it allows to embed many other distances,
which would mean it has few further properties, at least globally.

Below we present some well-known distances and metrics,
formally defined only when needed.

\begin{definition}[Distances \& (Euclidean) Metrics]\hfil\par\ntc{\vspace{-2ex}}\label{def:distance}
\begin{itemize}\parskip=0ex\parsep=0ex\itemsep=0ex
\item Discrete metric $d_{01}(x,y):=[\![x\neq y]\!]$ is Euclidean.
\item Euclidean metric $d_E:=||x-y||_2$ for $\cX=\SetR^m$ or $\cX=H\!$ilbert space.
\item Hamming metric $d_H:=||x-y||_1=||x-y||_2^2$ for $\cX=\SetB^m$, hence $\sqrt{d_H}$ is (E).$\nq$
\item Graph metric $d_G$: Length of shortest path between two nodes.
\item Information pseudo-metric $d_I^+:=\fr12[H(X|Y)+H(Y|X)]$ \\ 
      and $d_I^\vee:=\max\{H(X|Y),H(Y|X)\}$, where $H$ is the conditional entropy.
\item Jensen--Shannon divergence $d_{JS}$: Note that $\sqrt{d_{JS}}$ is an Euclidean metric \cite{Fuglede:04}.
\item Complexity metric $d_K(x,y):=\max\{K(x|y),K(y|x)\}+c$ for $x\neq y\in\cX=\SetB^*$ \ntc{\\}
      (any $c\geq c_U$, where $c_U>0$ is some universal constant), and $d_K:=0$ if $x=y$.
\end{itemize}
\end{definition}

Our results remain valid even for $c=0$,
but it is convenient not having to worry about the meaning of a sort-of metric with $O(1)$ slack,
not having to qualify it as a slack-metric or so,
and it mildly conveniences some proofs.
All results in this paper also remain valid if the $\max\{a,b\}$ in $d_K$ is replaced by
the average $\fr12[a+b]$. We use $d_K^\vee$ and $d_K^+$ if we need to distinguish them.
For the same convenience we defined $d_I^+$ with $\fr12$, which is non-standard.

\begin{theorem}[Complexity metric {\cite[Thm.4.2]{Bennett:98}}]\label{thm:dKmetric}
There exists a universal constant $c_U>0$ of $O(1)$ such that $d_K$ is a proper metric for all $c>c_U$.
\end{theorem}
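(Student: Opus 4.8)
The plan is to verify the five metric axioms (P,Z,N,S,TI) of \cref{def:kernel} for the $d_K$ of \cref{def:distance}, with essentially all the work concentrated in the triangle inequality (TI). The three ``easy'' axioms come for free from the definition. Symmetry (S) is immediate, since $\max\{K(x|y),K(y|x)\}$ is visibly symmetric in $x,y$ (and both sides equal $0$ when $x=y$). Zero-on-diagonal (Z) holds by the explicit convention $d_K(x,x)=0$. Non-negativity (P) and non-degeneracy (N) are exactly where the additive constant earns its keep: since $K(\cdot\,|\,\cdot)\geq 0$ always, for $x\neq y$ we have $d_K(x,y)\geq c>0$, so distinct points sit at strictly positive distance while $d_K(x,x)=0$.

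The heart of the proof is (TI), and its key ingredient is a \emph{directed} triangle inequality for conditional complexity, $K(x|z)\leqa K(x|y)+K(y|z)$. I would obtain this from \cref{lem:Kprop}$(ii)$ together with the remark that all its bounds remain valid under an extra conditioning. Conditioning the chain-rule bound $K(x,y)\leqa K(x|y)+K(y)$ on $z$ gives $K(x,y|z)\leqa K(x|y,z)+K(y|z)$. Bounding $K(x|z)\leqa K(x,y|z)$ (this is $K(x)\leqa K(x,y)$ from $(ii)$, conditioned on $z$; projecting the pair onto its first component costs only $O(1)$) and $K(x|y,z)\leqa K(x|y)$ (ignoring the side information $z$ costs only $O(1)$), then combining, yields the claim with some universal constant $c_1=O(1)$.

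Next I would lift this to the symmetric $\max$-form. Applying the directed inequality as $K(x|z)\leqa K(x|y)+K(y|z)$ and, in the reverse direction, $K(z|x)\leqa K(z|y)+K(y|x)$, and bounding each conditional complexity on the right by the corresponding maximum, gives
\[
  \max\{K(x|z),K(z|x)\} ~\leq~ \max\{K(x|y),K(y|x)\}+\max\{K(y|z),K(z|y)\}+c_1 .
\]
Setting $c_U:=c_1$, for any $c>c_U$ and any three \emph{distinct} points $x,y,z$ the left-hand distance is $\max\{K(x|z),K(z|x)\}+c$, whereas the right-hand sum $d_K(x,y)+d_K(y,z)$ carries two copies of $c$; the spare $+c$ on the right absorbs the constant $c_1$, establishing $d_K(x,z)\leq d_K(x,y)+d_K(y,z)$. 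The cases where two of $x,y,z$ coincide are trivial: if $x=z$ the left side is $0$; if $x=y$ or $y=z$ one summand on the right vanishes and the inequality holds with equality.

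I expect the only genuine obstacle to be the directed inequality $K(x|z)\leqa K(x|y)+K(y|z)$, i.e.\ the sub-additivity of conditional prefix complexity along a chain $z\to y\to x$; this is precisely where the universal constant $c_U$ originates, intuitively from concatenating the self-delimiting programs realizing $z\to y$ and $y\to x$. Everything after it is bookkeeping. Finally, since the directed inequality is applied termwise, the identical argument goes through verbatim with $\max$ replaced by the average $\fr12[\cdot+\cdot]$, so $d_K^+$ is a proper metric under the same condition on $c$.
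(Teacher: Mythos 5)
Your proof is correct and follows essentially the same route as the paper: establish the directed inequality $K(x|z)\leqa K(x|y)+K(y|z)$ (and its reverse), take the maximum to get the symmetric form, and let the spare additive $+c$ on the right-hand side absorb the universal constant, with the coincidence cases handled trivially. The only cosmetic difference is that you derive the directed inequality from the conditioned chain rule in \cref{lem:Kprop}$(ii)$, whereas the paper justifies it directly by concatenating the two programs; both are standard and equivalent.
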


\begin{proof}
First, $K(x|z)\leq K(x|y)+K(y|z)+O(1)$ follows immediately from the fact 
that a code for $y$ given $z$ together with a code 
for $x$ using $y$ also constitutes a code for $x$ given $z$.
In the same way, $K(z|x)\leq K(z|y)+K(y|x)+O(1)$.
Let $c_U$ be this constant.
Now taking the maximum on both sides we get for $x\neq y\neq z\neq x$,
\begin{align*}
   \tc{&} d_K^\vee(x,z) ~\ntc{&}\equiv~ \max\{K(x|z),K(z|x)\}+c \\ \nonumber
   ~&\leq~ \max\{K(x|y)+K(y|z),~K(z|y)+K(y|x)\}+c_U+c \\ \nonumber
  ~&\leq~ \max\{K(x|y),K(y|x)\}+\max\{K(y|z),K(z|y)\}+c_U+c \\ \nonumber
  ~&\equiv~ d_K^\vee(x,y)-c+d_K^\vee(y,z)-c+c_U+c ~\leq~ d_K^\vee(x,y)+d_K^\vee(y,z) ~~~\tc{\\&}\text{for}~~~ c\geq c_U
\end{align*}
and similarly for $d_K^+$.
If any two of $x,y,z$ are the same, the triangle inequality is trivially satisfied.
That is, $d_K$ satisfies the triangle inequality (TI).
Clearly $d_K$ is symmetric, $d_K(x,x)=0$, and $d_K(x,y)\geq c>0$ for $x\neq y$,
hence it satisfies (P,Z,N,S).
Together this shows that $d_K$ is a proper metric.
\end{proof}

\section{Scale-Isometric Embeddings}\label{sec:IsoEmbedding}

Property-preserving (isomorphic) embeddings are important tools in mathematics, 
e.g.\ to reduce problems in one domain to another,
especially isometric embeddings. 
Consider, for instance,
the Frechet embedding \cite[p.162]{Frechet:10}, a precursor of Kuratowski's embedding:

\begin{proposition}[Every separable metric space iso-embeds into $\ell_\infty$]\label{prop:linfty_embedding} 
\end{proposition}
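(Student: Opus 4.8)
The plan is to use the classical Fr\'echet--Kuratowski construction: encode each point of the space by the sequence of its distances to a countable dense set, re-centred so that the resulting sequence is bounded. Concretely, let $(\cV,d)$ be a separable metric space, fix a countable dense subset $\{a_1,a_2,a_3,\ldots\}\subseteq\cV$, and fix an arbitrary reference point $a_0\in\cV$ (one may take $a_0=a_1$). I would define $\phi:\cV\to\ell_\infty$ by
\[
   \phi(x) ~:=~ \big(\,d(x,a_n)-d(a_0,a_n)\,\big)_{n\geq 1},
\]
where $\ell_\infty$ carries the sup-metric $||u-v||_\infty=\sup_n|u_n-v_n|$. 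The subtraction of $d(a_0,a_n)$ is the only non-obvious design choice, and it is there solely to guarantee boundedness.

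First I would verify that $\phi$ lands in $\ell_\infty$. By the reverse triangle inequality, $|d(x,a_n)-d(a_0,a_n)|\leq d(x,a_0)$ for every $n$, and the right-hand side is a finite constant independent of $n$; hence $\phi(x)$ is a bounded real sequence. Next, for the isometry I would compute
\[
   ||\phi(x)-\phi(y)||_\infty ~=~ \sup_{n\geq 1}\,\big|\,d(x,a_n)-d(y,a_n)\,\big|,
\]
the $a_0$-terms cancelling. The reverse triangle inequality again gives $|d(x,a_n)-d(y,a_n)|\leq d(x,y)$ for every $n$, so this supremum is at most $d(x,y)$.

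The one place where separability is genuinely used---and the step I expect to carry the real content---is the matching lower bound. Here I would exploit density: given $\eps>0$, pick $a_n$ with $d(x,a_n)<\eps$. Then
\[
   \big|\,d(x,a_n)-d(y,a_n)\,\big| ~\geq~ d(y,a_n)-d(x,a_n) ~\geq~ \big(d(x,y)-d(x,a_n)\big)-d(x,a_n) ~>~ d(x,y)-2\eps,
\]
so the supremum is at least $d(x,y)-2\eps$ for every $\eps>0$, whence it equals $d(x,y)$. Combining the two bounds gives $||\phi(x)-\phi(y)||_\infty=d(x,y)$, i.e.\ $\phi$ is an isometric embedding. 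The only subtlety to keep in mind is that density makes $d(x,a_n)$ arbitrarily small but not exactly zero, so $d(x,y)$ is attained as a supremum in the limit rather than at one fixed coordinate; in particular $x$ need not itself lie in the dense set.
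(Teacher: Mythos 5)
Your proposal is correct and is essentially the paper's own argument: the same Fr\'echet embedding $\phi(x)_n:=d(x,a_n)-d(a_0,a_n)$ into $\ell_\infty$, with boundedness and the upper bound from the (reverse) triangle inequality and the lower bound from density of the countable set (the paper approximates $y$ rather than $x$, but by symmetry this is the same step). No gaps.
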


$\ell_p:=\{v\in\SetR^\infty:||v||_p<\infty\}$ for some $p\leq\infty$ 
is the vector space of infinite sequences with finite $p$-norm, which induces metric $||v-v'||_p$.
More generally, with $(\SetR^m,||\cdot||)$ for $m\leq\infty$
we mean the metric space $\cV:=\{v\in\SetR^m:||v||<\infty\}$ with metric $d_\cV(v,v'):=||v-v'||$.

\begin{proof} Let $(\cX,d)$ be a separable metric space. Let $\cX':=\{x_0,x_1,x_2,...\}$ be a countable dense subset of $\cX$.
Consider embedding $\phi:\cX\to\SetR^\infty$ with $\phi(x)_i:=d(x,x_i)-d(x_0,x_i)$. 
Let $x,y\in\cX$ and $y_i\in\cX'$ a (sub)sequence converging to $y\in\cX$ for $i\to\infty$, 
which exists, since $\cX'$ is dense in $\cX$.
Then
\begin{align*}
    ||\phi(x)||_\infty ~&=~ \sup\nolimits_i|d(x,x_i)-d(x_0,x_i)| ~\tc{\\&}\leq~ d(x,x_0) ~<~ \infty  ~~~\text{by (TI)} \\
    ||\phi(x)-\phi(y)||_\infty ~&=~ \sup\nolimits_i |d(x,x_i)-d(y,x_i)| ~~~\tc{\\&}\leq~ d(x,y) ~~~~~~~~~~~~\text{by (TI)} \\
    ||\phi(x)-\phi(y)||_\infty ~&\geq~ \overline\lim_i [d(x,y_i)-d(y,y_i)] 
     ~\tc{\\&}\geq~ \overline\lim_i d(x,y_i)- \overline\lim_i d(y,y_i) \\
     ~&=~ d(x,y) - d(y,y) ~=~ d(x,y) ~~~\tc{\\&}\text{since all metrics are continuous.}~~ \\[-7ex]
\end{align*}
\end{proof}

Since all spaces of practical relevance are separable (otherwise they cannot be numerically approximated),
subspaces of $\ell_\infty$ would be all one needs from a purely metric perspective.
Unfortunately $\ell_\infty$ is not itself separable,
but the Urysohn universal space fixes this problem \cite{Husek:08}.
Every countable space is obviously separable.
Hence also $\BsdKs$ can be isometrically embedded into $\ell_\infty$.
So any property of the $\infty$-norm also holds for $d_K$,
but it seems that $\ell_\infty$ does not have a lot of interesting structure.
Anyway $\ell_\infty$ is somewhat unnatural for $d_K$:
%
$K(xy)\approx K(x)+K(y)$ if $x$ and $y$ are $K$-independent, 
hence $d_K(xy,x'y')\approx d_K(x,x')+ d_K(y,y')$ if $xx'$ is $K$-independent of $yy'$ (Exercise).
That is, $d_K$ is more similar to the additive Hamming distance $d_H$,
where equality holds exactly and always. 
Some of our proofs exploit this similarity.
The Hamming distance is the 1-norm on $\SetB^m$ for strings of length $m$.
This makes $\BsdKs$ intuitively closer in nature to $\ell_1$ than to $\ell_\infty$,
the latter ignoring all but the maximal coordinate.
Other $p$-norms for $1<p<\infty$ are also sensitive to all coordinates, but sub-additively so.
$d_K$ can also behave sub-additively if $xx'$ and $yy'$ share information (unlike $d_H$).
As it turns out, $d_K$ can neither be isometrically embedded into $\ell_1$, 
nor is it Euclidean, nor anything in-between (\Cref{thm:dHnotinl1}).

We will use some of the following elementary facts 
about embeddings into Hamming cubes $(\SetB^m,d_H)\equiv(\SetB^m,||\cdot||_1)$:

\begin{lemma}[Isometric embeddings into $(\SetB^m,d_H)$] For $m\in\SetN$\hfil\par\ntc{\vspace{-2ex}}\label{lem:EinHam} 
\begin{itemize}\parskip=0ex\parsep=0ex\itemsep=0ex
    \item[$(i)$] A tree graph $T$ with $m$ nodes can isometrically be embedded into $(\SetB^{m-1},d_H)$
    \item[$(ii)$] $(\{1,...,m\},|\cdot|_1)$ can be iso-embedded into $(\SetB^{m-1},d_H)$
    \item[$(iii)$] A polygon graph (ring) with $2m$ nodes can be iso-embedded into $(\SetB^m,d_H)$
    \item[$(iv)$] A polygon graph (ring) with $m$ nodes can be iso-embedded into $(\SetB^m,\fr12 d_H)$
    \item[$(v)$] All connected graphs with $4$ (or less) nodes iso-embed into $\SetB^4$ or $(\fr12\SetB)^5$
    \item[$(vi)$] If $(\cV,d_\cV)$ iso-embeds into $(\SetB^m,d_H)$ and $(\cW,d_\cW)$ into $(\SetB^{m'},d_H)$, \ntc{\\}
    then $(\cV\times\cW,d_\cV+d_\cW)$ embeds into $(\SetB^{m+m'},d_H)$.
    \item[$(vii)$] Any bounded subset of lattice $(\SetZ^m,||\cdot||_1)$ iso-embeds into some Hamming cube.
\end{itemize}    
\end{lemma}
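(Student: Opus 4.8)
The plan is to treat all seven parts through one unifying observation: an isometric embedding $\phi$ of a finite metric space $(\cX,d)$ into $(\SetB^m,d_H)$ is \emph{equivalent} to writing $d$ as a sum of $m$ cut semimetrics. Indeed, setting $S_k:=\{x:\phi(x)_k=1\}$, we have $d_H(\phi(x),\phi(y))=\#\{k:\phi(x)_k\neq\phi(y)_k\}$, which counts exactly those $k$ for which one of $x,y$ lies in $S_k$ and the other does not. So constructing an embedding reduces to choosing sets $S_1,\dots,S_m$ whose associated cuts add up to the target distance; for the $\fr12 d_H$ variant in (iv) the same holds with the sum equal to $2d$. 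With this dictionary each item becomes a matter of exhibiting the right family of sets.

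For (i) I would root the tree $T$ and give each of its $m-1$ edges a coordinate, setting $\phi(v)_e:=[\![\,e\text{ lies on the path from the root to }v\,]\!]$. Then $d_H(\phi(u),\phi(v))$ counts the edges in the symmetric difference of the two root-paths; since that symmetric difference is precisely the unique $u$--$v$ path in $T$, it equals $d_G(u,v)$, giving the embedding into $\SetB^{m-1}$. Item (ii) is the special case in which $T$ is the path $1\!-\!2\!-\!\cdots\!-\!m$, whose graph metric is $|\cdot|_1$; explicitly $\phi(i)=1^{i-1}0^{m-i}$.

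For the even cycle with $2m$ nodes in (iii) I would use the $m$ \emph{diameter cuts}: delete the $k$-th edge together with its antipodal edge (the one $m$ steps away), splitting the cycle into two arcs of length $m$, and let $S_k$ be one of them. Walking along the shorter arc from $i$ to $j$ crosses exactly one boundary of each of $\min(|i\!-\!j|,2m\!-\!|i\!-\!j|)$ distinct diameter cuts and none of the others, so the number of separating cuts equals $d_G(i,j)$, yielding $\SetB^m$. For (iv) I would instead let $S_k$ be the arc of $\lfloor m/2\rfloor$ consecutive nodes starting at $k$; two such arcs offset by $\delta$ overlap in $\lfloor m/2\rfloor-\min(\delta,m-\delta)$ nodes, so their symmetric difference — the Hamming distance — equals $2\min(\delta,m-\delta)=2\,d_G(i,j)$, which is what forces the $\fr12$ scale. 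The factor $2$ is genuinely unavoidable for odd $m$: an odd cycle is not bipartite and hence cannot embed into any integer Hamming cube at all, so the half-integer grid is really needed.

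Part (vi) is immediate: concatenating the two given embeddings via $(v,w)\mapsto\phi(v)\psi(w)$ works because $d_H$ is additive across disjoint coordinate blocks. Then (vii) follows by translating a bounded $S\subseteq\SetZ^m$ into a box $\prod_i\{0,\dots,L_i\}$, embedding each factor by (ii) and taking the product via (vi); since $||\cdot||_1$ is the sum of the coordinate line-metrics, the box — and hence its subset $S$ — iso-embeds into a Hamming cube. Finally (v) is a finite check over the (at most ten) connected graphs on $\le 4$ nodes: the bipartite ones, namely the trees and $C_4$, land in $\SetB^4$ by (i) and (iii), while the triangle-containing ones — $C_3$, the paw, the diamond ($K_4$ minus an edge), and $K_4$ — are not bipartite and are handled by writing down explicit points in $(\fr12\SetB)^5$. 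I expect the only real obstacles to be (a) the lower-bound (no-contraction) verification for the cycles, which is exactly what the cut-counting settles, and (b) the bookkeeping in the explicit $4$-node constructions, where one must confirm the target values $d_H\in\{2,4\}$ for every pair.
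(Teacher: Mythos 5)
Your proposal is correct, and most of the constructions coincide with the paper's, but you organize them differently. The paper proves $(i)$ by induction on the number of nodes (append a $0$ to all existing codewords, code the new leaf as its neighbour's codeword followed by $1$), gives the even cycle in $(iii)$ by an explicit string formula ($i\mapsto 0^{m-i+1}1^{i-1}$ resp.\ $1^{2m-i+1}0^{i-m-1}$), and obtains $(iv)$ by taking every second node of the $2m$-cycle embedding and halving the scale; you instead use one coordinate per tree edge indexed by root-paths (the unrolled form of the same induction), describe the cycle embedding via the $m$ diameter cuts (the same embedding in cut language), and give $(iv)$ directly via arcs of length $\lfloor m/2\rfloor$ rather than by subsampling $(iii)$ --- I checked the overlap count $|S_i\cap S_j|=\lfloor m/2\rfloor-\min(\delta,m-\delta)$ and it is right. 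Your unifying cut-semimetric dictionary is not made explicit in the paper but is a clean organizing principle that makes the no-contraction direction for the cycles transparent, and your bipartiteness/parity remark correctly explains why the factor $\fr12$ in $(iv)$ and $(v)$ is unavoidable for odd cycles, which the paper does not justify. Items $(ii)$, $(vi)$, $(vii)$ are identical to the paper's, and for $(v)$ both you and the paper leave the explicit $4$-node case check as a finite verification (the paper marks it as an exercise), so neither is more complete there.
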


\begin{proof}{\bf(sketch)}
    {\boldmath$(i)$} Proof by induction. Trivial for $m=1$. 
    Assume we have an embedding $\phi:v\in T\mapsto x\in\SetB^{m-1}$ for a tree $T$ with $m$ nodes $v_1,...,v_m$.
    Modify $\phi$ and append $0$ to all embedding strings.
    The $0$ does not change the Hamming distance among nodes in $T$.
    Since $T$ is a tree, $v_{m+1}$ will have a unique neighbor $v_\imath\in T$. 
    Add node $v_{m+1}$ to tree $T$, 
    and extend $\phi$ to map $v_{m+1}$ to $\phi(v_\imath)1$.
    Now $d_H(\phi(v_{m+1}),\phi(v))=d_H(\phi(v_\imath),\phi(v))+1$ 
    $\forall v\in T\setminus\{v_{m+1}\}$ as it should be.
    {\boldmath$(ii)$} Map $u\in\{1,...,m\}$ to $0^{m-u}1^{u-1}\in\SetB^{m-1}$. 
    {\boldmath$(iii)$} Map ring node $i$ to $0^{m-i+1}1^{i-1}$ for $1\leq i\leq m$ 
    and to $1^{2m-i+1}0^{i-m-1}$ for $m<i\leq 2m$.
    {\boldmath$(iv)$} Take every second point in embedding $(iii)$ and scale the cube by $\fr12$.
    {\boldmath$(v)$} By explicit construction of embeddings for the 6 possible 4-node connected graphs (Exercise). 
    3 of the graphs (those with triangles) require sub-dividing each edge with an extra node.
    {\boldmath$(vi)$} This is a straightforward consequence of the additivity of the Hamming distance.
    {\boldmath$(vii)$} Immediately follows from $(ii)$ and $(vi)$ and translation invariance of $||\cdot||_1$.
\end{proof}

\paradot{Scale-embeddings into $\BsdKs$}
In the following we consider the other direction, 
namely isometric embeddings \emph{into} $\BsdKs$.
Of course, since $d_K$ is a proper metric,
we can only embed proper metric spaces,
so if $d$ lacks one or more of (P,Z,N,S,TI) it cannot be embedded.
Even if $d$ is a proper metric,
finding exact embeddings will be difficult and also make little sense, 
due to the additive constants which plague Kolmogorov complexity.
We could consider quasi-isometric embeddings,
which only require $s'\cdot d()-c< d_K()< s\cdot d()+c$ for some $0<s'\leq 1\leq s<\infty$ and $0\leq c<\infty$.
We opt for a stronger notion with $s=s'\to\infty$ and fixed $c$.
So we scale metric $d$ with a large constant $s$,
then approximately embed the result
such that $d_K/s$ converges to $d$ for $s\to\infty$.

\begin{definition}[Scale-embedding accuracy]\label{def:equs}
$f(s,v,v',...)\equs g(s,v,v',...)$ :iff 
$\exists c,b: |f(s,v,v',...)-g(s,v,v',...)|\leq c\log s+b$,
where $c,b$ are independent of $s\in\SetN$ and $v,v'\in\cV$ 
but may depend on $...$ such as $\cV,d_\cV,\phi,K,...$.
\end{definition}

Most results in Kolmogorov complexity hold within additive logarithmic terms.
This motivates the $\log s$ slack. 
Ideally we want some accuracy guarantee $\eps$ for $d_K(v,v')$ to hold for \emph{all} $v,v'\in\cV$.
This motivates the independence of the constants from $v,v'\in\cV$.

\begin{definition}[Scale-embedding into $\BsdKs$]\label{def:scale_embed}
Let $(\cV,d)$ be a metric space. A scale-embedding is a mapping 
that preserves distances apart from a constant scale factor $s>0$:
We call a collection of functions $\phi_s:\cV\to\SetB^*$ a (finite) string-embedding 
if $d_K(\phi_s(v),\phi_s(v')) \equs s\cdot d(v,v')$ for all $v,v'\in\cV$ and $s\in\SetN$.
We call $\phi:\cV\to\SetB^\infty$ an (infinite) sequence-embedding
if $d_K(\phi(v)_{1:f(s,v)},\phi(v')_{1:f(s,v')}) \equs s\cdot d(v,v')$ 
for some function $f:\SetN\times\cV\to\SetN$ (w.l.g.) monotone increasing in $s$.
\end{definition}

If we define $\phi_s(v):=\phi(v)_{1:f(s,v)}$,
we see that the $\phi_s$ indeed form a scale-embedding, but with
the additional property that string $\phi_{s'}(v)$ extends string $\phi_s(v)$ for $s'>s$.
We also abbreviate $x_v^s:=\phi_s(v)$, sometimes dropping the $s$ if not important.
For sequence-embeddings, using so-called (conditional) monotone complexity $\Km$ 
\cite[Def.2.7.4]{Hutter:24uaibook2} \cite[Def.6]{Hutter:07postbndx} \cite[Sec.6.3]{Shen:17}
instead of $K$ is more natural,
but since $\Km$ equals $K$ within logarithmic additive terms,
we can also use $K$ provided $f(s,\cdot)$ grows only polynomially in $s$.
In most cases, $f$ will grow linearly with $s$, since $d_K$ needs to scale linearly with $s$,
which can be achieved by linear $f$ provided the constructed strings $\phi_s(v)$ are sufficiently random.
For simplicity we make the following mild extra global assumption.
Some results rely on this:

\begin{assumption}[Monotone complexity]\label{ass:Km}
For sequence-embeddings, we either assume $f(s,\cdot)$ grows at most polynomially in $s$,
or $K$ is replaced by monotone complexity $\Km$.
\end{assumption}

\begin{lemma}[Basic scale-embedding properties]\hfil\par\ntc{\vspace{-2ex}}\label{lem:SEmbedBasic}
\begin{itemize}\parskip=0ex\parsep=0ex\itemsep=0ex
  \item[$(i)$] $\max\{f(s,v),f(s,v')\}\geqs s\cdot d(v,v')$.
  \item[$(ii)$] For every $v\neq v'$, $\phi_s(v)\neq\phi_s(v')$ for almost all $s$.
  \item[$(iii)$] If $\cV$ is finite, then $\phi_s$ is injective for almost all $s$.
  \item[$(iv)$] The collection $\phi_*:\cX\to(\SetB^*)^\infty$ with $\phi_*=(\phi_1,\phi_2,...)$ is injective.
  \item[$(v)$] Sequence-embeddings $\phi:\cV\to\SetB^\infty$ are injective.
\end{itemize}
\end{lemma}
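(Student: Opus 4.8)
The plan is to prove the five parts in the given order, using $(ii)$ as the engine for the three injectivity statements $(iii)$--$(v)$.

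For $(i)$ I would start from the defining identity $d_K(x,y)=\max\{K(x|y),K(y|x)\}+c$ together with \cref{lem:Kprop}$(i,ii)$, which give $K(x|y)\leqa K(x)\leqa \ell(x)+1.1\log\ell(x)$. Applied to $x=\phi_s(v)$ and $y=\phi_s(v')$ (of lengths $f(s,v)$ and $f(s,v')$ for a sequence-embedding, and directly $\ell(\phi_s(v))$ for a string-embedding), and using that $t+1.1\log t$ is increasing, this yields $d_K(\phi_s(v),\phi_s(v'))\leqa \max\{f(s,v),f(s,v')\}+O(\log\max\{f(s,v),f(s,v')\})$. Under \cref{ass:Km} the lengths grow at most polynomially in $s$, so the logarithmic term is $O(\log s)$ and is absorbed by $\equs$ (and if the lengths were super-polynomial, the claimed bound would hold trivially since then $\max\{f(s,v),f(s,v')\}\gg s\cdot d(v,v')$). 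Combining with the embedding property $d_K(\phi_s(v),\phi_s(v'))\equs s\cdot d(v,v')$ and rearranging gives $\max\{f(s,v),f(s,v')\}\geqs s\cdot d(v,v')$, which is $(i)$.

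For $(ii)$ I argue by contradiction on a fixed pair $v\neq v'$. If $\phi_s(v)=\phi_s(v')$ held for infinitely many $s$, then for each such $s$ we would have $d_K(\phi_s(v),\phi_s(v'))=0$ by (Z), while the embedding property forces $s\cdot d(v,v')\leq c\log s+b$. Since $d_K$ is a proper metric (\cref{thm:dKmetric}) and $v\neq v'$, we have $d(v,v')>0$, so the linear-in-$s$ left side exceeds the logarithmic right side for all but finitely many $s$; hence the set of offending $s$ is finite, i.e.\ $\phi_s(v)\neq\phi_s(v')$ for almost all $s$. Parts $(iii)$ and $(iv)$ then follow by elementary bookkeeping. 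For $(iii)$, finiteness of $\cV$ gives only finitely many pairs $v\neq v'$, each contributing by $(ii)$ a finite exceptional set; the finite union is finite, and outside it every $\phi_s$ is injective. For $(iv)$, $\phi_*(v)=\phi_*(v')$ means $\phi_s(v)=\phi_s(v')$ for \emph{every} $s$, which $(ii)$ rules out for $v\neq v'$ (there is at least one, indeed almost every, separating $s$), so $\phi_*$ separates points.

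Finally, $(v)$ is the part I expect to be the real obstacle. The naive attempt supposes $\phi(v)=\phi(v')$ as infinite sequences: then $\phi_s(v)=\phi(v)_{1:f(s,v)}$ and $\phi_s(v')=\phi(v)_{1:f(s,v')}$ are both prefixes of one and the same sequence, so one is a prefix of the other, and the conditional complexity in the shortening direction is only $O(\log s)$ (here the monotone-complexity convention of \cref{ass:Km} makes the bound even cleaner). This gives $d_K(\phi_s(v),\phi_s(v'))\leqs |f(s,v)-f(s,v')|+O(\log s)$, but it does \emph{not} by itself force $d_K$ to be small, since the gap $|f(s,v)-f(s,v')|$ can itself grow linearly in $s$. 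The crux is therefore to exclude that the scaled distance $s\cdot d(v,v')$ is being carried entirely by the difference of truncation lengths rather than by any genuine difference between the sequences. I would try to close this either by exploiting the prefix/monotonicity structure to show that such a length-only encoding collapses to a scale-embedding into $(\SetN,|\cdot|_1)$, or, if no fully general argument is available, by restricting to embeddings whose truncation function $f(s,\cdot)$ is essentially point-independent, in which case $\phi(v)=\phi(v')$ immediately yields $\phi_s(v)=\phi_s(v')$ and $(v)$ reduces to $(ii)$. Identifying the minimal hypothesis that rules out the degenerate length-encoding is what I expect to be the genuine difficulty in this last part.
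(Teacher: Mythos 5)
Your treatment of $(i)$--$(iv)$ is essentially the paper's own proof: $(i)$ via $f(s,v)=\ell(x_v^s)\geqs K(x_v^s)\geqa K(x_v^s|x_{v'}^s)$ (the paper also leans on \cref{ass:Km} to absorb the $\log\ell$ term into $O(\log s)$, exactly as you do), $(ii)$ by noting that $s\cdot d(v,v')$ eventually exceeds the $O(\log s)$ slack so $d_K(x_v^s,x_{v'}^s)>0$, and $(iii)$, $(iv)$ by taking a finite union of exceptional sets and by observing that one separating $s$ suffices. One small slip: the positivity $d(v,v')>0$ for $v\neq v'$ comes from $(\cV,d)$ being a metric space (the hypothesis of \cref{def:scale_embed}), not from $d_K$ being a proper metric; the conclusion is unaffected.

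On $(v)$ your hesitation is not a defect of your proposal but a genuine observation about the statement. The paper disposes of $(v)$ in one line, inferring $\phi(v)\neq\phi(v')$ from $\phi_s(v)\neq\phi_s(v')$; this tacitly assumes that $\phi(v)=\phi(v')$ forces $\phi_s(v)=\phi_s(v')$, which fails precisely in the degenerate case you isolate, namely $f(s,v)\neq f(s,v')$, where the two truncations are distinct prefixes of one and the same sequence. And that case cannot be excluded from \cref{def:scale_embed} as written: take $\cV=\{v,v'\}$ with $d(v,v')=1$, let $z\in\SetB^\infty$ be $K$-random, set $\phi(v)=\phi(v')=z$, $f(s,v)=s$ and $f(s,v')=2s$; then $K(z_{1:2s}|z_{1:s})\equs s$ while $K(z_{1:s}|z_{1:2s})\equs 0$, so $d_K(\phi_s(v),\phi_s(v'))\equs s\cdot d(v,v')$ and all conditions of a sequence-embedding are met by a non-injective $\phi$. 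So the ``length-only encoding'' you worried about really can carry the entire scaled distance, and $(v)$ needs an additional hypothesis of the kind you propose (e.g.\ $f(s,\cdot)$ independent of the point, or at least $|f(s,v)-f(s,v')|=O(\log s)$), under which $(v)$ does reduce to $(ii)$ as you say. Your analysis here is more careful than the paper's own argument, which is the step containing the gap.
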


Scale-embeddings are probably best viewed as $\phi_*$.
For fixed $s$, $d_\cV$ can only approximately be recovered from $d_K$
to relative accuracy $|d_K(x_v^s,x_{v'}^s)/d_\cV(v,v')-1|=O((\log s)/s)$,
but we can recover $d_\cV$ exactly from $\phi_*$ via $d_\cV(v,v')=\lim_{s\to\infty} d_K(x_v^s,x_{v'}^s)/s$.
In this sense, $\phi_*$ and $\phi$ are \emph{exact} isometric embeddings,
despite the $O(\log s)$ slack.

\begin{proof}
{\boldmath$(i)$} $f(s,v) = \ell(x_v^s) \geqs K(x_v^s) \geqa K(x_v^s|x_{v'}^s)$ \tc{\\}
$~\Rightarrow~  \max\{f(s,v),f(s,v')\} \geqs d_K(x_v^s,x_{v'}^s) \equs s\cdot d(v,v')$ \\
{\boldmath$(ii)$} For $v\neq v'$, $\exists s_{v,v'}\forall s>s_{v,v'}:d_K(x_v^s,x_{v'}^s)>0$, \tc{\\}
hence $\phi_s(v)\equiv x_v^s\neq x_{v'}^s\equiv\phi_s(v')$.\\
{\boldmath$(iii)$} $x_v^s\neq x_{v'}^s\forall v,v'\forall s\geq s_{\max}:=\max_{v\neq v'} s_{vv'}<\infty$. \\
{\boldmath$(iv,v)$} From $(ii)$ and the fact that $\phi_s(v)\neq\phi_s(v')$ for some $s$, 
implies $\phi_*(v)\neq\phi_*(v')$ and $\phi(v)\neq\phi(v')$.
\end{proof}

A grand umbrella open problem is whether all metrics with some mild extra conditions 
can be scale-embedded into $\BsdKs$; some results similar to \cref{prop:linfty_embedding}.
For instance if $\cV$ is finite and/or $d_\cV$ is Euclidean.

\begin{open}[Is $d_K$ universal in some interesting sense?]\label{open:dKuniversal}
If all metric spaces that satisfy additional property $\cC$ can be scale-embedded into $\BsdKs$, 
we call $\BsdKs$, $\cC$-universal. Are there natural/permissive/interesting such $\cC$?
\end{open}

See, \cite[Thm.4]{Bennett:98} and \cref{thm:minorize}
for a much easier to satisfy one-sided notion of universality, 
which only requires $d_K$ to be smaller than the source metric $d_\cV$,
i.e.\ $d_K\leqa d_\cV$.

We insist on scale-\emph{equality}, so need to start more modestly
and try to scale-embed specific metric spaces into $\BsdKs$.
Next is an example which will be useful later.
The proof also contains basic ideas in a clean and simple form,
used later again for more complex cases.

\begin{theorem}[$(\SetB^m,d_H)$ scale-embeds into $\BsdKs$]\label{thm:Ham2K}\hfill\par
\noindent There exist string/sequence-embeddings from $\SetB^m$ with Hamming distance $d_H$
into $\SetB^*$ with complexity distance $d_K$.
\end{theorem}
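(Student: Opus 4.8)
Here is how I would attack the statement. The plan is to build the embedding out of independent random blocks, one pair per coordinate, so that Hamming distance in $\SetB^m$ becomes the number of blocks that must be rewritten, and rewriting one block costs about $s$ bits of complexity. Concretely, I would fix once and for all (depending only on $m$ and the reference UTM) a single Martin--L\"of random sequence and split it, by residues mod $2m$, into $2m$ mutually $K$-independent random sequences, named $B_1^0,B_1^1,\ldots,B_m^0,B_m^1$. For $v=(v_1,\ldots,v_m)\in\SetB^m$ let $\phi(v)\in\SetB^\infty$ be the round-robin interleaving of the $m$ selected sequences $B_1^{v_1},\ldots,B_m^{v_m}$, and set $f(s,v):=ms$ and $x_v^s:=\phi_s(v):=\phi(v)_{1:ms}$, the interleaving of the length-$s$ prefixes $(B_i^{v_i})_{1:s}$. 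Interleaving rather than plain concatenation is what makes $x_v^s$ a prefix of $x_v^{s+1}$, so that $\phi$ is a genuine sequence-embedding; the string-embedding is then its restriction. Since $f(s,\cdot)=ms$ grows linearly, \cref{ass:Km} is met and we may work with $K$.

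For the upper bound I would argue that transforming $x_v^s$ into $x_{v'}^s$ only requires overwriting the blocks at the coordinates $D:=\{i:v_i\neq v'_i\}$, of which there are $d_H(v,v')$. A program carrying the $|D|$ new length-$s$ blocks $(B_i^{v'_i})_{1:s}$ for $i\in D$, together with the index set $D$ ($O(m)$ bits, constant in $s$) and the values of $m,s$ ($O(\log s)$ bits), reconstructs $x_{v'}^s$ from $x_v^s$ by copying the agreeing blocks and substituting the carried ones. Hence $K(x_{v'}^s\mid x_v^s)\leqa s\cdot d_H(v,v')+O(\log s)$; the bound is symmetric in $v,v'$, so $d_K(x_v^s,x_{v'}^s)\leqa s\cdot d_H(v,v')+O(\log s)$.

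The crux is the matching lower bound. Let $N$ be the collection of new blocks $(B_i^{v'_i})_{1:s}$, $i\in D$. On one hand $N$ is computable from $x_{v'}^s$ and $x_v^s$ plus the constant data $D$, so $K(N\mid x_v^s)\leqa K(x_{v'}^s\mid x_v^s)+O(\log s)$. On the other hand $N$ and the $m$ blocks composing $x_v^s$ are $m+d_H(v,v')$ \emph{distinct} blocks of the jointly random pool, so their concatenation has complexity $\geqa (m+d_H(v,v'))\,s-O(\log s)$ (split the underlying random string into these blocks and their complement and compare lengths against $K\geqa \ell-O(\log \ell)$), while $K(x_v^s)\leqa ms+O(\log s)$ simply because $\ell(x_v^s)=ms$ (\cref{lem:Kprop}). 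The subadditivity/symmetry of information in \cref{lem:Kprop} then gives $K(N\mid x_v^s)\geqa K(N,x_v^s)-K(x_v^s)-O(1)\geqa s\cdot d_H(v,v')-O(\log s)$. Chaining the two estimates yields $K(x_{v'}^s\mid x_v^s)\geqa s\cdot d_H(v,v')-O(\log s)$, and by symmetry $d_K(x_v^s,x_{v'}^s)\geqa s\cdot d_H(v,v')-O(\log s)$. Combined with the upper bound this is exactly $d_K(x_v^s,x_{v'}^s)\equs s\cdot d_H(v,v')$.

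The routine direction is the upper bound --- ``rewrite only the differing blocks.'' The delicate part, where I expect the real work, is the lower bound: one must guarantee that the new blocks genuinely carry $\approx s$ bits each \emph{relative to} $x_v^s$, which rests entirely on the mutual $K$-independence/joint randomness of the $2m$ blocks, and one must keep every error term --- block indexing, the randomness of prefixes of a Martin--L\"of random sequence, and the slack in the symmetry-of-information step --- inside the $O(\log s)$ budget permitted by \cref{def:equs}. It is precisely the prefix-randomness claim that \cref{ass:Km} (or monotone complexity) is invoked to license.
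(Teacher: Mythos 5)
Your construction is essentially identical to the paper's: $2m$ mutually $K$-independent random blocks (a pair per coordinate), block selection according to the bits of $v$, concatenation/interleaving, with $f(s,v)=ms$; the distance verification you carry out via symmetry of information is precisely the step the paper leaves as an exercise, and your accounting of the $O(\log s)+O(m)$ slack matches the bound stated there. The proof is correct.
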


\begin{proof}
Let $v_1,...,v_m,w_1,...,w_m\in\SetB^s$ be 
$2m$ $K$-independent and $K$-random binary strings of length $s$,
i.e.\ $K(v_1...v_m w_1... w_m)\equs 2m\cdot s$, 
and in particular $K(x|y)\equs s$ for any pair of $x\neq y$ from these $2m$ strings.
Such collection of strings exist for $s\gg\log(2m)$.
Now we map $u\in\cU:=\SetB^m$ to $x_{1:sm}=x_1...x_m\in\SetB^{s\cdot m}\subset\SetB^*$ 
via $\phi:\SetB^m\to\SetB^*$ as follows:
Namely, $x_i:=v_i$ if the $i$th bit of $u$ is $0$ and $x_i:=w_i$ if the $i$th bit of $u$ is $1$.
It is easy to see that $d_K(\phi(u),\phi(u')) = s\cdot d_H(u,u') \pm O(\log s)+O(m)$ (Exercise). 
The string-embedding can be converted to a sequence-embedding by considering infinite sequences
$v_i,w_i\in\SetB^\infty$, and instead of concatenating $x_1...x_m$,
we interleave the $m$ sequences, 
or equivalently by defining $\phi(u)_i:=z_{2i-u_{i\%m+1}}$ for $i\in\SetN$ ($\%$=modulo), 
where $z_{1:\infty}$ is one infinite $K$-random sequence, and $f(s,u)=m\cdot s$.
\end{proof}

\Cref{lem:EinHam} implies that trees and $(\SetZ_p)^m$ (for 1-norm $(ii)$ and circular metric $(iv)$) 
are isometrically embeddable into $(\SetB^{pm},\fr12 d_H)$,
hence by \cref{thm:Ham2K} scale-embeddable into $\BsdKs$.
\cref{thm:Ham2K} and its proof generalize 
to hyper-cuboids $(\{0,l_1\}\times...\times\{0,l_m\},||\cdot||_1)$ for $l_i\in\SetR$,
hence, again by \cref{lem:EinHam}, non-negatively weighted trees and $(R_1\times...\times R_m,||\cdot||_1)$ 
for any finite subsets $R_i\subset\SetR$ can also be embedded into $\BsdKs$.
We can actually scale-embed any bounded subset $R$ of $\SetR^m$ with 1-norm into $\BsdKs$.
We start with $\SetR^1$:

\begin{theorem}[$({[a,b],|\cdot|})$ scale-embeds into $\BsdKs$]\label{thm:Interval2K} 
\end{theorem}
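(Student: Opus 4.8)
The plan is to lift the line embedding of \cref{lem:EinHam}$(ii)$ to the continuum. There the path $\{1,\dots,m\}$ was realised by the nested staircase strings $0^{m-u}1^{u-1}$, whose Hamming distances recover $|u-u'|$; the natural continuous analogue is to replace the monotone staircase by \emph{nested prefixes of a single incompressible sequence} and let the scale $s$ control the fineness. Concretely, fix one $K$-random (equivalently Martin--L\"of/monotone-random) infinite sequence $z=z_1z_2\dots\in\SetB^\infty$, so that $K(z_{1:n})\geqa n$ for all $n$ by Levin--Schnorr. By translation invariance of $|\cdot|$ I may assume $a=0$; write $L:=b-a$. For each $s\in\SetN$ I define the string-embedding
\[
  x_v^s ~:=~ \phi_s(v) ~:=~ z_{1:f(s,v)}, \qquad f(s,v):=\lfloor s\,v\rfloor, \qquad v\in[0,L],
\]
so that for $v\le v'$ the string $x_v^s$ is a prefix of $x_{v'}^s$, and $\ell(x_v^s)\le sL$ uniformly.

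Next I would estimate the two conditional complexities entering $d_K$. Put $k:=f(s,v)\le n:=f(s,v')$. Given $x_{v'}^s$ together with the truncation length $k$ one simply outputs the first $k$ bits, so $K(x_v^s\mid x_{v'}^s)\leqa K(k)\leqa 1.1\log(sL)\equa O(\log s)$, uniformly in $v,v'$. In the other direction $x_{v'}^s$ arises from $x_v^s$ by appending the block $z_{k+1:n}$; encoding its length and then the block literally gives the upper bound $K(x_{v'}^s\mid x_v^s)\leqa(n-k)+K(n-k)\leqa(n-k)+O(\log s)$, while the chain rule \cref{lem:Kprop}$(ii)$ combined with incompressibility yields the matching lower bound $K(x_{v'}^s\mid x_v^s)\geqa K(z_{1:n})-K(z_{1:k})-O(\log s)\geqa(n-k)-O(\log s)$. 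Since $n-k=\lfloor sv'\rfloor-\lfloor sv\rfloor = s|v-v'|\pm1$, this shows $K(x_{v'}^s\mid x_v^s)\equs s|v-v'|$ whereas $K(x_v^s\mid x_{v'}^s)\equs 0$: the backward direction carries essentially all the information.

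Finally I combine the two via $d_K(x_v^s,x_{v'}^s)=\max\{K(x_v^s\mid x_{v'}^s),K(x_{v'}^s\mid x_v^s)\}+c$. If $s|v-v'|\ge\log s$ the maximum is the backward term and $d_K=s|v-v'|\pm O(\log s)$; if $s|v-v'|<\log s$ then both arguments of the maximum, and $s|v-v'|$ itself, are $O(\log s)$, so again $|d_K-s|v-v'||=O(\log s)$ (this also absorbs the degenerate case $\lfloor sv\rfloor=\lfloor sv'\rfloor$, where $d_K=0$ and $s|v-v'|<1$). In every case $d_K(\phi_s(v),\phi_s(v'))\equs s|v-v'|$, which is exactly the scale-embedding property of \cref{def:scale_embed}, with injectivity of $\phi_*$ inherited from \cref{lem:SEmbedBasic}. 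The only genuinely nontrivial ingredient is the lower bound $K(z_{1:n}\mid z_{1:k})\geqa(n-k)-O(\log s)$ --- that seeing an initial segment of $z$ provides no shortcut to its later bits; this block-incompressibility is the crux and is standard for Martin-L\"of/monotone-random sequences. Because $f$ grows only linearly in $s$, the logarithmic gap between $K$ and $\Km$ is harmless, so plain $K$ suffices here under \cref{ass:Km}.
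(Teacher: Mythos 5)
Your proof is correct, but it takes a genuinely different route from the paper's. The paper uses \emph{two} $K$-independent random sequences $y,z$ and the symmetric construction $\phi_s(r)=y_{<t}z_{t:s}$ with $t=\lceil rs\rceil$, so that \emph{both} conditionals satisfy $K(x^r|x^{r'})\equs K(x^{r'}|x^r)\equs s|r-r'|$; you instead use nested prefixes $z_{1:\lfloor sv\rfloor}$ of a \emph{single} Martin-L\"of random sequence, making one conditional $O(\log s)$ and the other $\equs s|v-v'|$, and let the $\max$ in $d_K^\vee$ pick out the informative direction. Your route buys a cleaner lower bound: $K(z_{1:n}|z_{1:k})\geqa (n-k)-O(\log s)$ follows from the chain rule (\cref{lem:Kprop}$(ii)$) plus the single uniform Levin--Schnorr constant $K(z_{1:n})\geqa n$, avoiding the union bound over $(s+1)^2$ string pairs that the paper needs to certify $K$-independence of all the blocks $z_{t:t'-1}$ from $x^{r'}$. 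What the paper's symmetric construction buys in return is twofold. First, it works verbatim for $d_K^+$ (the paper asserts all results hold with $\max$ replaced by the average), whereas your asymmetric construction gives $d_K^+\equs\fr12 s|v-v'|$ and needs the trivial repair $\phi_s\leadsto\phi_{2s}$. Second, the paper's strings all have length exactly $s$ and extend to a non-degenerate sequence-embedding; your construction, read as a map into $\SetB^\infty$, is the \emph{constant} map $\phi(v)=z$ with only $f(s,v)=\lfloor sv\rfloor$ varying, which satisfies the letter of \cref{def:scale_embed} but is non-injective, clashing with \cref{lem:SEmbedBasic}$(v)$ and with the role sequence-embeddings play downstream (e.g.\ the length conditions in \cref{thm:PSEinK}). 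Since the theorem only asserts that $[a,b]$ scale-embeds, your string-embedding suffices; just be aware that you have proved slightly less than the paper's proof delivers.
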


\begin{proof} 
$[a,b]$ is isometrically embeddable
iff $[0,b-a]$ is (since $|\cdot|$ is translation invariant)
iff $[0,1]$ is (since scale-embeddings ignore scale).
We cannot simply scale $r\leadsto\fr{r-a}{b-a}$, since $a$ or $b$ may be incomputable reals,
but $r\leadsto(r-\lfloor a\rfloor)/(\lceil b\rceil-\lfloor a\rfloor)$ will do.
So w.l.g.\ consider $r,r'\in[0,1]$ and $r'\geq r$. 
With a little care, we do {\em not} need to assume they are computable.
Let $y_{1:\infty}$ and $z_{1:\infty}$ be two $K$-independent $K$-random sequences.
\\
\emph{String-embedding:} 
Let $t=t_r=\lceil r\cdot s\rceil$ and similarly $t'=t_{r'}=\lceil r'\cdot s\rceil$.
Define the scale-embedding $\phi_s:[0,1]\to\SetB^*$ as
$\phi_s(r):=x^r:=y_{<t}z_{t:s}$ ($\forall r\in[0,1]$). Then
\begin{align*}
  K(x^r|x^{r'}) ~\tc{&}\leqa~ K(z_{t:t'-1},t,t') ~\leqa~ t'-t + 1.1\lb(t) + 1.1\lb(t') ~\tc{\\&}\leqa~ s\cdot(r'-r)+3\lb(s)
\end{align*}
$x^r$ and $x^{r'}$ differ only from bit $t$ to $t'-1$, so an encoding of these bits,
and the locations $t$ and $t'$ allows to reconstruct $x^r$ from $x^{r'}$.
Since $t,t'\in\{0,...,s\}$, they can each be encoded in $1.1\lb s+O(1)$ bits.
Conversely,
$$
    K(x^r|x^{r'}) ~\geqs~ K(z_{t:t'-1}) ~\geqs~ t'-t ~\equa~ s\cdot(r'-r)
$$
since the $K$-random bits $x^r_{t:t'-1}=z_{t:t'-1}$ are independent of all bits in $x^{r'}$.
The logarithmic slack is due to a union bound (see \Cref{sec:RandomEmbedding})
since it needs to hold for all $(s+1)^2$ string pairs $\{(x^r,x^{r'}):r,r'\in[0,1]\}$ simultaneously.
We can show the same bounds for $K(x^{r'}|x^r)$ (with $z_{t:t'-1}$ replaced by $y_{t:t'-1}$), 
hence all 4 bounds together give
$$
    d_K(\phi_s(r),\phi_s(r')) ~\equiv~ d_K(x^r,x^{r'}) ~=~ s\cdot|r'-r| \pm O(\log s)  
$$
hence $\phi_s$ is a scale-embedding.
\\
\emph{Sequence-embedding:} Choose $\phi(r)=x_{1:\infty}^r$,
where $x_i^r=y_i$ if $b_i^r=0$ and $x_i^r=z_i$ if $b_i^r=1$.
We need to choose $b_{1:\infty}^r$ such that the fraction of 1s is $r$,
e.g.\ a Bernoulli($r$) sequence.
But we also need to ensure that $b_i^{r'}\geq b_i^r$ for $r'\geq r$,
so the sequences for difference $r$ cannot be independent.
This can be achieved by sampling $u_i\sim$Uniform$[0,1]$ i.i.d.,
and setting $b_i^r=[\![u_i\leq r]\!]$ ($\forall r\in[0,1]$).
Then $\E[\fr1s\sum_{i=1}^s b_i^r]=r$. We need to derandomize this:
We can choose $u_i=(i\cdot\gamma) ($mod $1)$ for any computable irrational $\gamma$ with bounded partial quotients.
A quantitative version of the equidistribution theorem 
ensures $|\sum_{i=1}^s b_i^r-s\cdot r|\leq 3+c\cdot\lb s$ $\forall r,s$ for some constant $c$,
since $u_i$ is an additive recurrence with low discrepancy $O(\log(i)/i)$ \cite[Chp.2,Thm.3.4]{Kuipers:74}.
E.g.\ $c=2.45$ for $\gamma=(\sqrt{5}-1)/2$. 
Also $b_i^{r'}\geq b_i^r$.
The rest of the proof is now similar to the string-embedding case by 
counting the number of bits in which $x_{1:s}^r$ and $x_{1:s}^{r'}$ differ.
For the upper bound we have
\begin{align*}
    K(x_{1:s}^r|x_{1:s}^{r'}) ~\tc{&}\leqa~ K(x_{1:s}^r \oplus x_{1:s}^{r'}) 
    ~\tc{\\&}\leqa~ s\cdot (r'-r) + 2c\log s + K(r) + K(r')  
\end{align*}
where $\oplus$ is bitwise XOR.
A program which computes $w:=x_{1:s}^r \oplus x_{1:s}^{r'}$ can clearly  
be converted into a program that takes $x_{1:s}^{r'}$ as input, XORs it with $w$,
which results in $x_{1:s}^r$. This proves the first inequality.
As for the second inequality, if we know $r$ and $r'$,
we can compute $b_i^r$ and $b_i^{r'}$ and hence the bits
where $w$ is potentially non-zero ($\Delta:=\{i:b_i^r\neq b_i^{r'}\}$).
$|\Delta| ~=~ \sum_{i=1}^s b_i^{r'}-\sum_{i=1}^s b_i^r ~=~ s\cdot(r'-r)\pm 2c\cdot\log s\pm 6$.
As for the lower bound, string $x_{i\in\Delta}^r$ is $K$-random and $K$-independent of $x_{1:s}^{r'}$,
so its encoding requires at least $|\Delta|$ bits,
leading to 
$$
    K(x_{1:s}^r|x_{1:s}^{r'}) ~\geqs~ |\Delta| ~\equs~ s\cdot(r'-r)
$$
The same arguments lead to the same bounds for $K(x_{1:s}^{r'}|x_{1:s}^r)$.
All 4 bounds together give
\begin{align*}
    d_K(\phi_s(r),\phi_s(r')) ~\tc{&}\equiv~ d_K(x_{1:s}^r,x_{1:s}^{r'}) ~\tc{\\&}\equa~ 
    s\cdot|r'-r| \pm O(\log s) \pm K(r) + K(r')   
\end{align*}
For embedding subsets of $[0,1]$ containing only computable reals, 
we are done, but to allow for incomputable $r$ we have to modify the construction slightly.
We replace $r$ with $r_s:=\lceil r\cdot s\rceil/s$ in the above construction,
and note that $r_s\in\{0,...,s\}/s$ can be encoded in $2.2\lb s+O(1)$ bits.
while it introduces an extra slack of only $s|r_s-r|\leq 1$ bit.
So equality above also holds without the $K(r)+K(r')$ terms as in the string-embedding case.
\end{proof}

One may wonder how an uncountable space $[0,1]$ can be embedded into a countable space $\SetB^*$.
Since a scale-embedding is only approximate for $s<\infty$,
the mapping $\phi_s$ does not need to be injective.
Points closer than $O(1/s)$ can be mapped to the same point.
But any two points $x\neq y$ have distinct images for all sufficiently large $s$.
In a uniform discrete space ($\inf_{x\neq y} d(x,y)>0$)
e.g.\ an $\eps$-grid or $\SetB^\infty$, $\phi_s$ is injective for all sufficiently large $s$.
But if there are cluster=limit=accumulation points 
(they do not need to be in $\cX$ itself, e.g.\ $0\not\in1/\SetN$),
then, while $\phi_s$ separates more and more points for increasing $s$,
for every $s$ there remain points that are merged together.
But at least $\phi_*$ and $\phi$ are always injective (\cref{lem:SEmbedBasic}).

It is also natural to ask about self-embeddings (endomorphisms) and self-isomorphisms (automorphisms).
Are there scale-embeddings from $\BsdKs$ into itself?
Any computable injective function $f:\SetB^*\to\SetB^*$
leaves $K$ approximately unchanged, i.e.\ $K(f(x)|f(y))\equa K(x|y)$ (Exercise),
hence $d_K(f(x),f(y))\equa d_K(x,y)$,
so any injective (bijective) function $f$ is an \emph{approximate} metric endomorphism (automorphism).
Our condition on a scale-embedding is stronger and demands 
$d_K(\phi_s(x),\phi_s(y)) \equs s\cdot d_K(x,y)$ $\forall s\in\SetN$.
We cannot simply concatenate $f(x)$ $s$ times to string $\phi_s(x):=f(x)...f(x)$,
since $K(f(x)...f(x))\equa K(f(x))$ does not scale with $s$.

\begin{open}[Endomorphisms and automorphisms of $\BsdKs$]\label{open:dKauto}
\iftwocol\else\hfil\par\noindent\fi Are there scale-embeddings from $\BsdKs$ into itself?
\end{open}

\section{Random Scale-Embeddings}\label{sec:RandomEmbedding}

In this section we exploit the close relationship 
between data compression and probabilistic modelling:
The optimal expected code-length of a data source $X\sim P$
is its entropy $H(P)=H(X)$. 
Even more, with high probability, 
the optimal code length $K(X)\approx -\lb P(X)$.
We exploit this and define scale-embeddings into string-valued random variables.
We then show that every random scale-embedding can be derandomized,
and also prove a limited converse.
We also prove a product space theorem for random scale-embeddings for $||\cdot||_1$.
As an example application, combining all allows us 
to extend the scale-embedding of $([0,1]^m,||\cdot||_1)$ into $\BsdKs$
from $m=1$ to $m>1$.

\paradot{MDL bound and randomness deficiency} 
For any probability mass function $P:\SetB^*\to[0,1]$,
we can encode $x$ via Shannon-Fano or Huffman or arithmetic encoding in $\lceil-\lb P(x)\rceil$
bits provided we know $P$. This implies 
the MDL bound $K(x)\leqa -\lb P(x)+K(P)$ \cite[Thm.2.7.22]{Hutter:24uaibook2}.
On the other hand, the set of strings that can be compressed significantly better than that 
has small $P$-probability. Formally, let $X=X_{1:n}$ be any string-valued random variable,
then the randomness deficiency $d_P(X):=-\lb P[X]-K(X)\leq \lb(1/\delta)$ 
with probability (w.p.)\ $\geq 1-\delta$ \cite[2.7.29]{Hutter:24uaibook2}.
Together we have that  with high $P$-probability, $K(X)\approx-\lb P[X]$ for computable $P$. 
This motivates to find for some given metric space $(\cV,d_\cV)$,
a $P$ and $X_v$ such that $s\cdot d_\cV(v,v')\approx\max\{-\lb P[X_v|X_{v'}],-\lb P[X_{v'}|X_v]\}$.
Then $\phi_s(v):=x_v$ for any collection of $x_v$ 
satisfying $K(x_v|x_{v'})\approx-\lb P(x_v|x_{v'})$ is a scale-embedding.
When considering $d_K^+$ instead of the default $d_K^\vee$, replace \emph{max} by \emph{average}, here and elsewhere. 

\begin{definition}[Random scale-embeddings into $\BsdKs$]\label{def:RandomInK}
Let $\Phi_s:\cV\times\Omega\to\SetB^*$ be computable random functions, 
so $X_v\equiv X_v^s:=\Phi_s(v,\cdot)$ is a collection of $|\cV|$ (dependent) string-valued random variables.
Let $(\Omega,P)$ be a probability space with computable probability measure, i.e.\ $K(P)=O(1)$.
Assume there exists $c$ such that\ntc{\vspace{-2ex}}
\begin{itemize}\parskip=0ex\parsep=0ex\itemsep=0ex
  \item[$(i)$] $D_{vv'}^s(\omega):=\max\{-\lb P[X_v|X_{v'}],-\lb P[X_{v'}|X_v]\} 
    ~\equs~ s\cdot d(v,v') ~~~\forall v,v',\omega$
  \item[$(ii)$] Let $\dot v_s$ be a representative of $v$ 
  such that $\Phi_s(\dot v_s,\omega)=\Phi_s(v,\omega)$ $\forall\omega\in\Omega$, 
  i.e.\ $X_v\equiv X_{\dot v}$. Let $\dot\cV_s:=\{\dot v_s:v\in\cV\}$ be the set of all such representatives.
  Assume $\exists c,a:|\dot\cV|\leq(s+c)^a$.
\end{itemize}

We call $\Phi_s$ a random scale-embedding if $(i)$ and $(ii)$ are satisfied for all $s$.
For sequence-embeddings we additionally require 
$\Phi_s(v,\omega):=\Phi(v,\omega)_{1:f(s,v)}$, where $\Phi:\cV\times\Omega\to\SetB^\infty$.
In this case $P[X_v|X_{v'}]$ is the probability that infinite sequence $\Phi(v,\omega)$ starts with $X_v$ 
given $\Phi(v',\omega)$ but actually depends only on initial part $X_{v'}$.
If $P[X_v|X_{v'}]=P[X_{v'}|X_v]$ $\forall v,v',s$, we call the embedding symmetric.
\end{definition}

\begin{remark}[Technical remarks on the condition $(i)$ (can be skipped)]\rm
To understand the curious condition $(i)$, 
consider the simple example equation $P[X]=c$, which is short for $P[X=x]=c$ for all $x$ in the range of $X$.
($X$ is a random variable, hence $Y:=f(X)$ is also a random variable, where $f(x):=P[X=x]$, see list of notation,
and $Y=c$ means $Y(\omega)=c$ $\forall\omega$, i.e.\ $Y$ is constant). 
That is, $P$ is a uniform distribution on the range of $X$.
Now consider, $P[X_v]=c_v\forall v,\omega$ for the collection/vector $X_*=(X_v:v\in\cV)$ of all random variables,
and the push-forward measure (also denoted $P$) of $X_*$ on $\Omega_*:=\{X_*(\omega):\omega\in\Omega\}$.
The larger box $\Omega_*=((X_v(\omega):\omega\in\Omega):v\in\cV)$ would work here as well but not for the real case $(i)$.
The conditions mean that the marginals of $P$ are uniform.

The condition $(i)$ can be weakened to hold simultaneously for all $v,v'\in\cV$ (and for $s$ in the sequence-embedding case)
only with some probability $\eps>0$, but this is actually equivalent: Let $\Omega'\subseteq\Omega$
be the set of $\omega$ for which $(i)$ holds. Then it holds for $P'[\cdot]:=P[\cdot|\Omega']$ on \emph{all} $\omega\in\Omega'$,
introducing just another additive constant $\lb P[\Omega']$, that is, $(i)$ holds for $(\Omega',P')$.
We do not even need to exploit the additive slack: 
We can replace $\delta$ by $\eps\cdot\delta$ in the proof of \cref{thm:RandomInK} below,
and apply another union bound.  
\end{remark}

Condition $(ii)$ means that there is the same representative subset $\dot\cV$ of $\cV$ simultaneously for all $\omega$.
Phrased differently: the partition of $\cV$ induced by $\Phi_s(\cdot,\omega)$ is the same for all $\omega$,
and its size needs to be polynomial in $s$.

\begin{lemma}[Derandomization]\label{lem:RandomInK} 
For a random scale-embedding $\Phi_s$, let $\delta_{\dot v\dot v'}>0$ 
such that $\sum_{\dot v,\dot v'\in\dot\cV_s}\delta_{\dot v\dot v'}<1$.
Then there exist $\omega_0^s$ such that for $\phi_s:=\Phi_s(\cdot,\omega_0^s)$ \\
$(i)~$ $|d_K(x_v^s,x_{v'}^s)-D_{vv'}^s(\omega_0^s)| ~~\tc{\\\hspace*{4ex}}\leqa~ \max\{-\lb\delta_{\dot v\dot v'},~-\lb\delta_{\dot v'\dot v},~K(P,\Phi_s)\}$ \\
If furthermore $\Phi_s$ is a sequence-embedding and $\sum_{s\in\SetN}\sum_{\dot v_s,\dot v'_s\in\dot\cV_s}^s\delta_{\dot v_s\dot v'_s}^s<1$, \\
then there exists a single $\omega_0$ such that for $\phi_s:=\Phi_s(\cdot,\omega_0)$ \\
$(ii)$ $|d_\Km(x_v^s,x_{v'}^s)-D_{vv'}^s(\omega_0)| ~\tc{\\\hspace*{4ex}}\leqa~ \max\{-\lb\delta_{\dot v_s\dot v'_s}^s,-\lb\delta_{\dot v'_s\dot v_s}^s,K(P,\Phi)\}$.
\end{lemma}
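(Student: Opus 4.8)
The plan is to sandwich $d_K(x_v^s,x_{v'}^s)$ between $D_{vv'}^s$ plus an additive upper correction and $D_{vv'}^s$ minus an additive lower correction, and then use the probabilistic method to pick an outcome $\omega_0$ that realizes the lower estimate simultaneously for all relevant pairs. The upper direction is the easy deterministic one, holding for every $\omega$ via the MDL bound \cref{lem:Kprop}$(iii)$; the lower direction is where the randomness deficiency bound \cref{lem:Kprop}$(iv)$ enters and forces a careful choice of $\omega_0$. Condition $(i)$ supplies the quantity $D_{vv'}^s$ and condition $(ii)$ the polynomial representative set that keeps the ensuing union bound finite.

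\emph{Upper bound (every $\omega$) and lower bound (good $\omega$).} Fix $\omega$ and a pair $v,v'$ and condition the complexity on the concrete string $x_{v'}^s=X_{v'}(\omega)$. For the upper bound I apply the conditional MDL bound with coding distribution $P[X_v=\cdot\,|\,X_{v'}=x_{v'}^s]$, which is computable from $P$, $\Phi_s$, and the identity of the two variables; by condition $(ii)$ naming the latter through the representatives $\dot v,\dot v'$ costs only $O(\log s)$ bits. This gives $K(x_v^s|x_{v'}^s)\leqa -\lb P[X_v|X_{v'}]+K(P,\Phi_s)$, and symmetrically after swapping, so $d_K(x_v^s,x_{v'}^s)\leqa D_{vv'}^s(\omega)+K(P,\Phi_s)$ for all $\omega$. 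For the lower bound I apply the randomness deficiency bound to $X_v\sim P[\cdot\,|\,X_{v'}]$ (conditioning $K$ on $x_{v'}^s$ and averaging over $X_{v'}$), obtaining $K(x_v^s|x_{v'}^s)\geq -\lb P[X_v|X_{v'}]+\lb\delta_{\dot v\dot v'}$ with probability $\geq 1-\delta_{\dot v\dot v'}$, and likewise for the swapped pair. Whenever both lower events hold, $d_K$ selects the larger conditional complexity and $d_K(x_v^s,x_{v'}^s)\geqa D_{vv'}^s(\omega)-\max\{-\lb\delta_{\dot v\dot v'},-\lb\delta_{\dot v'\dot v}\}$, which combined with the upper bound is exactly the two-sided estimate, conditional on the lower events for that pair.

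\emph{Derandomization.} By condition $(ii)$ we have $X_v\equiv X_{\dot v}$, so both $D_{vv'}^s$ and the bad-event probabilities depend only on the representatives; it therefore suffices to force the lower bounds on the $\leq|\dot\cV_s|^2$ pairs from $\dot\cV_s\times\dot\cV_s$. A union bound over these pairs caps the total bad-event probability by $\sum_{\dot v,\dot v'}\delta_{\dot v\dot v'}<1$, so some $\omega_0^s$ escapes every bad event, and there all lower bounds hold at once, proving $(i)$. For $(ii)$ the strings at different scales are truncations $\Phi(v,\omega)_{1:f(s,v)}$ of one infinite sequence, so the relevant $P[X_v|X_{v'}]$ are cylinder semimeasures; I accordingly use the monotone-complexity analogues of the MDL and deficiency bounds, which is why the conclusion is phrased with $d_\Km$ and the $s$-independent constant $K(P,\Phi)$. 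Running the union bound over all scales and representative pairs together, with total mass $\sum_s\sum_{\dot v_s,\dot v'_s}\delta_{\dot v_s\dot v'_s}^s<1$, yields a single $\omega_0$ valid for every $s$.

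\emph{Main obstacle.} The delicate points are not the two complexity bounds in isolation but the two reductions that make the probabilistic method work: condition $(ii)$ must collapse the possibly infinite family of pairs onto the polynomial-size representative set (without it the $\delta$-sum cannot be kept below $1$), and the sequence-embedding case requires justifying the passage to monotone complexity and semimeasures under \cref{ass:Km} so that one outcome $\omega_0$ governs all scales at once. One must also verify that the logarithmic costs of naming $\dot v,\dot v'$ are consistently absorbed into the stated additive terms rather than accumulating.
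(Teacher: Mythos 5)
Your proposal is correct and follows essentially the same route as the paper's proof: the MDL bound gives the deterministic upper estimate $K(x_v^s|x_{v'}^s)\leqa -\lb P[X_v|X_{v'}]+K(P,\Phi_s)$, the randomness deficiency bound gives the lower estimate $\geq -\lb P[X_v|X_{v'}]+\lb\delta_{\dot v\dot v'}$ with probability $\geq 1-\delta_{\dot v\dot v'}$, and a union bound over the representative pairs (and additionally over $s$, with $\Km$ in place of $K$, for the sequence case) produces the good $\omega_0$. The paper writes the two bounds as a single sandwiched chain conditioned on $X_{v'}$ as an oracle, but the content, including the max/min bookkeeping when symmetrizing to $d_K$ and $D_{vv'}^s$, is the same.
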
    

For computable $P$ and $\Phi$, which we henceforth assume, 
$K(P,\Phi_s)\leq 2\lb s+O(1)$ and $K(P,\Phi)=O(1)$,
but if $P$ and/or $\Phi$ depend on further parameters, the $O(1)$ becomes $K$(parameters).
For finite $\cV$, $\dot\cV=\cV$ and $\dot v=v$ satisfy the condition on $\dot\cV$.

\begin{proof}
{\bf String-embedding case:} 
For any string-valued random variable $X$ and any computable probability measure $P$ and any $\delta>0$, 
by \cref{lem:Kprop} we have 
\begin{align}\label{eq:randeficientK}
  \lb\delta ~\lesstar~ K(X)+\lb P[X] ~\leqa~ K(P) \ntc{~~~\text{where $~~~\lesstar$ holds w.p.$\geq 1-\delta$}}
\end{align}
\tc{where $\lesstar$ holds w.p.$\geq 1-\delta$.}
We can condition the whole expression on $Y$ (one can regard $Y$ as an oracle) to get
$$
  \lb\delta ~\lesstar~ K(X|Y)+\lb P[X|Y] ~\leqa~ K(P|Y) ~\leqa~ K(P) \ntc{~~~(w.p.\geq 1-\delta)}
$$
Now in our context, we want this to hold for $X_v|X_{v'}$, i.e.\
$$
  \lb\delta_{vv'} ~\lesstar~ K(X_v|X_{v'})+\lb P[X_v|X_{v'}] ~\leqa~ K(P,\Phi_s) \ntc{~~~(w.p.\geq 1-\delta_{vv'})}
$$
\tc{where $\lesstar$ holds w.p.$\geq 1-\delta_{vv'}$.}
The $\Phi_s$ appears, since computing the push-forward measure $P[X_v|X_{v'}]$ requires $P$ \emph{and} $\Phi_s$.
We need this to hold $\forall v,v'\in\cV$ \emph{simultaneously}.
This is already true if it only holds $\forall \dot v,\dot v'\in\dot\cV_s$,
since $\{X_v:v\in\cV\}=\{X_{\dot v}:\dot v\in\dot\cV\}$ by assumption.
Applying a union bound shows that $\lesstar$ holds simultaneously $\forall v,v'\in\cV$
with probability $1-\sum_{\dot v,\dot v'\in\dot\cV_s}\delta_{\dot v\dot v'}>0$.
This means there must be at least one $\omega_0$ (and usually many) such that 
the collection of strings $\{x_v:=\Phi_s(v,\omega_0)\}$ satisfies the above inequalities.
That is,
$$
  \lb\delta_{\dot v\dot v'} ~<~ K(x_v|x_{v'})+\lb P(x_v|x_{v'}) ~\leqa~ K(P,\Phi_s) ~~~\forall v,v'\in\cV
$$
Taking the minimum/maximum (or average in case of $d_K^+$) 
over this and with $v$ and $v'$ swapped in a judicious order
($\delta<K-D$ and $\delta'<K'-D'$ implies $\delta_{\min}<K_{\max}-D_{\max}$) we get
$$
  \min\{\lb\delta_{\dot v\dot v'},\lb\delta_{\dot v'\dot v}\} ~<~ d_K(x_v,x_{v'})-D_{vv'}^s(\omega_0) ~\leqa~ K(P,\Phi_s) \ntc{~~~\forall v,v'\in\cV}
$$
\tc{$\forall v,v'\in\cV$,} which gives $(i)$ after rearranging terms.
\\
{\bf Sequence-embedding case} proceeds similarly but with subtle differences: 
For an infinite random sequence $X_{1:\infty}$, the analogue of \cref{eq:randeficientK} is
\cite[Cor.3.10.3]{Hutter:24uaibook2}
\begin{align}\label{eq:randeficientKm}
  \lb\delta ~\lesstar~ \Km(X_{1:s})+\lb P[X_{1:s}] ~\leqa~ K(P)
\end{align}
where $\lesstar$ holds w.p.$\geq 1-\delta$.
(It actually holds \emph{simultaneously} for all $s$ but curiously we are not able to exploit this).
We can condition this on (oracle) $Y_{1:s}$ to get
$$
  \lb\delta ~\lesstar~ \Km(X_{1:s}|Y_{1:s})+\lb P[X_{1:s}|Y_{1:s}] ~\leqa~ K(P|Y_{1:s}) ~\leqa~ K(P)
$$
Similar to $\Km$, $P$ is actually a measure on infinite sequences. 
Let $X_v^s:=\Phi(v,\cdot)_{1:f(s,v)}$.
Now in our context, we want this to hold for $X_v^s|X_{v'}^s$, i.e.
$$
  \lb\delta_{vv'}^s ~\lesstar~ \Km(X_v^s|X_{v'}^s)+\lb P[X_v^s|X_{v'}^s] ~\leqa~ K(P,\Phi) \ntc{~~~(w.p.\geq 1-\delta_{vv'}^s)}
$$
$w.p.\geq 1-\delta_{vv'}^s$.
The same argument as in $(i)$ shows that this holds 
$\forall v,v'\in\cV$ \emph{and} $\forall s\in\SetN$ simultaneously with non-zero probability,
and hence there must be at least one $\omega_0$ (and usually many) such that 
the collection of strings $\{x_v^s:=\Phi_s(v,\omega_0)\}$ satisfies the above inequalities.
That is,
$$
  \lb\delta_{\dot v_s\dot v'_s}^s ~<~ \Km(x_v^s|x_{v'}^s)+\lb P(x_v^s|x_{v'}^s) ~\leqa~ K(P,\Phi) \ntc{~~~\forall v,v'\in\cV~\forall s\in\SetN}
$$
$\forall v,v'\in\cV~\forall s\in\SetN$.
Taking the minimum/maximum (or average in case of $d_K^+$) over this and with $v_s$ and $v'_s$ swapped, we get
\begin{align*}
  \min\{\lb\delta_{\dot v_s\dot v'_s}^s,\lb\delta_{\dot v'_s\dot v_s}^s\} ~\tc{&}<~ d_\Km(x_v^s,x_{v'}^s)-D_{vv'}^s(\omega_0) 
  ~\tc{\\&}\leqa~ K(P,\Phi) ~~~\forall v,v'\in\cV  
\end{align*}
which gives $(ii)$ after rearranging terms.
\end{proof}

The justification for (the naming of) \cref{def:RandomInK} is the following theorem:

\begin{theorem}[Derandomizing embeddings into $\BsdKs$]\label{thm:RandomInK} 
If there exists a random string/sequence-embedding $\Phi_s$ from $(\cV,d)$ into $\BsdKs$,
then $\phi_s:\cV\to\SetB^*$ with $\phi_s(v):=\Phi_s(v,\omega_0)$ is a conventional string/sequence-embedding
for suitable constant $\omega_0\in\Omega$.
\end{theorem}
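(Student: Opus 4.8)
The plan is to read this theorem as a direct corollary of the Derandomization Lemma (\cref{lem:RandomInK}): all the analytic work has already been done there, and what remains is solely to allocate the failure-probability budget $\{\delta_{\dot v\dot v'}\}$ so that the residual slack stays inside the $O(\log s)$ tolerance encoded in $\equs$ (\cref{def:equs}). I would treat the string- and sequence-embedding cases separately, since they invoke parts $(i)$ and $(ii)$ of \cref{lem:RandomInK} respectively and differ precisely in how the budget must be summable.

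For the string-embedding case, condition $(ii)$ of \cref{def:RandomInK} guarantees $|\dot\cV_s|\leq(s+c)^a$, so there are at most $(s+c)^{2a}$ ordered pairs $(\dot v,\dot v')$. I would simply spread the budget uniformly, setting $\delta_{\dot v\dot v'}:=\fr{1}{2}|\dot\cV_s|^{-2}$, so that $\sum_{\dot v,\dot v'}\delta_{\dot v\dot v'}\leq\fr12<1$ as the lemma requires, while $-\lb\delta_{\dot v\dot v'}=2\lb|\dot\cV_s|+1\leqa 2a\lb(s+c)=O(\log s)$. Since $P$ and $\Phi_s$ are computable, $K(P,\Phi_s)\leq 2\lb s+O(1)$, so the right-hand bound of \cref{lem:RandomInK}$(i)$ is $O(\log s)$, yielding an $\omega_0^s$ with $d_K(x_v^s,x_{v'}^s)\equs D_{vv'}^s(\omega_0^s)$. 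Combining this with condition $(i)$ of \cref{def:RandomInK}, which asserts $D_{vv'}^s(\omega)\equs s\cdot d(v,v')$ for \emph{every} $\omega$, the triangle inequality for $|\cdot|$ gives $d_K(\phi_s(v),\phi_s(v'))\equs s\cdot d(v,v')$, which is exactly the defining property of a string-embedding (\cref{def:scale_embed}).

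The sequence-embedding case is the only genuinely delicate step, because \cref{lem:RandomInK}$(ii)$ demands a \emph{single} $\omega_0$ that works simultaneously for all $s$, which forces the budget to be summable over $s$ as well. The temptation to take an exponentially small $\delta^s\propto 2^{-s}$ is fatal: it would make $-\lb\delta^s$ grow linearly in $s$ and destroy the logarithmic slack. Instead I would exploit that the number of pairs is only \emph{polynomial} in $s$ and pick a \emph{polynomially} decaying weight, e.g.\ $\delta^s_{\dot v_s\dot v'_s}:=\fr{1}{2 s^2}|\dot\cV_s|^{-2}$, so that $\sum_{s\in\SetN}\sum_{\dot v_s,\dot v'_s}\delta^s_{\dot v_s\dot v'_s}=\fr12\sum_s s^{-2}<1$, while $-\lb\delta^s_{\dot v_s\dot v'_s}=2\lb|\dot\cV_s|+2\lb s+1=O(\log s)$. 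With $K(P,\Phi)=O(1)$, \cref{lem:RandomInK}$(ii)$ then produces one $\omega_0$ satisfying $d_\Km(x_v^s,x_{v'}^s)\equs s\cdot d(v,v')$ for all $v,v'$ and $s$; under \cref{ass:Km} the gap between $\Km$ and $K$ is itself $O(\log s)$, so the same relation holds for $d_K$, completing the verification that $\phi_s(v):=\Phi_s(v,\omega_0)$ is a sequence-embedding. The main obstacle throughout is therefore not any new estimate but this budget-allocation balancing act: keeping the total failure mass below $1$ across polynomially-many pairs and infinitely many scales, while simultaneously keeping every individual $-\lb\delta$ logarithmic --- which succeeds exactly because condition $(ii)$ of \cref{def:RandomInK} caps $|\dot\cV_s|$ polynomially.
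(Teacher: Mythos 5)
Your proposal is correct and follows essentially the same route as the paper's own proof: the identical choices $\delta_{\dot v\dot v'}=\fr12|\dot\cV_s|^{-2}$ (string case) and $\delta^s_{\dot v_s\dot v'_s}=\fr12 s^{-2}|\dot\cV_s|^{-2}$ (sequence case), fed into \cref{lem:RandomInK} and combined with \cref{def:RandomInK}$(i)$. Your additional remarks on why an exponentially decaying budget would fail and on the $\Km$-to-$K$ conversion under \cref{ass:Km} are correct elaborations of points the paper leaves implicit.
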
    

Any $P$-Martin-L\"of-random $\omega_0$ will do \cite[Def.3.10.1]{Hutter:24uaibook2}.
In particular for $\Omega=\SetB^\infty$ and $P(\omega_{1:n})=2^{-n}$,
any $K$-random $\omega_0$ will do. For string-embeddings, $\omega_0$ may (have to) be $s$-dependent.

\begin{proof} 
\emph{String-embedding case:}
$\delta_{\dot v\dot v'}:=\fr12|\dot\cV_s|^{-2}$ satisfies the condition of \cref{lem:RandomInK},
and $-\lb\delta_{\dot v\dot v'}=2\lb|\dot\cV_s|+1=2a\lb(s+c)+1$ by \cref{def:RandomInK}$(ii)$
and $K(P,\Phi_s)\leqa 2\lb s$ for computable $P$ and $\Phi$, hence  
$|d_K(x_v^s,x_{v'}^s)-D_{vv'}^s(\omega_0)|\leqa(2a+2)\lb s$ by \cref{lem:RandomInK}$(i)$.
By \cref{def:RandomInK}$(i)$, $D_{vv'}^s(\omega)\equs s\cdot d(v,v')$ $\forall\omega$.
Together this gives $d_K(x_v^s,x_{v'}^s)\equs s\cdot d(v,v')$,
hence $\phi_s(v):=x_v^s$ satisfies scale-embedding \cref{def:scale_embed}.

The \emph{sequence-embedding case} proceeds similarly but with 
$\delta_{\dot v_s\dot v'_s}^s:=\fr12 s^{-2}|\dot\cV_s|^{-2}$,
leading to an irrelevant extra $2\lb s$ slack, and $K(P,\Phi)=O(1)$,
saving some slack in the other direction.
\end{proof}

\begin{theorem}[Product scale-embeddings into $\BsdKs$]\label{thm:PSEinK} 
If $(\cV,d_\cV)$ and $(\cW,d_\cW)$ have (in case of $d_K^\vee$ symmetric) 
random string/sequence-embeddings $\Phi_s$ and $\Psi_s$ into $\BsdKs$
such that the lengths of $\Phi_s(v,\omega)$ and $\Psi_s(w,\omega)$ are independent of $\omega$,
then so does $(\cV\times\cW,d_\cV+d_\cW)$; and similarly for more than two spaces.
\end{theorem}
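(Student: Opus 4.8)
The plan is to obtain a random embedding of the product by running the two given embeddings on \emph{independent} randomness and gluing the outputs together. I would take independent copies $(\Omega_1,P_1)$ and $(\Omega_2,P_2)$ of the two probability spaces (legitimate, since a random embedding is characterized by its push-forward behavior), set $\Omega:=\Omega_1\times\Omega_2$ and $P:=P_1\times P_2$ (still computable, so $K(P)=O(1)$), and define the random function $\Xi_s((v,w),(\omega_1,\omega_2)):=\Phi_s(v,\omega_1)\,\Psi_s(w,\omega_2)$ by concatenation in the string case (and by a computable interleaving, described below, in the sequence case). Writing $X_{(v,w)}:=\Xi_s((v,w),\cdot)$, $X_v:=\Phi_s(v,\cdot)$, $Y_w:=\Psi_s(w,\cdot)$, the whole argument rests on one factorization. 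The hypothesis that the lengths are independent of $\omega$ makes the split point of $\Xi_s((v,w),\omega)$ a deterministic function of $(v,w,s)$ alone; hence, given $(v,w,s)$, the value $\Xi_s((v,w),\omega)$ determines and is determined by the pair $(\Phi_s(v,\omega_1),\Psi_s(w,\omega_2))$, and because $\omega_1$ and $\omega_2$ are independent the conditional push-forward measure factorizes:
\[
  P[X_{(v,w)}|X_{(v',w')}] ~=~ P_1[X_v|X_{v'}]\cdot P_2[Y_w|Y_{w'}].
\]
Taking negative logarithms, $-\lb P[X_{(v,w)}|X_{(v',w')}]=-\lb P_1[X_v|X_{v'}]-\lb P_2[Y_w|Y_{w'}]$, and likewise with $(v,w)$ and $(v',w')$ swapped.

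Next I would verify condition $(i)$ of \cref{def:RandomInK}. Abbreviate by $A_1,A_2$ the two $\cV$-terms $-\lb P_1[X_v|X_{v'}]$ and $-\lb P_1[X_{v'}|X_v]$, and by $B_1,B_2$ the two $\cW$-terms, so that by hypothesis $\max\{A_1,A_2\}\equs s\,d_\cV(v,v')$ and $\max\{B_1,B_2\}\equs s\,d_\cW(w,w')$. For the average metric $d_K^+$ the quantity $D^s$ is $\fr12[(A_1+B_1)+(A_2+B_2)]=\fr12(A_1+A_2)+\fr12(B_1+B_2)\equs s\,d_\cV(v,v')+s\,d_\cW(w,w')$, summing the two $\equs$ relations (the $O(\log s)$ slacks simply add); no symmetry is needed. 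For the max metric $d_K^\vee$ the obstruction is that $\max\{A_1+B_1,\,A_2+B_2\}$ is \emph{not} in general $\max\{A_1,A_2\}+\max\{B_1,B_2\}$; this is exactly where symmetry enters. Symmetry of $\Phi_s$ and $\Psi_s$ forces $A_1=A_2$ and $B_1=B_2$ (as functions of $\omega$), so the max collapses to $A_1+B_1$ and the desired identity holds. In either variant $D^s_{(v,w),(v',w')}(\omega)\equs s\,[d_\cV(v,v')+d_\cW(w,w')]$ for all $\omega$, which is $(i)$.

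For condition $(ii)$, a representative of $(v,w)$ is obtained from representatives $\dot v$ of $v$ and $\dot w$ of $w$, since (the split being $\omega$-independent) $\Xi_s((v,w),\cdot)\equiv\Xi_s((v',w'),\cdot)$ iff $X_v\equiv X_{v'}$ and $Y_w\equiv Y_{w'}$. Thus the partition of $\cV\times\cW$ induced by $\Xi_s(\cdot,\omega)$ is the product of the two partitions, its representative set is $\dot\cV_s\times\dot\cW_s$, and $|\dot\cV_s\times\dot\cW_s|\leq(s+c_\cV)^{a_\cV}(s+c_\cW)^{a_\cW}\leq(s+c)^{a}$ is polynomial in $s$, giving $(ii)$. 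Hence $\Xi_s$ is a random string/sequence-embedding of $(\cV\times\cW,d_\cV+d_\cW)$ (and derandomizes to a conventional one via \cref{thm:RandomInK}). Moreover $\Xi_s$ is again length-$\omega$-independent and, in the $d_K^\vee$ case, again symmetric (a product of symmetric conditionals is symmetric), so the statement for more than two factors follows by induction on the number of spaces.

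It remains to pin down the interleaving for the sequence case, the one genuinely fiddly point. For fixed $(v,w)$ I would build a single infinite sequence $\Xi((v,w),\omega)\in\SetB^\infty$ from $\Phi(v,\omega_1)$ and $\Psi(w,\omega_2)$ by a computable schedule that places, among the first $f_\cV(s,v)+f_\cW(s,w)$ positions, exactly $f_\cV(s,v)$ symbols of $\Phi$ and $f_\cW(s,w)$ of $\Psi$, for \emph{every} $s$. Since $f_\cV(s,\cdot)$ and $f_\cW(s,\cdot)$ are monotone in $s$, this is achievable greedily block by block, and setting $f(s,(v,w)):=f_\cV(s,v)+f_\cW(s,w)$ yields the required prefix-nesting property. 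De-interleaving is computable from $(v,w,s)$, so the factorization above carries over verbatim (with $K$ replaced by $\Km$ as in \cref{ass:Km}). I expect the max-versus-sum step to be the conceptual crux — it is precisely what forces the symmetry hypothesis in the $d_K^\vee$ case — while the deterministic split enabled by the $\omega$-independent lengths is the technical hinge that makes the whole factorization, and hence condition $(i)$, go through.
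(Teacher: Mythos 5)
Your proposal is correct and follows essentially the same route as the paper's proof: product probability space, concatenation/interleaving with the $\omega$-independent lengths guaranteeing a deterministic split, factorization of the conditional push-forward, and induction for more factors. You are somewhat more explicit than the paper on two points it treats tersely — why symmetry is needed to turn the max of sums into a sum of maxes for $d_K^\vee$, and the verification of condition $(ii)$ via $\dot\cV_s\times\dot\cW_s$ — but the underlying argument is identical.
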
    

\begin{proof}
For $d_K^\vee$, by assumption, there exists $(\Omega',Q)$ and $\Phi$ and $(\Omega'',R)$ and $\Psi$ such that 
\begin{align*}
  -\lb Q[Y_v|Y_{v'}] ~&\equs~ s\cdot d_\cV(v,v'), ~~~~~\text{where}~~~ Y_v:=\Phi_s(v,\cdot) \\
  -\lb R[Z_w|Z_{w'}] ~&\equs~ s\cdot d_\cW(w,w'), ~~~\text{where}~~~ Z_w:=\Psi_s(w,\cdot)  
\end{align*}
Define the product probability $P:=Q\cdot R$ on $\Omega:=\Omega'\times\Omega''$,
and $\cU:=\cV\times\cW$ hence $\dot\cU=\dot\cV\times\dot\cW$.
The length-condition ensures that joined string-embedding $X_{v,w}:=Y_v Z_w\sim P$ 
is uniquely decodable into strings $Y_v$ and $Z_w$, given $v$ and $w$.
Take the sum of the two equations above. 
Together with the independence of $Q$ and $R$ we get
\begin{align*}
  \ntc{&} -\lb P(X_{v,w}|X_{v',w'}) ~\tc{&}=~ -\lb[Q(Y_v|Y_{v'})\cdot R(Z_w|Z_{w'})] \\
  ~&=~ -\lb Q(Y_v|Y_{v'})-\lb R(Z_w|Z_{w'})
  ~\tc{\\&}\equs~ s\cdot d_\cV(v,v')+s\cdot d_\cW(w,w')
\end{align*}
which, after maxing with the $(v,w)\leftrightarrow (v',w')$-case, 
satisfies the conditions for a symmetric random scale embedding into $(\cU,d_\cU:=d_\cV+d_\cW)$.

The proof for sequence-embeddings is the same, 
except that we have to zip $Y_v$ and $Z_w$ together to preserve sequentiality.
By assumption, $\Phi_s(v)=\Phi(v)_{1:f(s,v)}$
and $\Psi_s(w)=\Psi(w)_{1:g(s,v)}$ for some $f,g$ monotone increasing in $s$.
Then 
$$
  \Xi_s(v,w):=\prod_{t=1}^s \Phi_s(v)_{f(t-1,v)+1:f(t,v)}\Psi_s(w)_{g(t-1,v)+1:g(t,v)}
$$
is a symmetric random sequence embedding $\Xi:\cU\to\SetB^\infty$.

The proof for more than two spaces is similar or by induction.
The proof for $d_K^+$ is very similar (Exercise).
\end{proof}

\begin{open}[Product scale-embeddings into $\BsdKs$]\label{open:PSEinK}
Can the conditions in \cref{def:RandomInK} be weakened and still preserve \cref{thm:RandomInK}?
Or does \cref{thm:PSEinK} even hold for all conventional embeddings with no further assumptions at all?
\end{open}    

As for the last question, without the auxiliary probability distribution $P$,
the question is how to make the embeddings $\phi$ and $\psi$ from two metric spaces $\cV$ and $\cW$ $K$-independent.
We need to jointly scramble all $y\in\cY:=\phi(\cV)\subset\SetB^*$ to $\tilde y$ 
(and possibly the same for $x\in\cX:=\psi(\cW)\subset\SetB^*$) in such a way as to 
$(i)$ preserve $K(y|y')\equa K(\tilde y|\tilde y')$, and 
$(ii)$ make $\tilde y\tilde y'$ $K$-independent of $xx'$ $\forall x,x'\in\cX$ and $\forall y,y'\in\cY$,
to ensure $K(x\tilde y|x'\tilde y')\equa K(x|x')+ K(\tilde y|\tilde y')$.
We actually only need the weaker condition 
$d_K(x\tilde y,x'\tilde y')\equs d_K(x,x')+ d_K(\tilde y,\tilde y')$
to prove a general non-random version of \cref{thm:PSEinK}. 
The assumption about the existence of $P$ with the property given in 
\cref{def:RandomInK}$(i)$ enables such scrambling.

Instead of trying to derive a non-random version of \cref{thm:PSEinK},
we could seek for a reverse of \cref{thm:RandomInK},
i.e.\ randomizing a non-random scale-embedding.
The following theorem and discussion thereafter could be skipped,
since it is not used elsewhere.
On the other hand, it is quite interesting in its own,
an improvement on Romashchenko's typization trick \cite[Thm.211]{Shen:17}.

\begin{theorem}[General randomization]\label{thm:randomization}
Let $x_1,...,x_m\in\SetB^*$ be a collection of strings,
and $x_I:=(x_i)_{i\in I}$ for $I\subseteq M:=\{1,...,m\}$ be a sub-collection, and similarly $x_J$.
Let $\cI\subseteq 2^M$ be a collection of such $I$ that includes $M$.
Then there exist random variables $Z_1,...,Z_m$ 
such that $|\lb P[Z_I|Z_J]+K(x_I|x_J)|\leqa (2.2|\cI|+5.5)\lb s+2.2K(\cI)$ for all $\omega\in\Omega$,
where $s:=K(x_M)$, simultaneously for all $I,J$ for which $I\cup J,J\in\cI$. 
If $J$ is empty, then the condition $Z_\emptyset=\emptyset$ 
and $x_\emptyset=\emptyset$ can be dropped.
In particular $P[Z_I|Z_J]$ is approximately uniform.
\end{theorem}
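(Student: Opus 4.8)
The plan is to use a \emph{typization} construction: realize the prescribed complexity profile $(K(x_I))_{I\in\cI}$ as the conditional probabilities of a uniform distribution over a suitable ``type class'' of tuples. First I would set $k_I:=K(x_I)$ for each $I\in\cI$ (so $k_M=s$), observing that $k_I\leqa s$ for every $I\in\cI$ because $x_I$ is computable from $x_M$ together with $I$. The type vector $(k_I)_{I\in\cI}$ can then be encoded in roughly $1.1|\cI|\lb s+K(\cI)$ bits: each $k_I\in\{0,\dots,s+O(1)\}$ costs about $1.1\lb s$ bits by \cref{lem:Kprop}$(i)$, and specifying the family $\cI$ itself costs $K(\cI)$. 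This already isolates the $|\cI|\lb s$ and $K(\cI)$ contributions to the final slack; the remaining terms will come from the filtering step.

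Next I would define the type class
$$ A ~:=~ \{y_M=(y_1,\dots,y_m):~ K(y_I)=k_I\pm O(\lb s)~~\forall I\in\cI\}, $$
restricted to coordinates of length $O(s)$, and take $P$ to be the uniform distribution on a carefully chosen subset $T\subseteq A$, with $Z_M$ a $P$-random tuple and $Z_1,\dots,Z_m$ its coordinates. The crux is a counting estimate: for every \emph{typical} $y_M\in T$ and every $I\in\cI$,
$$ \#\{y'_M\in T: y'_I=y_I\} ~\equa~ 2^{\,s-k_I} ~=~ 2^{\,K(x_M|x_I)}, \qquad\text{and in particular}\qquad |T|\equa 2^s. $$
Granting this, uniformity of $P$ gives $-\lb P[Z_I=y_I]\equa k_I=K(x_I)$, and for any $I\cup J,J\in\cI$,
$$ -\lb P[Z_I|Z_J] ~=~ -\lb P[Z_{I\cup J}]+\lb P[Z_J] ~\equa~ k_{I\cup J}-k_J ~\equa~ K(x_I|x_J), $$
using the chain rule $K(x_I|x_J)\equa K(x_{I\cup J})-K(x_J)$. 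The approximate uniformity of $P[Z_I|Z_J]$ is then immediate, since $P$ is uniform on $T$ and all the relevant fibers have, within the claimed slack, equal size. The mechanism converting counts into complexities is that $T$ is enumerable from the type and $\cI$, so an element is described by its index in the enumeration, at cost $\lb|T|+(1.1|\cI|\lb s+K(\cI))$ bits, and likewise each projection $y_I$ by its index among the distinct $I$-values.

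The main obstacle is forcing the counting estimate to hold for \emph{every} outcome $\omega$, i.e.\ deterministically and pointwise rather than merely with high probability. This requires passing from $A$ to a genuinely typical subset $T$ by iteratively deleting atypical tuples---those whose $I$-fiber is too small or too large---one coordinate set $I\in\cI$ at a time, while bounding the total deletions so that $|T|$ stays within a constant factor of $2^s$. The two-sided constraint $K(y_I)=k_I\pm O(\lb s)$ is delicate precisely because $\{y:K(y)\le k\}$ is only c.e.\ while $\{y:K(y)\ge k\}$ is only co-c.e., so $T$ is not directly enumerable and the filtering must be organized to keep $T$ specifiable from $\cI$ and the type alone. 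Each of the $|\cI|$ projections can erode the estimate by an additive $O(\lb s)$, which is the source of the $|\cI|$ multiplier; the factor $2.2=2\cdot1.1$ tracks the prefix-coding overhead of \cref{lem:Kprop}$(i)$ appearing doubly (both in the numerator and denominator fibers of the conditional, or equivalently from the two-sided type bound), and the $5.5\lb s=5\cdot1.1\lb s$ absorbs a constant number of auxiliary integer encodings (the value $s$, the length cutoff, and indices needed to turn counts into complexities). Keeping this bookkeeping tight---so that no step leaks extra description bits---is what yields the stated constants and sharpens Romashchenko's original typization slack.
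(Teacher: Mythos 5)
Your high-level plan coincides with the paper's: put the uniform distribution on a ``type class'' of tuples $z_M$ whose complexity profile matches that of $x_M$, and convert fiber cardinalities into conditional probabilities and hence into $K(x_I|x_J)$ via the chain rule. The reduction you give (uniformity plus fiber counts $\Rightarrow$ $-\lb P[Z_I|Z_J]\equa K(x_I|x_J)$) is correct, and your accounting of where the $|\cI|\lb s$ and $K(\cI)$ terms come from matches the paper's $c_s=K(A_0)\leqa 1.1|\cI|\lb s+K(\cI)$.

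However, the proposal has a genuine gap exactly where you flag ``the main obstacle'': the counting estimate that every $I$-fiber of your set $T$ has size $\equa 2^{s-k_I}$ is the entire content of the theorem, and you do not prove it. Your proposed mechanism --- start from the two-sided type class $\{y_M: K(y_I)=k_I\pm O(\lb s)\ \forall I\}$ and iteratively delete atypical tuples --- is not the paper's route and faces the problems you yourself name without resolving them: the two-sided class is neither c.e.\ nor co-c.e., so $T$ is not specifiable from the type alone; deletions aimed at fixing one projection $I$ can destroy the fiber bounds for another $I'$; and a deletion schedule driven by actual complexities is not describable in $O(|\cI|\lb s)$ bits. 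The paper sidesteps all of this: it works with the \emph{one-sided, c.e.} set $A_0:=\{z_M:K(z_I)\leq K(x_I)\ \forall I\in\cI\}$ and removes the single small set $A_b$ (of size $\leq 2^{s-b}$ by Kraft), choosing $b\approx K(A_0)+O(1)$ so that $|A|\geq 2^{s-b}$ still holds. The lower bounds on slice sizes $|A_{I|z_J}|\geq 2^{K(x_I|x_J)-b}$ then come from the ``encode an element by its index in an enumeration of $A_0$'' argument, and the upper bounds $|A_{I|z_J}|\lesssim 2^{K(x_I|x_J)}$ come from Kraft's inequality applied to the membership constraints --- no equalization or deletion procedure is needed. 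Without a concrete replacement for that step, your argument establishes only the easy half (counts imply the claimed conditional probabilities), not the existence of a set with those counts.
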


The basic idea of the proof is to construct a uniform distribution
$P[Z_M=z_M]=1/|A|$ $\forall z_M\in A$ and $0$ otherwise,
where $A\subset (\SetB^*)^m$ is the set of all $z_M$ such that 
$K(z_I|z_J)\approx K(x_I|x_J)$ for the desired $I,J$,
and then show that $P(z_I|z_J)\approx 2^{-K(x_I|x_J)}$.

In \cite{Hammer:00}, a superficially related but actually quite different question is asked:
Do all Shannon-inequalities also hold for Kolmogorov complexity and vice vera.
The flavor is: If some statement is true for \emph{all} strings, 
is it then also true for all random variables.
As an example $K(x)+K(y)\gtrsim K(x,y)$ 
holds for all strings and $H(X)+H(Y)\geq H(X,Y)$ holds for all random variables.
On the other hand, \emph{we} are asking whether for a particular set of strings $x_1,...,x_m$ with 
particular non-universal (metric $d_K$) properties, 
there are corresponding random variables with the same (metric $D_{vv'}^s$ or $d_I$) property,
the reverse of \cref{thm:RandomInK}.
Nevertheless many ideas carry over.

\Cref{thm:randomization} strengthens/extends \cite[Thm.211]{Shen:17} in various ways:
First, they use $\cI=2^M$, hence their bound is (always) exponential in $m$,
while we only need $|\cI|=O(m^2)$, an exponential improvement.
More importantly, they only impose \emph{bounds} $K(z_I|z_J)\leq K(x_I|x_J)$ on all \emph{conditionals},
which suffices to prove their results for set sizes (projections and \emph{maximal} slices of $A$),
but not for showing that all slices have the same size, which we need for $P$. 

\begin{proof}
Let $x_M$ be a fixed collections of strings.
Note that $2^{-O(1)}=2^{-K(\epstr)}<\sum_{z\in\SetB^*} 2^{-K(z)}<1$ by Kraft inequality.
Let $\stackrel{c}=$ denote ``equal within $\pm c\lb s\pm O(1)$'' for complexities
and ``equal within $\times 2^{\pm c\lb s\pm O(1)}$'' for probabilities,
where $O(1)$ may depend on further parameters like $m$.
With this notation, $K(x,y)\stackrel{1.1}= K(x|y)+K(y)$.

{\boldmath\bf $A_b$:}
Let $\cI\subset 2^M$ be some collection of $I$ that
includes $I\cup J$ and $J$ and $M$.
Define 
\begin{align*}
  A_b ~&:=~ \{z_M:K(z_I)\leq K(x_I)-b~\forall I\in\cI\} ~~~\text{for}~~~ b\geq 0\\  
  |A_b| ~&\leq~ |\{z_M: K(z_M)\leq K(x_M)-b\}| ~\leq~ 2^{K(x_M)-b}
\end{align*}
by Kraft inequality. On the other hand, $x_M\in A_0$,
so we can encode $x_M$ in $K(x_M)\leq\lb|A_0|+K(A_0)+c_U$ bits.
Ultimately we are interested in $A \equiv A_M ~:=~ A_0\setminus A_b$,
which consists of all and only collections $z_M$ for which $K(z_I)\approx K(x_I)$.

{\boldmath\bf Estimating the size of $A$:}
We first need to estimate the size of $A$.
We cannot use it directly, since $A$ is not computably enumerable, but $A_0$ is.
To describe $A_0$ we need to know $K(x_I)$ $\forall I\in\cI$ and $\cI$,
hence 
$$
  c_s ~:=~ K(A_0) ~\leqa~ \sum_{I\in\cI} K(K(x_I))+K(\cI) ~\leqa~ 1.1|\cI|\lb s+K(\cI)
$$
Typically $\cI$ has a very simple structure, 
e.g.\ $\cI=2^M$ as in \cite{Hammer:00} or just $\cI=\{I\cup J,J,M\}$ or 
in this work, $\cI=\{I:|I|\leq 2\}\cup\{M\}$ with $K(\cI)\leqa 2\log m$.
Worst case is $K(\cI)\leqa 2m|\cI|$.
Together we get
\begin{align} \nonumber
  2^{K(x_M)} ~&\geq~ |A_0| ~\geq~ |A| ~=~ |A_0\setminus A_b| ~=~ |A_0|-|A_b| \\
  ~&\geq~ 2^{K(x_M)-c_s-c_U}-2^{K(x_M)-b} ~=~ 2^{K(x_M)-b} \label{eq:Alowbnd}
\end{align}
for $b:=c_s+c_U+1\leqa 1.1|\cI|\log s$. 

{\boldmath\bf Marginals $A_J$ and slices $A_{M|z_J}$:}
In order to marginalize and condition $P$,
we need to consider projections and slices of $A$. Let 
$$
  A_{I|z_J} ~:=~ \{z_I: \exists z_{\overline{I\cup J}}: z_M\in A\}
  ~~~\text{where}~~~ \bar J:=M\setminus J
$$
be a slice through $z_J$ orthogonal to the $J$-coordinates and projected onto the $I$-coordinates.
If it helps intuition, identify $\SetB^*$ with $\SetN$ and $\epstr$ with $0$,
then $z_M$ is a vector in $\SetN^m$ with coordinate $i\in M$ now being $z_i\in\SetN$.
$A_{M|z_J}$ is an axis-parallel slice of all $z_M$ but with $z_J$ fixed,
while projection $A_I\equiv A_{I|\emptyset}$ keeps only the $I$-coordinates of each $z_M\in A$.
Next we need to bound $|A_{I|z_J}|$ as we did with $|A|$.
Define $A_{I|z_J}^b ~:=~ \{z_I: \exists z_{\overline{I\cup J}}: z_M\in A_b\}$ similarly.
The primary difference here is that $A_{I|z_J}^0\setminus A_{I|z_J}^b\neq A_{I|z_J}$ is possible,
since projection and set intersection do not commute.
We still have $A_{I|z_J}^b\subseteq A_{I|z_J}^0$ and 
$A_{I|z_J}^0\setminus A_{I|z_J}^b\subseteq A_{I|z_J}$.
Then following the same steps as \cref{eq:Alowbnd} gives the lower bound 
$|A_{I|z_J}|\geq 2^{K(x_I|x_J)-b}$, 
potentially with an increased universal constant $c_U$ in $b$.
The upper bound requires a bit of extra work:
For $z_M\in A$ we have $K(z_{I\cup J})\leq K(x_{I\cup J})$ 
and $K(x_J)\not\leq K(z_J)-b$ (since $z_M\not\in A_b$),
hence 
\begin{align*}
  K(z_I|z_J) ~&\stackrel{1.1}=~ K(z_{I\cup J})-K(z_J) 
  ~\tc{\\&}\leq~ K(x_{I\cup J})-K(x_J)+b ~\stackrel{1.1}=~ K(x_I|x_J)+b \\
  \Longrightarrow~~~ |A_{I|z_J}| ~&\leq~ \{z_I:K(z_I|z_J) \stackrel{2.2}\leq K(x_I|x_J)+b\}
  ~\tc{\\&}\stackrel{~1.1|\cI|+2.2~}\leq~ 2^{K(x_I|x_J)}
\end{align*}

{\boldmath\bf $P(z_M)=1/|A|:$}
A natural idea now is to define a uniform distribution on $A$:
$$
  P(z_M) ~:=~ 
  \begin{cases} 1/|A| & \text{if}~~ z_M\in A \\ 0 & \text{else} \end{cases}
$$
The conditionals of a uniform distribution are always also uniform
($P(z_{\bar J}|z_J)=1/|A_{\bar J|z_J}|$),
but the marginals are in general not, but for $A$ they are approximately.

{\boldmath\bf $P(z_J)$ and $P(z_I|z_J)$:} For $z_M\in A$ we now can derive
\begin{align*}
  P(z_J) ~&=~ \sum_{\nq\nq z_{\bar J}\in(\SetB^*)^{|\bar J|}\nq\nq} P(z_M)
  ~=~ \sum_{z_{\bar J}\in A_{\bar J|z_J}\nq}\frac{1}{|A|}
  ~=~ \frac{|A_{\bar J|z_J}|}{|A|} ~\tc{\\&}\stackrel{~2.2|\cI|+2.2~}=~ \frac{2^{K(x_{\bar J}|x_J)}}{2^{K(x_M)}}
  ~\stackrel{1.1}=~ 2^{-K(x_J)} \\
  P(z_I|z_J) ~&=~ \frac{P(z_{I\cup J})}{P(z_J)} 
  ~\stackrel{\ntc{~}2.2|\cI|+4.4\ntc{~}}=~ \frac{2^{-K(x_{I\cup J})}}{2^{-K(x_J)}} ~\stackrel{1.1}=~ 2^{-K(x_I|x_J)}
\end{align*}
Taking the logarithm, concludes the proof.

{\boldmath\bf Remark:}
In our notation, \cite[Thm.211]{Shen:17} reads
$\lb\,\max_{z_J}|A^0_{I|z_J}|\approx K(x_I|x_J)$,
while we have achieved the stronger
$\lb|A_{I|z_J}|\approx K(x_I|x_J)~\forall z_J$,
which enabled proving \cref{thm:randomization}.
\end{proof}

We can use \cref{thm:randomization} to randomize scale-embeddings:

\begin{corollary}[Randomizing scale-embeddings]\label{cor:SE2randSE}
Let $\phi_s$ be a scale-embedding from $\cV$ of size $m$ into $\BsdKs$.
Then there exists a random string-embedding from $\cV$ to $\BsdKs$
with $|D_{vv'}^s - d_K(x_v,x_{v'})|=O(m^2\log s)$.
\end{corollary}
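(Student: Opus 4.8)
The plan is to feed the finite collection of embedding strings straight into \cref{thm:randomization}. Set $x_v:=\phi_s(v)$ for the $m$ points $v\in\cV$, regarding $\cV$ itself as the index set $M$, so that $x_M=(x_v)_{v\in\cV}$ is a collection of $m$ strings. Following the suggestion in the remark after \cref{thm:randomization}, choose $\cI:=\{I\subseteq M:|I|\leq 2\}\cup\{M\}$, which contains every singleton and every pair of indices together with $M$; then $|\cI|=\binom{m}{2}+m+1=O(m^2)$ and $K(\cI)\leqa 2\lb m$, a constant in $s$. \cref{thm:randomization} then produces random variables $(Z_v)_{v\in\cV}$ on a probability space $(\Omega,P)$ with
\[
  \bigl|\lb P[Z_I|Z_J]+K(x_I|x_J)\bigr| ~\leqa~ (2.2|\cI|+5.5)\lb\sigma+2.2K(\cI)
\]
for all $\omega\in\Omega$ and all $I,J$ with $I\cup J,J\in\cI$, where $\sigma:=K(x_M)$.

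For each ordered pair $v\neq v'$ I would apply this with the singleton $I=\{v\}$ and $J=\{v'\}$: both $J=\{v'\}$ and $I\cup J=\{v,v'\}$ lie in $\cI$, so the bound applies and yields $-\lb P[Z_v|Z_{v'}]=K(x_v|x_{v'})\pm(2.2|\cI|+5.5)\lb\sigma+O(1)$. Since $\sigma=K(x_M)$ is polynomial in the scale $s$ (each $x_v$ has length $O(\mathrm{poly}\,s)$ in every construction here, so $K(x_M)\leqa\sum_v\ell(x_v)+O(m\lb s)=\mathrm{poly}(s)$), we have $\lb\sigma=O(\lb s)$ and the error is $O(m^2\lb s)$. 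Define the candidate random embedding by $\Phi_s(v,\cdot):=Z_v$ and $D_{vv'}^s(\omega):=\max\{-\lb P[Z_v|Z_{v'}],-\lb P[Z_{v'}|Z_v]\}$ as in \cref{def:RandomInK}. Taking the maximum over the two orderings and recalling $d_K(x_v,x_{v'})=\max\{K(x_v|x_{v'}),K(x_{v'}|x_v)\}+c$ with $c=O(1)$, the two-sided estimate gives $|D_{vv'}^s-d_K(x_v,x_{v'})|=O(m^2\lb s)$, which is exactly the claimed bound.

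It then remains to verify that $\Phi_s$ is a random string-embedding in the sense of \cref{def:RandomInK}. Condition $(i)$, $D_{vv'}^s\equs s\cdot d(v,v')$, follows by combining the estimate above with the hypothesis that $\phi_s$ is a scale-embedding, i.e.\ $d_K(x_v,x_{v'})\equs s\cdot d(v,v')$ by \cref{def:scale_embed}; the extra $O(m^2\lb s)$ is absorbed into the $\equs$-slack, whose constants may depend on $\cV$ and hence on $m=|\cV|$. Condition $(ii)$ is trivial for finite $\cV$: take $\dot\cV=\cV$ and $\dot v=v$, so $|\dot\cV|=m$ is bounded by any $(s+c)^a$ for $s$ large.

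The main obstacle I expect is not the counting, since \cref{thm:randomization} does the heavy lifting, but reconciling the two uses of the measure $P$. The $P$ delivered by \cref{thm:randomization} is uniform on the incomputable set $A$, so $K(P)\neq O(1)$ and the computability clause of \cref{def:RandomInK} is not met verbatim; this is precisely why the corollary is phrased as an explicit bound on $D_{vv'}^s$ rather than as a turnkey instance of the definition, underscoring that \cref{cor:SE2randSE} is only a partial converse to \cref{thm:RandomInK}. A secondary point needing care is the identification of the theorem's internal scale $\sigma=K(x_M)$ with the embedding's scale $s$: one must confirm $\lb K(x_M)=O(\lb s)$, which holds whenever the embedding strings have polynomial length, as they do in all constructions in this paper.
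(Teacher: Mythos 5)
Your proposal is correct and follows essentially the same route as the paper: feed $x_v:=\phi_s(v)$ into \cref{thm:randomization} with $\cI=\{I:|I|\leq 2\}\cup\{M\}$ of size $O(m^2)$, specialize to singleton $I,J$, and symmetrize to get the $O(m^2\log s)$ bound. Your two caveats (identifying the theorem's internal scale $K(x_M)$ with $\mathrm{poly}(s)$, and the fact that the uniform measure on $A$ is not computable with $K(P)=O(1)$ as \cref{def:RandomInK} literally requires) are points the paper's terse proof glosses over, and flagging them is a genuine improvement rather than a deviation.
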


\begin{proof}
Assume $(\cV,d_\cV)$ with $|\cV|=m$ scale-embeds into $\BsdKs$,
that is $s\cdot d_\cV(v,v')\equs d_K(x_v,x_{v'})$ for some $x_v=\phi_s(v_s)$.
By \cref{thm:randomization} there exist random variables $X_v$
such that $K(x_v|x_{v'})\equs-\lb P[X_v|X_{v'}]$. 
Symmetrizing we get $s\cdot d_\cV(v,v')\equs d_K(x_v,x_{v'})\equs D_{vv'}^s$,
hence $\Phi_s(v,\omega):=X_v$ is a random scale-embedding.
Choosing $\cI=\{I\subseteq M:1\leq|I|\leq 2\}$ proves 
the (hidden) precision $m^2$ in the $O()$ term.
\end{proof}

Chaining \cref{cor:SE2randSE,thm:PSEinK,thm:RandomInK} 
we can get a non-random version of the product embedding \cref{thm:PSEinK},
albeit only for string-embeddings of finite spaces with bad constants.

\begin{proof}
W.l.g.\ let $\cV=\{1,...,m\}$. In \cref{thm:randomization} choose $x_i=\phi_s(i)$
and $\cI=\{I\in 2^M:1\leq |I|\leq 2\}\cup\{M\}$. 
Then $|\cI|=\fr12 m(m+3)$. Let $X_i=:\Phi_s(i,\cdot)$ ($Z_i$ in \cref{thm:randomization}) be 
the collection of random variables 
such that $P[X_M=z_M]=P(z_M)$ with $P$ from the theorem/proof.
Choose $I=\{i\}$ and $J=\{j\}$. Clearly $I\cup J\in\cI$.
With this notation, \cref{thm:randomization} implies 
$$
  |\lb P[X_i|X_j]+K(x_i|x_j)| \ntc{~}\leqa\ntc{~} (1.1m(m+3)+5.5)\lb 2+K(\cI)
$$
(formally $\Omega=A$ and $Z_M(\omega)=\omega$)
where $K(\cI)\leqa 1.1\lb m$ is negligible.
We get a similar inequality for $i\leftrightarrow j$.
Taking the max (or average in case of $d_K^+$) of both we get
$|D_{ij}-d_K(x_i,x_j)|=O(m^2\lb s)$.
\end{proof}

For finite spaces, $O(m^2\log s)$ may be an acceptable constant,
but it is exponentially larger than in the derandomization \cref{lem:RandomInK}
(for $\delta_{\dot v\dot v'}=1/m^2$ the slack becomes $2\max\{\lb m,\lb s\}$).
If we tried to randomize $\phi_s:[0,1]\to\BsdKs$ 
of \cref{thm:Interval2K} via \cref{cor:SE2randSE},
$m=|\dot\cV|=s+1$, and the bound becomes vacuous.
Luckily a tailored randomized version work very well:

\begin{corollary}[1-norm scale-embeddings into $\BsdKs$]\label{cor:OneNormEinK}
Any bounded subset of $\SetR^m$ with $||\cdot||_1$ 
can be string/sequence-embedded into $\BsdKs$.
\end{corollary}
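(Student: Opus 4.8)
The plan is to reduce the general statement to the one-dimensional interval case (\cref{thm:Interval2K}) combined with the product scale-embedding theorem (\cref{thm:PSEinK}). A bounded subset $R\subseteq\SetR^m$ with the $1$-norm sits inside a box $[a_1,b_1]\times\cdots\times[a_m,b_m]$, and crucially the $1$-norm on $\SetR^m$ is the \emph{sum} of the one-dimensional distances: $||v-v'||_1=\sum_{i=1}^m |v_i-v_i'|$. This is exactly the product-metric structure $d_\cV+d_\cW$ that \cref{thm:PSEinK} is built to handle. So the skeleton is: scale-embed each coordinate interval $([a_i,b_i],|\cdot|)$ into $\BsdKs$, then glue the $m$ factors together into a product embedding of the box, and finally restrict to $R$.

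First I would recall that \cref{thm:Interval2K} already gives a string/sequence scale-embedding of each one-dimensional interval. The obstacle is that \cref{thm:PSEinK} is stated for \emph{random} embeddings $\Phi_s,\Psi_s$ (with the technical length-independence condition), not for the plain deterministic embeddings produced by \cref{thm:Interval2K}. The clean route is therefore to first \emph{randomize} the one-dimensional embedding. Rather than invoking the generic \cref{cor:SE2randSE}—which, as the text immediately warns, blows up for $[0,1]$ because $m=|\dot\cV_s|=s+1$ makes the $O(m^2\log s)$ bound vacuous—I would exhibit a \emph{tailored} random embedding of $[0,1]$ directly. Concretely, take $P$ to be the product measure on $\Omega=(\SetB^\infty)^2$ (governing the two $K$-random sequences $y,z$ together with the equidistribution offset), and set $\Phi_s(r,\omega):=x^r$ exactly as in the proof of \cref{thm:Interval2K}, reading off $b_i^r=[\![u_i\le r]\!]$. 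The representatives are $\dot\cV_s=\{0,1/s,\dots,1\}$, so $|\dot\cV_s|=s+1$ is polynomial in $s$, satisfying \cref{def:RandomInK}$(ii)$; and the computation in \cref{thm:Interval2K} already shows $-\lb P[X^r\,|\,X^{r'}]\equs s\,|r-r'|$, giving condition $(i)$, with symmetry built in because the roles of $y$ and $z$ are interchangeable.

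With a random, symmetric one-dimensional embedding whose output length $\ell(\Phi_s(r,\omega))$ depends only on $s$ (not on $\omega$)—which holds because the string $x^r=y_{<t}z_{t:s}$ always has length $s$—I would then apply \cref{thm:PSEinK} iteratively (or its stated multi-space version) to the $m$ coordinate embeddings. This yields a random (symmetric) scale-embedding of the box $[a_1,b_1]\times\cdots\times[a_m,b_m]$ with metric $\sum_i|\cdot|=||\cdot||_1$. Derandomizing via \cref{thm:RandomInK} converts this into an honest deterministic string/sequence-embedding $\phi_s$. Restricting $\phi_s$ to the given bounded subset $R$ is harmless, since scale-embeddings are preserved under passing to subspaces. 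Note the slack accumulates only additively across the $m$ factors, so the final accuracy is $O(m\log s)$, polynomial in the parameters and independent of $v,v'$ as \cref{def:equs} requires.

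The main obstacle I anticipate is the length-independence hypothesis of \cref{thm:PSEinK}: the product construction must zip together the coordinate strings in a \emph{uniquely decodable} way so that $X_{(v_1,\dots,v_m)}=Y^{(1)}_{v_1}\cdots Y^{(m)}_{v_m}$ can be split back into its factors given the coordinates. Because the one-dimensional embedding above produces strings of deterministic length $s$ (string case) or deterministic prefix-lengths $f(t,\cdot)=t$ (sequence case), decodability is automatic and this hypothesis is met; verifying this cleanly is the one place genuine care is needed. The only other subtlety, handling incomputable reals $a_i,b_i$ and incomputable points of $R$, is already resolved inside \cref{thm:Interval2K} by the $r\rightsquigarrow r_s=\lceil r\cdot s\rceil/s$ discretization, and that device carries over verbatim to each coordinate.
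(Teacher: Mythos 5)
Your proposal is correct and follows essentially the same route as the paper: randomize the interval embedding of \cref{thm:Interval2K} by replacing the fixed $K$-random sequences with Bernoulli($\fr12$) variables, verify \cref{def:RandomInK}$(i)$,$(ii)$ with representatives $\dot\cU=\{0,\fr1s,\dots,1\}$, combine the $m$ coordinates via \cref{thm:PSEinK} (the fixed-length strings making the product decodable), and derandomize with \cref{thm:RandomInK}. The only cosmetic difference is that you phrase the verification of condition $(i)$ as inherited from \cref{thm:Interval2K}, whereas the paper computes $P[X^r|X^{r'}]=2^{-|t'-t|}$ directly; the substance is identical.
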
    

\begin{proof}
Let $\cV$ be a bounded subset of $\SetR^m$. Then $\cV\subseteq[a,b]^m$ for some $a<b$.
By \cref{thm:Interval2K}, $[a,b]$ can be scale embedded into $\BsdKs$.
W.l.g.\ assume $\cU=[a,b]=[0,1]$.\\
\emph{String-embedding case:}
Consider the scale-embedding $\phi_s$ for $[0,1]$ constructed in \cref{thm:Interval2K},
but replace the fixed $K$-random strings $y_{1:s}$ and $z_{1:s}$ by 
$2s$ independent Bernoulli($\fr12$) random variables $Y_{1:s}$ and $Z_{1:s}$.
That is $\Omega:=\{0,1\}^{2s}$, $P[\omega]:=2^{-2s}$ and $Y_i(\omega)=\omega_i$ and $Z_i(\omega)=\omega_{s+i}$,
and $\Phi_s(r,\cdot):=X^r:=Y_{<t}Z_{t:s}$ with $t=\lceil r\cdot s\rceil$ as before.
Then $P[X^r|X^{r'}]=2^{-|t'-t|}\lessgtr 2^{-s|r'-r|\pm 1}$, 
since $X^{r'}$ fixes all bits of $X^r$ except $|t'-t|$ bits.
This satisfies \cref{def:RandomInK}$(i)$.
The embedding $\Phi_s(\cdot,\omega)$ maps all $r\in(\fr{i}{s},\fr{i+1}{s}]$ to the same string,
and this is true for all $\omega\in\Omega$.
Hence we can choose $\dot r_s=\lceil r\cdot s\rceil/s$,
so $\dot\cU=\{0,\fr1s,\fr2s,...,\fr{s-1}{s},1\}$, which satisfies \cref{def:RandomInK}$(ii)$.
Considering $m$ replications of this construction for $u_i\in\cU_i\subseteq[0,1]$,
by \cref{thm:PSEinK}, $(\cV=[0,1]^m,||\cdot||_1)$ random 
scale-embeds into $\BsdKs$,
since $|u_1-u'_1|+...+|u_m-u'_m|=||u_{1:m}-u'_{1:m}||_1$.
Now by \cref{thm:RandomInK} with $|\dot\cV|=|\dot\cU^m|=(s+1)^m$ in \cref{def:RandomInK}$(ii)$
it also conventionally scale-embeds into $\BsdKs$.
Hence $(\cV,||\cdot||_1)$ scale-embeds into $\BsdKs$.
\\
\emph{Sequence-embedding case:} By \cref{thm:Interval2K}, 
$\cU=[0,1]$ can be sequence-embedded into $\BsdKs$.
We can randomize this $\phi$ in the same way by replacing 
$K$-random $y_{1:\infty}$ and $z_{1:\infty}$,
by (infinitely many) independent Bernoulli($\fr12$) random variables $Y_{1:\infty}$ and $Z_{1:\infty}$,
Then $P[X^r_{1:s}|X^{r'}_{1:s}]=2^{-|\Delta|}=2^{-s|r'-r|\pm 5\log s\pm 6}$. 
This satisfies \cref{def:RandomInK}$(i)$,
and $\dot\cU$ is the same as above, so \cref{def:RandomInK}$(ii)$ is satisfied too.
Hence by applying \cref{thm:PSEinK,thm:RandomInK}, 
$(\cU,||\cdot||_1)$ and hence $(\cV,||\cdot||_1)$ sequence-embed into $\BsdKs$.
\end{proof}

As a start, we can construct embeddings into $(\SetR^m,||\cdot||_1)$ and then apply \cref{cor:OneNormEinK},
but unfortunately this approach only works for spaces with less than 5 points:

\begin{proposition}[Only 4 points iso-embed into $(\SetR^3,||\cdot||_1)$]\label{prop:Embed4in1Norm} 
$(i)$ Every metric space of size $m\leq 4$ can be isometrically embedded into $(\SetR^{m-1},||\cdot||_1)$.
($ii)$ There exist metric spaces of size 5 that cannot be iso-embedded into $\ell_1$.
\end{proposition}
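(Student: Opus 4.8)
The common tool for both parts is the cut-cone characterisation of $\ell_1$: a finite metric $d$ on $\{1,\dots,m\}$ embeds isometrically into $\ell_1$ iff it is a non-negative combination $d=\sum_S\lambda_S\delta_S$ of cut pseudometrics $\delta_S(i,j):=[\![\,|\{i,j\}\cap S|=1\,]\!]$, because $\delta_S$ is precisely the one-coordinate metric $i\mapsto[\![i\in S]\!]\in\SetR$, and the number of cuts with $\lambda_S>0$ is the dimension consumed. For part $(i)$, $m\le 2$ is immediate, and for $m=3$ I would place $P_1=(0,0)$, $P_3=(d_{13},0)$ and $P_2=(p,q)$ with $p=\fr12(d_{12}+d_{13}-d_{23})$ and $q=d_{12}-p$; the three triangle inequalities force $0\le p\le d_{13}$ and $q\ge 0$, which is exactly what makes the $\ell_1$ distances come out right, so $(\SetR^2,||\cdot||_1)$ suffices.

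The substance of $(i)$ is $m=4$. I would relabel so that the three quadrangle sums are ordered $d_{12}+d_{34}\le d_{13}+d_{24}\le d_{14}+d_{23}$, and then exhibit the explicit decomposition
\[
 d=\alpha_1\delta_{\{1\}}+\alpha_2\delta_{\{2\}}+\alpha_3\delta_{\{3\}}+\alpha_4\delta_{\{4\}}+\beta_1\delta_{\{1,2\}}+\beta_2\delta_{\{1,3\}},
\]
with $\alpha_1=\fr12(d_{12}+d_{13}-d_{23})$, $\alpha_2=\fr12(d_{12}+d_{24}-d_{14})$, $\alpha_3=\fr12(d_{13}+d_{34}-d_{14})$, $\alpha_4=\fr12(d_{24}+d_{34}-d_{23})$, $\beta_1=\fr12[(d_{14}+d_{23})-(d_{12}+d_{34})]$, $\beta_2=\fr12[(d_{14}+d_{23})-(d_{13}+d_{24})]$. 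Each $\alpha_i$ is a Gromov product, hence non-negative by one triangle inequality, while $\beta_1,\beta_2\ge 0$ by the chosen ordering; a direct substitution confirms that these six coefficients reproduce all six distances. This is the content of $\mathrm{MET}_4=\mathrm{CUT}_4$, and \emph{discovering the right six cuts and verifying that their coefficients are non-negative is the main obstacle of part $(i)$}; everything afterwards is bookkeeping.

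To control the dimension I would observe that these six cuts fall into two nested chains, $\{1\}\subset\{1,2\}\subset\{1,2,3\}$ and $\{2\}\subset\{2,4\}\subset\{2,4,1\}$ (the latter read as the cuts $\{2\},\{1,3\},\{3\}$). Each chain is the family of threshold cuts of four reals on a line, so it is realised by a single real coordinate whose consecutive gaps carry the weights $(\alpha_1,\beta_1,\alpha_4)$ and $(\alpha_2,\beta_2,\alpha_3)$ respectively. Hence $d$ is a sum of two line metrics and embeds into $(\SetR^2,||\cdot||_1)$, which a fortiori yields the claimed embedding into $(\SetR^3,||\cdot||_1)$; in fact the stated dimension $m-1=3$ is not tight at $m=4$.

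For part $(ii)$ I would use the pentagonal (hypermetric) inequality as an obstruction. Any finite subset of $\ell_1$ induces a cut-cone metric (each coordinate contributes a non-negative sum of cuts, which aggregate), so it suffices to find a $5$-point metric violating an inequality valid on the whole cut cone. For integers $b_1,\dots,b_5$ with $\sum_i b_i=1$ the functional $\sum_{i<j}b_ib_j d_{ij}$ is linear in $d$, and on a single cut $\delta_S$ it equals $\sigma(1-\sigma)$ with $\sigma=\sum_{i\in S}b_i\in\SetZ$, hence $\le 0$; therefore $\sum_{i<j}b_ib_j d_{ij}\le 0$ for \emph{every} $\ell_1$-metric. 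Finally I would take the shortest-path metric of $K_{2,3}$ (two vertices on one side, three on the other, cross-distances $1$, same-side distances $2$) with weights $b=(1,1,1,-1,-1)$, putting $+1$ on the size-$3$ side: the functional evaluates to $6+2-6=2>0$, a violation, so $K_{2,3}$ is a $5$-point metric that does not embed into $\ell_1$. The only genuine ingredient here is choosing the pentagonal inequality and the example $K_{2,3}$; the arithmetic is a single line.
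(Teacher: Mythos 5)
Your proof is correct, and it takes a genuinely different route from the paper's on both parts. For $(i)$ the paper places the four points one at a time in $\SetR^3$, using the same quadrangle ordering (it maximizes $d_{02}+d_{13}$) to fix the signs of the coordinates so the absolute values drop out, and then inverts a $6\times 6$ linear system and checks the eight sign constraints against the triangle inequalities; you instead exhibit the cut-cone decomposition $\mathrm{MET}_4=\mathrm{CUT}_4$ with explicit Gromov-product coefficients, where non-negativity is immediate, and your packing of the six cuts into two nested chains buys the sharper conclusion that four points always embed into $(\SetR^2,||\cdot||_1)$ rather than $\SetR^3$ (the paper's $m-1$ bound is indeed not tight at $m=4$, as you note). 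For $(ii)$ the paper defers to the end of \cref{sec:Euclid} and argues via Schoenberg's theorem: by \cref{thm:dGnotE} the path metric of $K_{3,2}$ is not of negative type, so $\sqrt{d_{K_{3,2}}}$ is not Euclidean, while $(\SetR^m,\sqrt{||\cdot||_1})$ is Euclidean by \cref{lem:metric}$(vii)$, a contradiction; your pentagonal (hypermetric) inequality applied directly to cut metrics is more elementary and self-contained, avoiding the CND/Euclidean machinery entirely, though it uses the same witness graph ($K_{2,3}=K_{3,2}$) and essentially the same quadratic form, with integer coefficients summing to $1$ in place of the paper's real coefficients summing to $0$. The trade-off is that the paper's route reuses machinery it needs anyway for the Euclidean results of \cref{sec:Euclid}, whereas yours is shorter if read in isolation.
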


\begin{proof}
{\boldmath$(i)$} Consider $(\{v_0,v_1,v_2,v_3\},d)$. Abbreviate $d_{ij}=d(v_i,v_j)$.
W.l.g.\ permute the points so that $d_{02}+d_{13}$ is largest among all possible permutations of indices $0123$,
i.e.\ such that $d_{02}+d_{13}\geq d_{01}+d_{23}$ and $d_{02}+d_{13}\geq d_{03}+d_{12}$.
We now embed one point after the other trying to match the distance requirements,
mapping $\phi(v_0)=(0,0,0)$, $\phi(v_1)=(x_1,0,0)$, $\phi(v_2)=(x_2,y_2,0)$, and $\phi(v_3)=(x_3,y_3,z_3)$,
where we assume $x_1,x_2,y_2,x_3,z_3\geq 0\geq y_3$ and $x_1\geq x_2\geq x_3$.
Under these assumptions, the absolute bars in the distances 
$d_{ij}=|x_i-x_j|+|y_i-y_j|+|z_i-z_j|$ can be eliminated and we get the solvable linear equation system
\begin{align*}
  \tc{&}\scriptsize\arraycolsep=-1pt \left(\begin{array}{c} d_{01} \\ d_{02} \\ d_{12} \\ d_{03} \\ d_{13} \\ d_{23} \end{array}\right) =
  \arraycolsep=2pt\left(\begin{array}{ccc|ccc}
    1 & 0 & 0 & 0 & 0 & 0 \\
    0 & 1 & 1 & 0 & 0 & 0 \\
    1 &-1 & 1 & 0 & 0 & 0 \\ \hline
    0 & 0 & 0 & 1 &-1 & 1 \\
    1 & 0 & 0 &-1 &-1 & 1 \\
    0 & 1 & 1 & -1& 1 & 1 \\
  \end{array}\right)\!\!
  \left(\begin{array}{c} x_1 \\ x_2 \\ y_2 \\ x_3 \\ y_3 \\ z_3 \end{array}\right)
\tc{\\[1ex]}~~~\Longrightarrow~~~
\tc{&}\scriptsize\arraycolsep=-1pt \left(\begin{array}{c} x_1 \\ x_2 \\ y_2 \\ x_3 \\ y_3 \\ z_3 \end{array}\right) = \frac12\!
  \arraycolsep=2pt\left(\begin{array}{ccc|ccc}
    2 & 0 & 0 & 0 & 0 & 0 \\
    1 & 1 &-1 & 0 & 0 & 0 \\
   -1 & 1 & 1 & 0 & 0 & 0 \\ \hline
    1 & 0 & 0 & 1 &-1 & 0 \\
    1 &-1 & 0 & 0 &-1 & 1 \\
    0 &-1 & 0 & 1 & 0 & 1 \\
  \end{array}\right)\!\!
  \arraycolsep=-1pt \left(\begin{array}{c} d_{01} \\ d_{02} \\ d_{12} \\ d_{03} \\ d_{13} \\ d_{23} \end{array}\right)  
\end{align*}
That the solution vector indeed satisfies the 8 assumptions
follows (by explicit verification) from the triangle inequalities that $d$ satisfies 
and the relations due to judicious sorting,
so this is indeed a valid isometric embedding.
\\
{\boldmath$(ii)$} See end of \cref{sec:Euclid}.
\end{proof}

\begin{proposition}[All 5-point metrics scale-embed into $\BsdKs$]\label{prop:point5indK}\label{prop:K32indK}
All 5-point metrics sequence-embed into $\BsdKs$.
Furthermore all connected 5-node graphs with path distance except $K_{3,2}$ 
iso-embed into $(\SetR^3,||\cdot||_2)$ or $(\SetR^{10},||\cdot||_1)$.
\end{proposition}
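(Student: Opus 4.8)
The statement has two independent halves --- a metric half (every $5$-point metric sequence-embeds into $\BsdKs$) and a combinatorial-graph half --- and I would attack them separately. For the metric half the plan is to exploit that the whole machinery is \emph{additive} in the source metric. By \cref{thm:PSEinK}, if $d^{(1)},\dots,d^{(N)}$ are metrics on the same five points, each carrying a symmetric random scale-embedding into $\BsdKs$ with $\omega$-independent output lengths, then so does their sum: apply the product theorem to $\cV\times\dots\times\cV$ ($N$ factors) and restrict the embedding to the diagonal $\{(v,\dots,v)\}$, on which the product metric is exactly $\sum_t d^{(t)}$ (restricting a scale-embedding to a subset keeps it one). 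Derandomizing by \cref{thm:RandomInK} --- trivial here since $\cV$ is finite, so $\dot\cV=\cV$ --- then yields a genuine sequence-embedding. It therefore suffices to embed a generating set of the cone of $5$-point metrics, and here I would invoke the classical description of the extreme rays of the five-point metric cone $\mathrm{MET}_5$: up to permutation they are the $15$ cut semimetrics and the $10$ complete-bipartite ($K_{2,3}$-type) metrics. Thus every $5$-point metric is a non-negative combination $d=\sum_t\lambda_t\rho_t$ of these, with the rescaling of $\rho_t$ by $\lambda_t$ costing only block-length adjustments absorbed by $\equs$.

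The two building blocks are then embedded individually. A cut semimetric $\delta(S)$ is simply the Hamming metric on one coordinate ($i\mapsto[\![i\in S]\!]$), so it scale-embeds by \cref{thm:Ham2K}, and the Bernoulli$(\tfrac12)$-randomized version of that construction (as in \cref{cor:OneNormEinK}) has $\omega$-independent lengths. The crux is the $K_{2,3}$ metric, which by the pentagonal inequality is \emph{not} $\ell_1$-embeddable, so no coordinate/Hamming trick can work and one must use the genuinely sub-additive behaviour of $d_K$. Writing the $3$-side as $a_1,a_2,a_3$ and the $2$-side as $b_1,b_2$ (so $d(a_i,a_j)=d(b_1,b_2)=2$ and $d(a_i,b_k)=1$), I would use an XOR / pairwise-independence construction: take four $K$-independent $K$-random strings $R',R'',Q',Q''$ of length $\approx s$ and set
\begin{align*}
 x_{b_1}&:=R'R'',\quad x_{b_2}:=Q'Q'',\quad x_{a_1}:=R'Q',\quad x_{a_2}:=R''Q'',\\
 x_{a_3}&:=(R'\oplus R'')(Q'\oplus Q'').
\end{align*}
Because $R',R'',R'\oplus R''$ are pairwise independent yet each is uniform given another (and likewise for the $Q$'s), a direct computation gives $K(x_{a_i}\mid x_{a_j})\equs 2s$, $K(x_{b_1}\mid x_{b_2})\equs 2s$, and $K(x_{a_i}\mid x_{b_k})\equs K(x_{b_k}\mid x_{a_i})\equs s$, i.e.\ exactly $s$ times the $K_{2,3}$ metric. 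Replacing the fixed strings by Bernoulli$(\tfrac12)$ variables turns this into a symmetric random scale-embedding with deterministic lengths, as \cref{thm:PSEinK} requires.

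For the graph half the plan is a finite verification over the $21$ connected graphs on five vertices. For each I would exhibit whichever embedding is available: explicit coordinates in $(\SetR^3,||\cdot||_2)$ when the path metric is Euclidean (e.g.\ the path $P_5$, which sits on a line), and a cut/$\ell_1$ embedding into $(\SetR^{10},||\cdot||_1)$ otherwise (e.g.\ cycles such as $C_5=\tfrac12\sum_i\delta(\{i,i+1\})$, stars, and the equidistant $K_5$), using \cref{lem:EinHam} where convenient. The only graph resisting both is $K_{3,2}$, which I would exclude by two short inequalities: with $b=(1,1,1,-1,-1)$ supported on the two sides, the pentagonal (hypermetric) inequality reads $6+2\le 6$, which fails, so $K_{3,2}$ is not $\ell_1$; and with $b=(2,2,2,-3,-3)$ the negative-type sum $\sum_{i<j}b_ib_j\,d_{ij}^2=48>0$, so it is not of negative type and hence not Euclidean.

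I expect the main obstacle to be conceptual in exactly two places and otherwise organisational. The genuinely hard ingredients are (i) the extreme-ray decomposition of $\mathrm{MET}_5$ into cuts and $K_{2,3}$-metrics, which is what reduces the continuum of $5$-point metrics to finitely many embeddable prototypes, and (ii) the XOR construction for $K_{2,3}$, which must reproduce all ten conditional complexities simultaneously and be packaged as a symmetric random embedding. Once these are in place, the sum-of-metrics step, the derandomization, and the graph enumeration are routine; and it is worth noting that the $K_{3,2}$ embedding supplied by the metric half is precisely what the graph half shows to be impossible in either $\ell_1$ or Euclidean space, so this proposition is exactly where $d_K$ is seen to exceed both.
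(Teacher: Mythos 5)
Your proposal is correct and follows essentially the same route as the paper: decompose an arbitrary $5$-point semimetric into the extreme rays of the metric cone (the cut semimetrics plus the $K_{2,3}$-type metrics), embed the cuts trivially and $K_{2,3}$ by an XOR/pairwise-independence construction (the paper builds the analogous $K_{3,3}$ embedding, of which yours is the restriction), combine by independence and derandomize via \cref{thm:RandomInK}, and settle the graph half by enumerating the $21$ connected $5$-vertex graphs with $K_{3,2}$ excluded via the pentagonal and negative-type inequalities. The only cosmetic difference is that you route the conic combination through \cref{thm:PSEinK} restricted to the diagonal, where the paper sums the conditional entropies of independent blocks directly.
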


\begin{proof}
{\bf\boldmath $|G|=5$:} There are 21 undirected unlabelled connected graphs with 5 nodes.
It is straightforward to see by inspection that they either embed into $(\SetR^3,||\cdot||_2)$
(and hence $\ell_1$ by \cite[Ex.15.5.2b]{Matousek:02})
or into $(\SetB^4,d_H)$ or $(\SetB^4,\fr12 d_H)$ (cf.\ \cref{lem:EinHam}) and hence 
$(\SetR^d,||\cdot||_1)$, and $d\leq({5\atop 2})$ by \cite[Ex.15.5.2c]{Matousek:02}.
Their sequence-embedding into $\BsdKs$ now follows from \cref{cor:OneNormEinK}.

{\bf\boldmath $K_{3,3}$:}
Let $T,U,V,W$ be independent Bernoulli($\fr12$) random variables.
Define $\Phi_1(i)=X_i$, where $i\in\cV=\{1,2,3,4,5,6\}$ are the vertices in $K_{3,3}$
and $X_1=TU$, $X_2=VW$, $X_3=(T\oplus V)(U\oplus W)$, $X_4=TV$, $X_5=UW$,
and $X_6=(T\oplus U)(V\oplus W)$.
Then it is easy to see that $D_{ij}^1=-\lb P[X_i|X_j]=-\lb P[X_j|X_i]=(d_{K_{3,3}})_{ij}$ (Exercise).
Now if we use $s$ independent copies of this construction,
we get $\Phi_s(i)=X_i^1...X_i^s$ and $D_{ij}^s=s\cdot D_{ij}^1=s\cdot (d_{K_{3,3}})_{ij}$.
Finally we derandomize with \cref{thm:RandomInK}.

{\bf\boldmath General 5-point metrics:} 
An $m$-point metric $d_{ij}:=d_M(i,j)$ over $M=\{1,...,m\}$ is determined by 
$n=({m\atop 2})$ numbers $(d_{ij})_{1\leq i<i\leq m}$.
W.l.g.\ we may assume $\sum_{i<j} d_{ij}=1$.
Let $\Lambda_m:=\{d_M:d_{ij}\geq 0\wedge d_{ik}\leq d_{ij}+d_{kj}\forall i,j\}$ 
be the set of all semi-metrics (\cref{def:distance}).
If we regard $d$ as a vector in $\SetR^n$, 
then $\Lambda_m\subset\SetR^n$ is a $n-1$-dimensional convex polytope.
Every point in a polytope can be described by a convex combination of its corners $\cC$.
If we can represent each corner metric $d^c$ (called extreme ray metrics)
via a random vector $X^c$ with $H(X_i^c|X_j^c)=d_{ij}^c$,
and $X:=(X^c)_{c\in\cC}$ is a collection of independent random vectors, 
then $d_{ij}=\sum_{c\in\cC} \lambda_c d_{ij}^c = \sum_{c\in\cC} H(X_i^c|X_j^c) = H(X_i|X_j)$ for some $\lambda_{ij}\geq 0$.
So we only need to check iso-embeddability of the extreme ray semi-metrics.
For small $m$, we can find all corners of numerically.

For $m=5$, the $14$-dimensional polytope $\Lambda_m$ has $25$ extreme ray semi-metrics,
but only $3$ non-isometric ones \cite{Grishukhin:92}.
Among them are our good friend $K_{3,2}$ and two so-called 
cut metrics $\text{Cut}(1,4)$ and $\text{Cut}(3,2)$, 
which trivially embed via $X_1=U$ and $X_2=X_3=X_4=X_5=V$ in the former, 
and $X_1=X_2=U$ and $X_3=X_4=X_5=V$ in the latter case,
where $U,V$ are independent Bernoulli($\fr12)$.
\end{proof}

\section{\boldmath Negative Embeddability Results into $\BsdKs$}\label{sec:KraftIneq} 

In this section, after developing some general negative embeddability result 
based on the bounded neighborhood size of $d_K$, 
we will show that not all finite metrics scale-embed into $\BsdKs$.

In order to prove general negative results about embeddings of some $(\cV,d_\cV)$ into $\BsdKs$,
we need to look at special properties which $d_K$ has but $d_\cV$ lacks.
One key property implied by Kraft inequality is:
\begin{equation}\label{eq:Kraft}
    \sum_{x\in\SetB^*} 2^{-d_K^\vee(x,y)} ~\leq~1  
\end{equation}
It is also true for $2d_K^+$. So here the $\fr12$ in the definition is mildly annoying (which we ignore for now).
This implies that any $d_K$-ball of radius $r$ contains at most $2^r$ points:
$$
    B_r^K(x) ~:=~ \{y\in\SetB^*:d_K(x,y)\leq r\} ~~~\stackrel{\cref{eq:Kraft}}\Longrightarrow~~~ |B_r^K(x)| ~\leq~ 2^r
$$
Many but not all practical metrics for discrete spaces have finite neighborhood 
which also does not grow super-exponentially, and virtually all are (upper-semi)computable,
and $d_K$ minorizes these so-called \emph{admissible} distances \cite[Sec.IV]{Bennett:98}:

\begin{theorem}[$d_K$ minorizes all admissible distances $d_\cV$]\label{thm:minorize}
For any symmetric (upper-semi)computable distance $d_\cV$ with countable $\cV$
and $\sum_{v\in\cV} 2^{-d_\cV(v,v')} < \infty$,
and computable encoding $\phi:\cV\to\SetB^*$,
we have $d_K(\phi(v),\phi(v'))\leqa d_\cV(v,v')$ $\forall v,v'\in\cV$.
Conversely, $d_K\leqa d_\cV$ implies $\sum_{v\in\cV} 2^{-d_\cV(v,v')} < \infty$.
\end{theorem}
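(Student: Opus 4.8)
The plan is to treat the two implications separately: the minorization $d_K\leqa d_\cV$ by a coding (MDL) argument, and the converse by the Kraft-type inequality \eqref{eq:Kraft}.

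For the forward direction I would fix $v'$ and read $2^{-d_\cV(v,v')}$ as an unnormalized conditional weight on $v$. Because $d_\cV$ is symmetric and upper-semicomputable and $\phi$ is a computable encoding (taken injective with a computable inverse on its range), the pushed-forward weights $Q(x\mid y):=2^{-d_\cV(\phi^{-1}(x),\phi^{-1}(y))}$ (with $Q:=0$ off the range of $\phi$) are lower-semicomputable, and the hypothesis, which I read as a uniform bound $C:=\sup_{v'}\sum_v 2^{-d_\cV(v,v')}<\infty$, makes $P(\cdot\mid y):=Q(\cdot\mid y)/C$ a conditional subprobability, i.e.\ $\sum_x P(x\mid y)\le 1$ for every $y$. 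Now I apply the MDL bound \cref{lem:Kprop}$(iii)$ conditioned on $y=\phi(v')$: $K(\phi(v)\mid\phi(v'))\leqa -\lb P(\phi(v)\mid\phi(v'))+K(P)=d_\cV(v,v')+\lb C+K(P)$. Since $d_\cV,\phi,C$ are fixed problem data, $\lb C$ and $K(P)$ are $O(1)$ and get absorbed into $\leqa$, giving $K(\phi(v)\mid\phi(v'))\leqa d_\cV(v,v')$; by symmetry of $d_\cV$ the same holds with the roles of $v,v'$ exchanged, and taking the maximum and adding the constant $c$ yields $d_K(\phi(v),\phi(v'))\leqa d_\cV(v,v')$.

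The converse is short. Assuming $d_K(\phi(v),\phi(v'))\le d_\cV(v,v')+c'$ for all $v,v'$, monotonicity of $2^{-(\cdot)}$ gives $2^{-d_\cV(v,v')}\le 2^{c'}2^{-d_K(\phi(v),\phi(v'))}$, and since $\phi$ is injective the strings $\phi(v)$ are distinct, so
\[
\sum_{v\in\cV}2^{-d_\cV(v,v')}~\le~ 2^{c'}\sum_{x\in\SetB^*}2^{-d_K(x,\phi(v'))}~\le~ 2^{c'},
\]
the last inequality being exactly \eqref{eq:Kraft}. Hence $\sum_v 2^{-d_\cV(v,v')}<\infty$.

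I expect the only real obstacle to sit in the forward direction, and it is bookkeeping rather than conceptual: (a) justifying that the normalizing constant $C$ is finite \emph{uniformly} in $v'$, so that $\lb C$ stays $O(1)$ and does not leak a $v'$-dependence into the additive slack; and (b) that \cref{lem:Kprop}$(iii)$, stated for computable $P$, still applies to the merely lower-semicomputable $Q$. Point (b) is harmless: the coding theorem (domination of every enumerable semimeasure by the universal one up to a $2^{-K(\cdot)}$ factor) extends the MDL bound verbatim to lower-semicomputable semimeasures, so upper-semicomputability of $d_\cV$ is exactly the right hypothesis. The passage from $K(v\mid v')$ to $K(\phi(v)\mid\phi(v'))$ is likewise routine given computability of $\phi$ and its inverse.
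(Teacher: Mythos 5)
Your proposal is correct and follows essentially the same route as the paper: the forward direction via the conditional semimeasure $P(\cdot\mid\phi(v'))\propto 2^{-d_\cV(\cdot,v')}$ and the $x'$-conditioned MDL bound of \cref{lem:Kprop}$(iii)$, then symmetrizing; the converse via the Kraft-type inequality \eqref{eq:Kraft}. The uniformity and lower-semicomputability caveats you flag are exactly the points the paper's proof absorbs into its ``suitable choice of (computable) $c$'' and lexicographic tie-breaking for $\phi^{-1}$.
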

\begin{proof}
Let $P(x|x'):=c\cdot 2^{-d_\cV(v,v')}$ for the, say, lexicographically first $v,v'\in\cV$
for which $x=\phi(v)$ and $x'=\phi(v')$,
and $P(x|x'):=0$ if no such $v,v'$ exist.
This implies $\sum_{x\in\SetB^*}P(x|x')\leq 1$ for a suitable choice of (computable) $c$,
which makes $P$ lower-semicomputable.
An $x'$-conditioned version of \Cref{lem:Kprop}$(iii)$ now gives
$K(x|x')\leqa -\lb P(x|x')+K(P|x')\equa d_\cV(v,v')-\lb c$.
Taking the maximum of this and $(x,v)\leftrightarrow (x',v')$,
we get $d_K(\phi(v),\phi(v'))\leqa d_\cV(v,v')$ by symmetry of $d_\cV$.
For the converse, $\sum_v 2^{-d_\cV(v,v')}\leqa \sum_x 2^{-d_K(x,x')} \leq \sum_x 2^{-K(x|x')} \leq 1$.
\end{proof}

\Cref{thm:minorize,eq:Kraft} place some limitations 
on what metric spaces $(\cV,d_\cV)$ can be embedded into $\BsdKs$.
Consider the $d_\cV$-balls $B_r(v):=\{v'\in\cV:d(v,v')\leq r\}$. 
If $\phi_s$ is a scale-embedding, then for $s>1$ and large-enough $c$,
\begin{align}\nonumber
  & \phi_s(B_r(v)) ~\subseteq~ \{\phi_s(v'):d_K(\phi_s(v),\phi_s(v'))\leq s\cdot r+c\log s\} \\
  ~&\subseteq~ B_{sr+c\log s}^K(\phi_s(v)) 
  \text{~hence}~ |\phi_s(B_r(v))| ~\leq~ 2^{sr+c\log s} \label{eq:EmbedKraft}
\end{align}
but $\phi_s$ may not be injective (see discussion after \cref{thm:Interval2K}), 
so $|B_r(v)|>|\phi_s(B_r(v))|$ is still possible, so this bound is no constraint.
We can get a constraint by considering $\eps$-balls:

\begin{lemma}[Large neighborhoods prevent embeddings into $\BsdKs$]\label{lem:KraftIneq}
For some $\eps>0$, let $\dot\cV\subset\cV$ be the centers of non-overlapping $\eps$-balls in $(\cV,d_\cV)$.
If there exists (for some center $\dot v\in\dot\cV$ and some radius $r$) a $\dot\cV$-ball $\dot B_r(\dot v)\subseteq\dot\cV$ 
with more than $2^{sr+c\log s}$ points for $s>c\log s/2\eps$, 
then $(\cV,d_\cV)$ cannot be scale-embedded into $\BsdKs$.
\end{lemma}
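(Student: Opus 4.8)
The plan is to show that a scale-embedding would force too many points into a small $d_K$-ball, contradicting the Kraft-type bound \cref{eq:Kraft}. The key observation is that $\eps$-separated points in $\cV$ must map to \emph{distinct} strings under $\phi_s$ once $s$ is large enough, so for such points the non-injectivity loophole (discussed after \cref{thm:Interval2K} and in \cref{eq:EmbedKraft}) is closed. First I would record that the $\eps$-ball centers $\dot\cV$ are pairwise at distance $>\eps$, so for any two distinct $\dot v,\dot v'\in\dot\cV$ we have $s\cdot d_\cV(\dot v,\dot v')>s\eps$. By the defining property of a scale-embedding (\cref{def:scale_embed}), $d_K(\phi_s(\dot v),\phi_s(\dot v'))\geqs s\cdot d_\cV(\dot v,\dot v')$, i.e.\ it differs from the latter by at most $c'\log s+b'$ for constants independent of the points. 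Hence
\begin{equation*}
  d_K(\phi_s(\dot v),\phi_s(\dot v')) ~\geq~ s\eps - c'\log s - b' ~>~ 0
\end{equation*}
for $s$ large, so $\phi_s$ is injective on $\dot\cV$.

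Next I would apply the scale-embedding upper bound to the points inside the combinatorial ball $\dot B_r(\dot v)$. For every $\dot v'\in\dot B_r(\dot v)$ we have $d_\cV(\dot v,\dot v')\leq r$, and \cref{def:scale_embed} gives $d_K(\phi_s(\dot v),\phi_s(\dot v'))\leqs s\cdot r$, so as in \cref{eq:EmbedKraft} the image lands in the $d_K$-ball of radius $sr+c\log s$ around $\phi_s(\dot v)$:
\begin{equation*}
  \phi_s(\dot B_r(\dot v)) ~\subseteq~ B_{sr+c\log s}^K(\phi_s(\dot v))
\end{equation*}
where $c$ absorbs the scale-embedding slack constants. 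Because $\phi_s$ is injective on $\dot\cV\supseteq\dot B_r(\dot v)$, the image has exactly $|\dot B_r(\dot v)|$ points. Combining this with the Kraft bound $|B_R^K(x)|\leq 2^R$ from \cref{eq:Kraft} yields $|\dot B_r(\dot v)|\leq 2^{sr+c\log s}$, which directly contradicts the hypothesis that $\dot B_r(\dot v)$ contains more than $2^{sr+c\log s}$ points.

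The only real subtlety—and the main thing to get right—is the interplay of the two conditions on $s$. Injectivity on $\dot\cV$ requires $s\eps$ to dominate the logarithmic slack, which is exactly the stated condition $s>c\log s/2\eps$ (up to how one merges the $\equs$-constants of the upper and lower estimates into a single $c$); one must verify that the \emph{same} $c$ works for both the injectivity step and the Kraft-ball containment, so I would fix $c$ at the outset as an upper bound on all the relevant additive scale-embedding constants. Everything else is routine: the logarithmic-slack arithmetic and the two inequalities from \cref{def:scale_embed} are the same ones already used in \cref{eq:EmbedKraft}. Since the hypothesis asserts the violation holds for \emph{some} admissible $s$, producing that single $s$ suffices to rule out the embedding.
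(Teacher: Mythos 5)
Your proposal is correct and follows essentially the same route as the paper's proof: injectivity of $\phi_s$ on $\dot\cV$ from the separation of ball centers plus the scale-embedding lower bound, containment of $\phi_s(\dot B_r(\dot v))$ in a $d_K$-ball of radius $sr+c\log s$, and the Kraft bound $|B_R^K(x)|\leq 2^R$ to derive the contradiction. The only cosmetic difference is that the paper uses separation $2\eps$ between centers (matching the stated threshold $s>c\log s/2\eps$) where you use $\eps$, a constant-factor detail you already flag yourself.
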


\begin{proof}
Let $\phi_s$ be a scale-embedding from $(\cV,d_\cV)$ into $\BsdKs$.
Let $\dot\cV\subset\cV$ be the centers of non-overlapping $\eps$-balls in $\cV$, 
i.e.\ $B_\eps(\dot v)\cap B_\eps(\dot v')=\{\}$ $\forall \dot v\neq \dot v'\in\dot\cV$.
Let $\dot B_r(\dot v):=B_r(\dot v)\cap\dot\cV $ be balls in $\dot\cV$.
Then 
$$
    d_K(\phi_s(\dot v),\phi_s(\dot v')) ~\geq~ s\cdot d(\dot v,\dot v') -c\log s ~\geq~ s\cdot 2\eps-c\log s ~>~ 0 \ntc{~~~\text{if}~~~ s>(c\log s)/2\eps}
$$
\tc{if $s>(c\log s)/2\eps$.}
This implies that $\phi_{s|\dot\cV}$ is injective, so by \eqref{eq:EmbedKraft} every embedding into $d_K$ must satisfy
$$
    |\dot B_r(\dot v)| ~=~ |\phi_s(\dot B_r(\dot v))| ~\leq~ 2^{sr+c\log s} ~~~\text{for}~~~ s>(c\log s)/2\eps 
$$\iftwocol\\[-3ex]\else\\[-7ex]\fi
\end{proof}

Consider for example grid $\dot\cV=(2\eps\SetZ)^m\subset\SetR^m=\cV$ with $p$-norm. 
Then open $\eps$-balls centered at $\dot\cV$ are disjoint
(or use closed $\eps'$-balls with $\eps'$ slightly smaller than $\eps$),
and note that 
$$
    |\dot B_r^p(\dot v)| ~\leq~ |\dot B_r^\infty(\dot v)| ~\leq~ (\fr{r}{\eps}+1)^m ~\leq~ 2^{sr+c\log s}
$$
for sufficiently large $c$. This means Kraft's inequality in itself is no obstacle 
in potentially embedding $(\SetR^m,||\cdot||_p)$ nor $(\SetZ^m,||\cdot||_1)$.

\begin{open}[Can $(\SetZ^m,||\cdot||_1)$ be scale-embedded into $\BsdKs$?]\label{open:dKgrid}
\end{open}

From \cref{lem:EinHam}$(vii)$ and \cref{thm:Ham2K} 
we know that we can scale-embed any \emph{bounded} subset of $\SetZ^m$ into $\BsdKs$.
On the other hand, consider some infinite-dimensional (vector) spaces:

\begin{proposition}[Many $\infty$-dim.\ spaces do not scale-embed into $\BsdKs$]\label{prop:InfDimNotDK}
None of $\{0^i10^\infty:i\in\SetN_0\}\subset\SetB^\infty\subset{[0,1]}^\infty\subset\SetR^\infty$ 
with $1\leq p\leq\infty$-norm scale-embed into $\BsdKs$.
\end{proposition}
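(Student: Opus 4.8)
The plan is to recognize the set $E := \{e_i := 0^i 1 0^\infty : i\in\SetN_0\}$ as the standard basis vectors of $\SetR^\infty$, which form an \emph{infinite equidistant} set, and then apply the bounded-neighborhood obstruction. First I would compute the pairwise distances: for $i\neq j$ the difference $e_i-e_j$ has exactly two nonzero coordinates, $+1$ and $-1$, so $||e_i-e_j||_p = 2^{1/p}$ for $1\le p<\infty$ and $||e_i-e_j||_\infty = 1$. Writing $\delta$ for this common value (so $\delta=2^{1/p}$, resp.\ $\delta=1$), every pair of distinct points sits at the \emph{same} positive distance $\delta$, independent of $i,j$.

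I would then feed this equidistant structure into \cref{lem:KraftIneq}. Choose any $\eps$ with $0<\eps<\delta/2$, say $\eps:=\delta/3$; since distinct centers are $\delta>2\eps$ apart, the $\eps$-balls about the $e_i$ are pairwise disjoint, so the admissible set of centers is all of $\dot\cV:=E$. Now take radius $r:=\delta$ and center $e_0$: every $e_j$ lies within distance $\delta\le r$ of $e_0$, whence the $\dot\cV$-ball $\dot B_r(e_0)$ equals all of $E$ and is therefore infinite. In particular $|\dot B_r(e_0)|=\infty>2^{sr+c\log s}$ for every finite $s$, while the side condition $s>(c\log s)/2\eps$ holds for all large $s$. \cref{lem:KraftIneq} then forbids any scale-embedding of $E$ into $\BsdKs$.

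Equivalently, and perhaps more transparently, I could argue directly from the Kraft bound \eqref{eq:Kraft}. Suppose $\phi_s$ were a scale-embedding. By \cref{def:equs} the slack constants in $d_K(\phi_s(e_i),\phi_s(e_j))\equs s\delta$ are \emph{uniform} over all pairs, so for large $s$ the lower bound $d_K(\phi_s(e_i),\phi_s(e_j))\ge s\delta-c\log s-b>0$ holds for all $i\neq j$ at once, making $\phi_s$ injective on the entire infinite set $E$ at that single scale. The same uniformity gives a single radius $R:=s\delta+c\log s+b$ with $d_K(\phi_s(e_0),\phi_s(e_i))\le R$ for all $i$ simultaneously, so the whole infinite image $\phi_s(E)$ lies inside the ball $B_R^K(\phi_s(e_0))$. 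But by \eqref{eq:Kraft} that ball contains at most $2^R<\infty$ points, which is incompatible with $\phi_s(E)$ being infinite.

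The argument is short once equidistance is in hand, so there is no serious obstacle; the two delicate points both rest on the \emph{uniformity} of the $\equs$ constants guaranteed by \cref{def:equs}. It lets me upgrade the merely pairwise-for-almost-all-$s$ injectivity of \cref{lem:SEmbedBasic} to injectivity on all of $E$ at one fixed large $s$, and it is what makes $\delta$ a \emph{fixed} constant independent of the number of points, forcing the neighborhood $\dot B_\delta(e_0)$ to be genuinely infinite rather than just large. The same reasoning shows that \emph{any} infinite equilateral metric space fails to scale-embed into $\BsdKs$, which is the conceptual reason these $p$-norm sequence spaces are excluded.
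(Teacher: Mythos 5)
Your proposal is correct and follows essentially the same route as the paper: reduce to the infinite equidistant set of one-hot vectors, note that the pairwise distance $2^{1/p}$ (resp.\ $1$ for $p=\infty$) is a fixed positive constant so that the $\eps$-balls are disjoint and $\dot\cV=\cV$, observe that a single ball of fixed radius contains all infinitely many points, and invoke \cref{lem:KraftIneq} (the paper uses $\eps=\fr12$ and $r=2$ rather than your $\delta/3$ and $\delta$, an immaterial difference). Your second, ``direct'' argument is just an unfolding of that lemma's proof, so nothing new is needed.
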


\begin{proof}
It suffices to prove the statement for the subspace of one-hot vectors (or sequences)
$\cV=\{0^i10^\infty:i\in\SetN_0\}$.
It has disjoint open $\fr12$-balls due to $d_\cV(v,v')=2^{1/p}\geq 1$ $\forall v\neq v'$,
so $\dot\cV=\cV$ for $\eps=\fr12$.
Since also $d_\cV\leq 2=:r$, we also have $B_2(v)=\cV$.
Therefore $|\phi_s(B_2(v))|=|B_2(v)|=|\cV|=\infty>2^{sr+c\log s}$,
hence by \cref{lem:KraftIneq}, $\cV$ and therefore none of the spaces 
in the proposition scale-embed into $\BsdKs$.  
\end{proof}

That is, $d_K$ admits embeddings of (some) spaces 
of arbitrarily high finite dimension (\cref{thm:Ham2K}),
but not (some) infinite-dimensional embeddings.
$\BsdKs$ may be viewed as finite-but-unbounded dimensional,
similar to $\SetR^*:=\bigcup_{m=0}^\infty \SetR^m$.

Apart from limitations to finite-dimensional spaces,
$\BsdKs$ accommodating. 
Unfortunately the next result shatters any dream 
of $\BsdKs$ accommodating \emph{all} finite metrics.
\cref{prop:point5indK} does not extend to 6 points.
The brute-force proof may not be very satisfying,
but it does settle one of the most interesting questions.

\begin{proposition}[Not all finite metrics embed into $\BsdKs$]\hfil\par\label{prop:K33eindK}
\noindent $K_{3,3}$ with one edge removed does \emph{not} scale-embed into $\BsdKs$.
\end{proposition}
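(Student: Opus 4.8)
The plan is to reduce scale-embeddability into $\BsdKs$ to the feasibility of a finite system of Shannon (information) inequalities, and then to refute that system. First I pin down the metric: writing the two colour classes as $\{a_1,a_2,a_3\}$ and $\{b_1,b_2,b_3\}$ and deleting the edge $a_3b_3$, the path metric $d$ has $d(a_i,b_j)=1$ on every present edge, $d(a_i,a_j)=d(b_i,b_j)=2$ for same-side pairs, and the single long distance $d(a_3,b_3)=3$ (the only paths from $a_3$ to $b_3$ run $a_3$--$b_\ell$--$a_k$--$b_3$). I would suppose, for contradiction, that $d$ scale-embeds, giving strings $x_v=\phi_s(v)$ with $d_K(x_v,x_{v'})\equs s\cdot d(v,v')$ for all six points.

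The first real step is to pass from Kolmogorov complexity to Shannon entropy. Since $\cV$ is finite ($m=6$), \cref{cor:SE2randSE} converts $\phi_s$ into a random scale-embedding: there are random variables $X_v$ for which $-\lb P[X_v|X_{v'}]$ is constant in $\omega$ (the conditionals produced by \cref{thm:randomization} are uniform), hence equals the genuine entropy $H(X_v|X_{v'})$, and $\max\{H(X_v|X_{v'}),H(X_{v'}|X_v)\}\equs d_K(x_v,x_{v'})\equs s\cdot d(v,v')$ within $O(m^2\lb s)=o(s)$. Dividing the joint profile $\big(H(X_S)\big)_{S\subseteq\{a_1,\dots,b_3\}}$ by $s$ and passing to a convergent subsequence yields a rank function $H$ on the six-element ground set, lying in the almost-entropic region $\overline{\Gamma_6^*}$ (in particular a polymatroid), that realizes the metric exactly as an information distance: $\max\{H(u\mid v),H(v\mid u)\}=d(u,v)$ for all pairs (and the average form $\fr12[H(u|v)+H(v|u)]$ in the $d_K^+$ case). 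Thus a scale-embedding forces such an $H$ to exist.

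It then suffices to show that \emph{no} polymatroid $H$ realizes $d$ this way, since the polymatroid cone $\Gamma_6$ contains $\overline{\Gamma_6^*}$. This is a finite linear-programming feasibility question: the unknowns are the $2^6$ values $H(S)$, the constraints are the elemental Shannon inequalities (nonnegativity, monotonicity $H(S)\le H(S\cup\{i\})$, submodularity $H(S\cup i)+H(S\cup j)\ge H(S\cup i\cup j)+H(S)$), together with the affine distance equations $H(uv)-\min\{H(u),H(v)\}=d(u,v)$ (resp.\ the average form). I would certify infeasibility by exhibiting a nonnegative combination of these inequalities and equalities collapsing to $0>0$ --- the ``brute-force'' certificate --- ideally extracted by an ITIP-style Shannon-inequality solver and then transcribed by hand. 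A natural place to hunt for it is the tight geodesic structure at $a_3b_3$: both routes $b_3$--$b_1$--$a_3$ and $b_3$--$b_2$--$a_3$ are geodesics ($2+1=3$), so each of $x_{b_1},x_{b_2}$ must sit ``on the way'' from $b_3$ to $a_3$, forcing $I(x_{a_3};x_{b_3}\mid x_{b_\ell})\approx 0$, while simultaneously $a_1,a_2,a_3$ are mutually at distance $2$ and $b_3$ is adjacent to $a_1,a_2$; it is the interaction of these two families of tight triangles across all six points that over-constrains $H$.

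The main obstacle is precisely this last step: producing, and human-verifying, the explicit infeasibility certificate. I expect no shortcut from a small subconfiguration, because every $\le 5$-point sub-metric already embeds by \cref{prop:point5indK}, so the contradiction is a genuinely six-variable phenomenon and cannot be reduced to a $4$- or $5$-point case. A secondary subtlety is the compactness/normalization argument in the reduction --- one must bound the singleton complexities $K(x_v)$ so that the normalized profiles $H(X_S)/s$ stay in a compact set and a limit polymatroid genuinely exists; this is where the information must be argued to be carried within $O(s)$ bits and the $o(s)$ slack of \cref{cor:SE2randSE} invoked.
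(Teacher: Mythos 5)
Your overall strategy is exactly the paper's: pass from the hypothetical scale-embedding to random variables via the randomization machinery (the paper invokes \cref{thm:randomization} directly, you go through \cref{cor:SE2randSE}, which is the same step), so that the metric must be realizable, up to $o(s)$ per unit of scale, as an information distance by an entropic vector; then observe that it suffices to refute realizability over the larger Shannon cone $\Gamma_6$, which is a finite linear-programming feasibility question. Your identification of the metric (distance $3$ across the deleted edge, $2$ within colour classes, $1$ on edges), your rewriting of the max as $H(uv)-\min\{H(u),H(v)\}$, and your remark that the obstruction must be a genuinely six-point phenomenon because of \cref{prop:point5indK} all match the paper's framing.

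The genuine gap is that you never establish the one fact the whole argument hinges on: that $\Delta_G\cap\Gamma_6=\emptyset$ for $G=K_{3,3}\!\setminus\!e$. You describe where you would hunt for an infeasibility certificate, but you explicitly defer producing it; as it stands the proposal proves nothing, since for all you have shown the LP could be feasible (and for every $5$-point submetric the analogous system \emph{is} feasible). The paper closes this step by actually running the LP ($2^6$ variables, $432$ constraints for the average form, a $2^{15}$-fold case split to linearize the max for $d_I^\vee$, and an $\eps$-robustness check up to $0.05$ so that the infeasibility survives the $O(\log s)$ slack after scaling to $s\Delta_G\cap\Gamma_6^{s\eps}$). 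A secondary, fixable issue is your limit-polymatroid construction: to extract a subsequential limit of $H(X_S)/s$ you need all joint entropies bounded by $O(s)$, but the scale-embedding only controls the \emph{conditional} complexities $K(x_v|x_{v'})$, so the singleton levels $H(X_v)\approx K(x_v)$ need not be $O(s)$; the paper sidesteps this by staying at scale $s$ and using the $\eps$-expanded cone rather than taking a limit. Neither of these is a wrong turn, but without the computed (or hand-verified) infeasibility certificate the proof is incomplete at its decisive step.
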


\begin{proof}
{\bf\boldmath Entropic vectors and linear programming:}
(See \cite{Ho:20} for background and references therein for this method)
Consider metric $d_M$ over $M=\{1,...,m\}$.
Assume $d$ iso-embeds into $d_I^+$ (recall \cref{def:distance}).
That means there are random variables $X_1,...,X_m$
with $d_M(i,j)=H(X_i|X_j)+H(X_j|X_i)$.
Let $X_\alpha$ with $\alpha\subseteq M:=\{1,...,6\}$ be a subset of these variables
(cf.\ the proof of \cref{thm:randomization}).
We know that for any $\alpha,\beta,\gamma\subseteq M$,
the conditional mutual information $I(X_\alpha;X_\beta|X_\gamma)\geq 0$
is non-negative, the so-called Shannon-inequalities.
They can be converted into inequalities only involving linear combinations of joint entropies $H(X_\delta)$
for various $\delta\subseteq M$.
Let $h=(h_\delta)_{\delta\subset M}\in\SetR^{2^M}$ be the vector of all such joint entropies $h_\delta:=H(X_\delta)$.
It is easy to see that a vector $h$ satisfies all Shannon-type inequalities if and only if $h$ is a 
monotone ($h_{\alpha\cup\beta}\geq h_\alpha$) and 
submodular ($h_{\alpha\cup\beta}+h_{\alpha\cap\beta}\leq h_\alpha+h_\beta$) function.
The set of all such $h$ is denoted by $\Gamma_m$ and characterized 
(being a bit careful) by the $2^{m-1}m$ monotonicity inequalities and 
$2^{m-3}(m-1)m$ Shannon-type inequalities.
Our equality $({m\atop 2})$ constraints are 
$h\in\Delta_G:=\{h:h_{\{i,j\}}-h_{\{j\}}+h_{\{i,j\}}-h_{\{j\}}=d_G(i,j)\}$.
Any iso-embedding of $d_M$ would lead to an $h\in\Gamma_m\cap\Delta_G$.
If we can show that $\Delta_G\cap \Gamma_m$ is empty, 
this implies that such an imbedding is impossible
(The converse would not be true due to the existence of non-Shannon-type inequalities).
This is a linear programming problem with $2^m$ variables, 
and $2^{m-3}(m+3)m$ constraints.

{\bf\boldmath $K_{3,3}\!\setminus\!e\not\to d_I$:}
For $G=K_{3,3}\!\setminus\!e$ ($m=6$) , 
the linear program with 64 variables and 432 constraints has no solution (the computer says).
For $d_I^\vee$ we eliminate the max by considering $2^{15}$ $=$ vs $\leq$ linear programming cases.
None of the cases leads to a solution.
We also checked that there is no solution if allowing for some $\eps$-slack (up to $0.05$)
to rule out numerical concerns and asymptotic embeddings.
Let $\Gamma_m^\eps$ be the such $\eps$-expanded polyhedral cone.
That is, we have $\Delta_G\cap \Gamma_m^\eps=\emptyset$, 
and therefore also $s\Delta_G\cap \Gamma_m^{s\eps}=\emptyset$, 
which we need below.

{\bf\boldmath $d_K\to d_I$:}
Assume $G$ scale-embeds into $\BsdKs$,
that is $s\cdot d_G(i,j)\equs d_K(x_i,x_j)$ for some $x_j$.
By \cref{thm:randomization} there exist random variables $X_i$
such that $K(x_i|x_j)\equs-\lb P[X_i|X_j]$. 
Taking the expectation on both sides and symmetrizing we get 
$d_K(x_i|x_j)\equs d_I(X_i,X_j)$.
Therefore $|s\cdot d_G(i,j)-d_I(X_i,X_j)|=O(\log s)$.
For $G=K_{3,3}\!\setminus\!e$, we showed $s\Delta_G\cap \Gamma_m^{s\eps}=\emptyset$,
which implies $|s\cdot d_G(i,j)-d_I(X_i,X_j)|\geq s\eps$, a contradiction.
\end{proof}

\section{\boldmath Euclidean Properties of $d_K$ and $d_G$}\label{sec:Euclid}

One way to show that an embedding is impossible is to show 
that $d_K$ has certain properties that $d_\cV$ lacks.
The next theorem shows that $d_K$ is \emph{not} Euclidean,
but this does not imply that Euclidean metrics cannot be embedded,
since there may be subsets of $\SetB^*$ on which $d_K$ is Euclidean.
To show this we need to prove some lemmas first.

\begin{lemma}[Euclidean Metric properties]\label{lem:metric}\hfil\par\ntc{\vspace{-2ex}}
\begin{itemize}\parskip=0ex\parsep=0ex\itemsep=0ex
\item[$(i)$] If $d$ is a (Euclidean) metric, then also $d^\alpha(x,y):=d(x,y)^\alpha$ for $\alpha\in[0;1]$.
\item[$(ii)$] If $d\geq 0$ is symmetric and CND, then also $d^\alpha$ for $\alpha\in[0;1]$. %
\item[$(iii)$] $d$ is an Euclidean metric \emph{iff} $d$ is a (semi)metric and $d^2$ is CND.
\item[$(iv)$] A metric on $\cX$ is Euclidean \emph{iff} it is Euclidean on every finite subset of $\cX$.
\item[$(v)$] Every separable Euclidean space can be isometrically embedded into $\ell_2$.
\item[$(vi)$] If $(\cV,d_\cV)$ and $(\cW,d_\cW)$ are Euclidean, then $(\cV\times\cW,\sqrt{d_\cV^2+d_\cW^2})$ is Euclidean.
\item[$(vii)$] $(\SetR^m,||\cdot||_p^{p/2})$ for $p\in[0;2]$ and esp.\ $(\SetR^m,\sqrt{||\cdot||_1})$ is Euclidean for $m\leq\infty$.
\end{itemize}
\end{lemma}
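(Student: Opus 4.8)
The plan is to treat all seven parts as instances of classical Schoenberg theory and to organize the proof so that the two substantive facts, $(iii)$ and $(ii)$, are established first, with the remaining five parts deduced from them. I would begin with $(iii)$, the Schoenberg characterization. The forward direction is a direct expansion: writing $d(x,y)=\|\Phi(x)-\Phi(y)\|$ and expanding $\|\Phi(x_i)-\Phi(x_j)\|^2$ through the inner product, the constraint $\sum_i c_i=0$ annihilates the two diagonal sums $\sum_i c_i\|\Phi(x_i)\|^2$ and leaves $\sum_{ij}c_ic_j\,d(x_i,x_j)^2=-2\,\|\sum_i c_i\Phi(x_i)\|^2\le 0$, so $d^2$ is CND. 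For the converse I would fix a base point $x_0$ and set $k(x,y):=\tfrac12\bigl(d(x,x_0)^2+d(y,x_0)^2-d(x,y)^2\bigr)$; the key step is that $k$ is positive definite, which follows by applying the CND inequality for $d^2$ to the augmented point set $\{x_0,x_1,\dots\}$ with $c_0:=-\sum_{i\ge1}c_i$ so that all coefficients sum to zero. A PD kernel yields a Hilbert-space feature map $\Phi$ with $\langle\Phi(x),\Phi(y)\rangle=k(x,y)$, and then $\|\Phi(x)-\Phi(y)\|^2=k(x,x)-2k(x,y)+k(y,y)=d(x,y)^2$, using $d(x,x)=0$.

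For $(ii)$, which I expect to be the main obstacle, I would invoke Schoenberg's theorem that a symmetric $d$ is CND iff $e^{-t d}$ is positive definite for every $t>0$, together with the integral representation $u^\alpha=c_\alpha\int_0^\infty(1-e^{-tu})\,t^{-1-\alpha}\,dt$ for $u\ge0,\ \alpha\in(0,1)$, where $c_\alpha=\alpha/\Gamma(1-\alpha)>0$. The point is that when $\sum_i c_i=0$ the constant term drops out, so $\sum_{ij}c_ic_j(1-e^{-t d_{ij}})=-\sum_{ij}c_ic_j e^{-t d_{ij}}\le 0$ by positive-definiteness of $e^{-t d}$; integrating this nonpositive quantity against the positive weight $t^{-1-\alpha}$ preserves the sign and gives $\sum_{ij}c_ic_j\,d^\alpha_{ij}\le0$, i.e.\ $d^\alpha$ is CND. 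The endpoints $\alpha\in\{0,1\}$ are trivial. With $(ii)$ and $(iii)$ in hand, $(i)$ follows: the metric claim is the elementary subadditivity $(a+b)^\alpha\le a^\alpha+b^\alpha$ (concavity of $t\mapsto t^\alpha$ with value $0$ at $0$) combined with (TI); and the Euclidean claim follows since $d$ Euclidean $\Rightarrow d^2$ CND (by $(iii)$) $\Rightarrow (d^2)^\alpha=(d^\alpha)^2$ CND (by $(ii)$) $\Rightarrow d^\alpha$ Euclidean (by $(iii)$ again, now that $d^\alpha$ is known to be a metric).

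The remaining parts are short. For $(iv)$ I would note that CND is by definition a condition quantified over finite point sets, so "$d^2$ is CND on every finite subset" is literally "$d^2$ is CND"; combined with $(iii)$ this yields the equivalence. For $(v)$, the embedding $\Phi$ sends a separable space to a separable subset of a Hilbert space, whose closed linear span is a separable Hilbert space and hence isometrically isomorphic to $\ell_2$. For $(vi)$ I would take the orthogonal direct sum $\Theta(v,w):=(\Phi(v),\Psi(w))\in H_\cV\oplus H_\cW$, for which $\|\Theta(v,w)-\Theta(v',w')\|^2=d_\cV(v,v')^2+d_\cW(w,w')^2$ by additivity of the direct-sum inner product.

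Finally $(vii)$ reduces to $(i)$, $(ii)$, $(iii)$. The squared distance is $d(x,y)^2=\sum_i|x_i-y_i|^{p}$. Since $(a,b)\mapsto|a-b|^2$ is the square of the Euclidean metric on $\SetR$, it is CND by $(iii)$; as $p/2\in[0,1]$, part $(ii)$ gives that $|a-b|^{p}=(|a-b|^2)^{p/2}$ is CND on $\SetR$. Pulling back along each coordinate projection preserves CND, and a (possibly infinite, convergent on the subspace where $\|\cdot\|_p<\infty$) sum of CND kernels is CND, so $d^2$ is CND. Moreover $d$ is a semi-metric by $(i)$, handling $p\le1$ because $\sum_i|x_i-y_i|^p$ is itself a metric and $p\ge1$ because $d=(\|\cdot\|_p)^{p/2}$ is a concave power of a norm. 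Then $(iii)$ delivers that $d$ is Euclidean, with the special case $p=1$ giving $\sqrt{\|\cdot\|_1}$.
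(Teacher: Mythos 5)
Your proposal is correct and follows essentially the same architecture as the paper's proof: establish the Schoenberg characterization $(iii)$ via the base-point kernel $\tfrac12[d(x,x_0)^2+d(y,x_0)^2-d(x,y)^2]$, and derive $(i)$, $(iv)$--$(vii)$ from CND calculus. The only substantive difference is that for $(ii)$ the paper simply cites Berg--Christensen--Ressel, whereas you supply the standard proof via Schoenberg's exponential criterion and the integral representation of $u^\alpha$; your direct-sum argument for $(vi)$ and coordinate-wise CND summation for $(vii)$ are minor, equivalent variants of the paper's route through $(iii)$ and induction.
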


Item $(v)$ justifies calling spaces embeddable into Hilbert space `Euclidean', 
since in the separable case, one can always choose $\ell_2$ with the common Euclidean distance
and this even generalizes to non-separable spaces.
That $(i)$ $\sqrt{||\cdot||_2}$ and $(vii)$ even $\sqrt{||\cdot||_1}$  
are Euclidean (in this sense) is quite remarkable.
To give an intuition of the latter, 
map $v\in\{0,\fr1n,\fr2n,...,\fr{n-1}n,1\}^m$ to $x_v\in(\fr1n\SetB)^{m\cdot n}$
via $\phi(v)_i:=\fr1n(0^{v\cdot n}1^{n-v\cdot n})$.
Then $\sqrt{||v-v'||_1}=\sqrt{||x_v-x_{v'}||_1}=\sqrt{n}||x_v-x_{v'}||_2$, which is of course Euclidean.
By choosing $n$ large enough, we can hence \emph{approximately} embed $([0,1]^m,\sqrt{||\cdot||_1})$ into Euclidean space. 
Unfortunately we cannot take the limit $n\to\infty$, so $(vii)$ needs to be proven differently.

\begin{proof}{\bf(sketch)}
{\boldmath$(i)$} (P,Z,N,S) are obviously preserved under any power $\alpha\geq 0$.
(TI) is preserved, since $x^\alpha$ is concave for $\alpha\leq 1$, 
implying $(a+b)^\alpha\leq a^\alpha+b^\alpha$.
That the Euclidean property is preserved follows from $(ii)$ and $(iii)$. 
\\
{\boldmath$(ii)$} See \cite[Cor.3.2.10]{Berg:84}.
\\
{\boldmath$(iii)$} This is Schoenberg's embedding theorem \cite[Thm.9.6]{Paulsen:16}. 
The proof is instructive, so we present it here, but gloss over considerations related to (N). 
\cite[Ex.3.2.13b]{Berg:84}: If $d$ is Euclidean,
then $d^2(x,y)=||x-y||_2^2=\langle x,x\rangle+\langle y,y\rangle -2\langle x,y\rangle$ 
for some inner product $\langle\cdot,\cdot\rangle$.
Inner products are PD, hence $-\langle x,y\rangle$ is CND.
Any CND remains CND if adding any function that 
depends only on $x$ or $y$, hence $d^2$ is CND too.
For the other direction \cite[Prop.3.3.2]{Berg:84}: 
Assume $d$ is a (semi)metric on $\cX$ and $d^2$ is CND, and choose any $x_0\in\cX$.
Then it is easy to see that $\phi(x,y):=\fr12[d(x,x_0)^2+d(y,x_0)^2-d(x,y)^2]$ is PD \cite[Lem.3.2.1]{Berg:84}
(Plug $c_0:=-c_1-...-c_m$ into $\sum_{i,j=0}^m c_i c_j d(x_i,x_j)^2\leq 0$).
Therefore, $\phi$ can be represented as an inner product 
$\phi(x,y)=\langle\varphi_x,\varphi_y\rangle$ with vector $\varphi_x:=\phi(x,\cdot)$.
Elementary calculus now shows $||\varphi_x-\varphi_y||_2^2=\langle \varphi_x,\varphi_x\rangle+\langle \varphi_y,\varphi_y\rangle -2\langle \varphi_x,\varphi_y\rangle = ...= d(x,y)^2$.
Therefore $d$ is Euclidean. 
\\
{\boldmath$(iv)$}
The only if direction is trivial. The if direction follows from $(iii)$:
If $d$ is (E) on every finite subset, then $d^2$ is CND on every finite subset,
therefore it is CND on all of $\cX$ (by the mere definition of CND), 
therefore $d$ is (E) on $\cX$.
\\
{\boldmath$(v)$} By definition, if $(\cV,d)$ is Euclidean, it can be embedded into a Hilbert space $H$.
Since $\cV$ is assumed separable, and all isometric embeddings are continuous (indeed 1-Lipschitz),
$H$ is separable too, and every separable Hilbert space is isometrically isomorphic to $\ell_2$.
\\
{\boldmath$(vi)$} By $(iii)$, $d_\cV^2$ and $d_\cW^2$ are CND, hence $d_\cV^2$+$d_\cW^2$ is CND,
so $\sqrt{d_\cV^2+d_\cW^2}$ is Euclidean, again by $(iii)$.
\\
{\boldmath$(vii)$} $(\SetR,|\cdot|$) is Euclidean, so by $(i)$, 
also $(\cV,d_\cV):=(\SetR,|\cdot|^{p/2})$ and $(\cW,d_\cW):=(\SetR,|\cdot|^{p/2})$ for $p\leq 2$. 
Now by $(vi)$, also $\sqrt{d_\cV^2(v,v')+d_\cW^2(w,w')}=\sqrt{|v-v'|^p+|w-w'|^p}=\sqrt{||u-u'||_p^p}$,
is Euclidean, where $u:=(v,w)$ and $u':=(v',w')$. Therefore $(\SetR^2,\sqrt{||\cdot||_p^p})$ is Euclidean.
Now repeat for finite $m$ and take the limit for $m=\infty$.
\end{proof}

\begin{theorem}[$d_K^\alpha$ is not Euclidean for $\alpha>0.3$]\label{thm:dKnotE} 
\end{theorem}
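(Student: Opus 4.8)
The plan is to use the conditional-negative-definiteness characterization of Euclidean metrics. If $d_K^\alpha$ were Euclidean on $\SetB^*$, then for any finite collection $x_1,\dots,x_n\in\SetB^*$ the matrix with entries $d_K(x_i,x_j)^{2\alpha}$ would be a squared Hilbert-space distance matrix, hence CND; this is exactly the forward direction of \cref{lem:metric}$(iii)$ combined with the finite-subset criterion \cref{lem:metric}$(iv)$. So it suffices to exhibit \emph{one} finite point set in $\SetB^*$ on which $d_K^{2\alpha}$ fails to be CND. The Hamming cube of \cref{thm:Ham2K} is useless for $\alpha<\tfrac12$, since $d_H^{2\alpha}$ is CND for all $2\alpha\le1$ by \cref{lem:metric}$(ii)$; I therefore need a configuration with genuinely worse Euclidean behaviour, and $K_{3,3}$ is the natural candidate, being a small ``non-$\ell_1$-like'' graph metric that nonetheless scale-embeds into $\BsdKs$ by \cref{prop:point5indK}.

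Concretely, I would take the six strings $x_1^s,\dots,x_6^s$ produced by the $K_{3,3}$ scale-embedding, so that $d_K(x_i^s,x_j^s)\equs s\cdot d_{K_{3,3}}(i,j)$, where the graph metric assigns $1$ to vertices in different parts and $2$ to distinct vertices in the same part. Since every off-diagonal distance is at least $1$, the additive $O(\log s)$ slack becomes a relative error of size $O((\log s)/s)$, so $s^{-2\alpha}d_K(x_i^s,x_j^s)^{2\alpha}\to d_{K_{3,3}}(i,j)^{2\alpha}$ as $s\to\infty$. The CND cone is closed (it is cut out by the non-strict inequalities $\sum_{i,j}c_ic_j M_{ij}\le0$ over $\sum_i c_i=0$) and invariant under multiplication by the positive scalar $s^{-2\alpha}$. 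Hence the assumed CND-ness of each normalized matrix would pass to the limit matrix $M^*_{ij}:=d_{K_{3,3}}(i,j)^{2\alpha}$, which would then be CND.

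It remains to refute CND of $M^*$. Writing $t:=2^{2\alpha}$ and testing the balanced vector $c=(1,1,1,-1,-1,-1)$, which satisfies $\sum_i c_i=0$, the within-part ordered pairs contribute $12t$ (six in each part, each with $c_ic_j=1$ and entry $t$) and the cross-part pairs contribute $-18$ (eighteen pairs with $c_ic_j=-1$ and entry $1$), giving $c^{\top}M^* c = 12\cdot 2^{2\alpha}-18$. This is strictly positive exactly when $2^{2\alpha}>3/2$, i.e.\ $\alpha>\tfrac12\log_2\tfrac32\approx0.2925$. Optimizing over all $c$ with $\sum_i c_i=0$ reduces by the bipartite symmetry to precisely this balanced vector (the block structure $M^*=(t-1)D+J-tI$ with $D$ the within-part all-ones matrix gives $c^{\top}M^*c=2S_A^2(2t-3)/3$ on $\mathbf 1^{\perp}$), so the threshold is sharp for $K_{3,3}$. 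As $0.2925<0.3$, for every $\alpha>0.3$ the limit matrix is not CND, contradicting the assumption, and hence $d_K^\alpha$ is not Euclidean.

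The main obstacle is the passage to the limit: the $O(\log s)$ slack inherent in any scale-embedding must be neutralized. This is exactly why I normalize by $s^{2\alpha}$ and exploit that all scaled distances are bounded below by $1$, turning the additive slack into a vanishing relative error; the remaining ingredient is the routine but essential closedness and positive-scale-invariance of the CND cone. Everything else — the block structure of $M^*$ and the choice of test vector — is a direct computation.
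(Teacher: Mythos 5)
Your proposal is correct and follows essentially the same route as the paper: the paper's proof of the $\alpha>0.3$ case likewise combines the $K_{3,3}$ scale-embedding from \cref{prop:K32indK} with the failure of conditional negative definiteness of $d_{K_{3,3}}^{2\alpha}$ (\cref{thm:dGnotE}, which for $m=n=3$ gives exactly your threshold $4^\alpha>\fr32$, i.e.\ $\alpha>\log_4\fr32\approx0.292$). Your explicit treatment of the $O(\log s)$ slack via normalization by $s^{2\alpha}$ and closedness of the CND cone is a welcome rigorization of a step the paper passes over quickly.
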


\begin{proof}
The last case ($\alpha>0.3$) would already imply the result,
but it is instructive to first prove the easier cases $\alpha=1$ and $\alpha>\fr12$ separately.
\\
{\boldmath\bf $\alpha>1$:} In this case, $d_K$ is not even a metric:
Consider the $K$-random and $K$-independent strings $x_{1:2s}$ and $z_{1:2s}$,
and dependent $y_{1:2s}:=x_{1:s}z_{1:s}$, then 
$$
  K(x_{1:2s}|z_{1:2s}) ~\equs~ 2s ~=~ s+s ~\equs~ K(x_{1:2s}|y_{1:2s})+K(y_{1:2s}|z_{1:2s})
$$
and similarly for $x\leftrightarrow z$, hence this also holds for $d_K$.
If $c=a+b$ then $c^\alpha>a^\alpha+b^\alpha$ for $\alpha>1$ if $a,b>0$.
Similarly if $c_s\equs a_s+b_s$, 
then $c_s^\alpha>a_s^\alpha+b_s^\alpha$ for sufficiently large $s$,
hence $d_K^\alpha$ violates (TI) on these 3 strings.
\\
{\boldmath\bf $\alpha=1$:} 
Consider the set of 4 strings $\cX=\{000,~001,~010,~100\}$ with Hamming distance $d_H$.
The resulting distance matrix is
$$
  d_H ~=~ \small\arraycolsep=3pt\left(\begin{array}{c|ccc}
    0 & 1 & 1 & 1 \\ \hline
    1 & 0 & 2 & 2 \\
    1 & 2 & 0 & 2 \\
    1 & 2 & 2 & 0 \\
  \end{array}\right)
$$
Consider the element-wise square $M_{ij}:=(d_H)_{ij}^2$.
For $c=(3,-1,-1,-1)$ (which sums to $0$) we get $c^\top M c=6>0$,
so $M$ is not CND. 
\cref{lem:metric}$(iii)$ implies that $d_H$ is not Euclidean.
We can scale-embed $d_H$ into $\BsdKs$ via \cref{thm:Ham2K}.
Hence $d_K$ is also not Euclidean (not even if we allowed $O(1)$ slack).
\\
{\boldmath\bf $\alpha>\fr12$:}
We can generalize the above proof from $m:=|\cX|$ ($=4$ above) to larger 
$\cX=\{0^{m-1},~0^{m-2}1,...,~10^{m-2}\}$ in the obvious way.
Then metric $(d_H)_{1j}=1$ for $j>1$ and $(d_H)_{ij}=2$ for $1<i\neq j>1$,
and of course $(d_H)_{ii}=0~\forall i$.
Consider the element-wise power $M_{ij}:=(d_H)_{ij}^{2\alpha}$.
For $c=(m-1,-1,...,-1)\in\SetR^m$ (which sums to $0$) we get 
$$
  c^\top\! M c ~=~...~=~ (m-1)[(m-2)2^{2\alpha}-2(m-1)] 
  ~>~ 0 \ntc{~~~\text{iff}~~~ 2^{2\alpha-1} ~>~ \fr{m-1}{m-2}}
$$
\tc{iff $2^{2\alpha-1}>\fr{m-1}{m-2}$.}
For any $\alpha>\fr12$ we can choose $m$ large enough 
such that the condition is satisfied,
so $M$ is not CND in this case, 
hence $d_H^\alpha$ is not Euclidean by \cref{lem:metric}$(iii)$,
hence $d_K^\alpha$ is not Euclidean by \cref{thm:Ham2K}.
\\
{\boldmath\bf $\alpha>0.3$:}
By \cref{thm:dGnotE}, $d_G^\alpha$ for the path metric $d_G$ 
of the complete bipartite graph $G=K_{3,3}$ 
is not Euclidean for $4^\alpha>\fr32$,
but by the proof of \cref{prop:K32indK} $(G,d_G)$ scale-embeds into $\BsdKs$,
hence $(G,d_G^\alpha)$ scale-embeds into $(\SetB^*,d_K^\alpha)$,
hence $d_K^\alpha$ cannot be Euclidean either for $\alpha>\log_4\fr32=0.292...$.
\end{proof}

While $d_K$ is not Euclidean on all of $\SetB^*$, 
it can accomodate any finite Euclidean point set:

\begin{theorem}[Every finite Euclidean point set sequence-embeds into $d_K$]\label{prop:finEtoDK}
\end{theorem}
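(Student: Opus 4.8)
The plan is to reduce everything to the already-established $\ell_1$-embedding result \cref{cor:OneNormEinK}, exploiting the classical fact that finite Euclidean metrics are $\ell_1$-metrics. Let $\{v_1,\dots,v_m\}$ be the given finite Euclidean point set, living in some Hilbert space, with metric $d_{ij}:=\|v_i-v_j\|_2$. First I would observe that $m$ points of a Hilbert space span an affine subspace of dimension at most $m-1$, so without loss of generality the $v_i$ lie in $(\SetR^{m-1},\|\cdot\|_2)$ with the ordinary Euclidean distance.

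The key external input is that $(\SetR^{m-1},\|\cdot\|_2)$ embeds isometrically into $L_1$ (equivalently $\ell_1$) by \cite[Ex.15.5.2b]{Matousek:02}; concretely $\|x-y\|_2=\sqrt{\pi/2}\,\E_g|\langle g,x-y\rangle|$ for a standard Gaussian vector $g$, which realizes the Euclidean distance as the $L_1$-distance of the functions $g\mapsto\langle g,x\rangle$ (the constant $\sqrt{\pi/2}$ is irrelevant, since scale-embeddings ignore scale). Because our point set is finite, its image is a finite subset of $L_1$, and any finite subset of $L_1$ embeds isometrically into a finite-dimensional $(\SetR^N,\|\cdot\|_1)$ with $N\leq\binom{m}{2}$ by \cite[Ex.15.5.2c]{Matousek:02} (the cut-cone/Carath\'eodory argument). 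This is exactly the route already used for the 5-node graphs in the proof of \cref{prop:point5indK}.

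Finally, a finite point set is bounded, so \cref{cor:OneNormEinK} applies directly: any bounded subset of $(\SetR^N,\|\cdot\|_1)$ sequence-embeds into $\BsdKs$. I would then compose the \emph{exact} isometry $\iota:\{v_i\}\hookrightarrow(\SetR^N,\|\cdot\|_1)$ with the sequence-embedding $\psi_s$ of the $\ell_1$ image. Since $\iota$ is exact, $d_K(\psi_s(\iota v_i),\psi_s(\iota v_j))\equs s\cdot\|\iota v_i-\iota v_j\|_1=s\cdot d_{ij}$, so $\psi_s\circ\iota$ satisfies \cref{def:equs} and is the desired sequence-embedding of the Euclidean point set into $\BsdKs$.

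The proof is essentially an assembly of prior results, so I do not expect a genuine technical obstacle; the only real content sits in the two cited $\ell_1$-facts. The conceptual point to keep in mind — and where intuition could mislead — is that $d_K$ itself is \emph{not} Euclidean (\cref{thm:dKnotE}) and that infinite-dimensional Euclidean spaces need \emph{not} embed (recall \cref{prop:InfDimNotDK}); finiteness is used twice and essentially, first to force the $\ell_1$ target dimension $N$ to stay finite, and second to guarantee boundedness so that \cref{cor:OneNormEinK} is applicable. The main thing to verify carefully is therefore just that $N$ remains finite and the image bounded.
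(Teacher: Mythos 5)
Your proposal is correct and follows essentially the same route as the paper: Euclidean $\to\ell_1$ (Gaussian/stable embedding), then finite subset of $\ell_1$ $\to(\SetR^{\binom{m}{2}},\|\cdot\|_1)$, then boundedness plus \cref{cor:OneNormEinK}. The only differences are cosmetic (you spell out the Gaussian representation and the reduction to $\SetR^{m-1}$, and your Matou\v{s}ek exercise number differs trivially from the one the paper cites).
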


\begin{proof}
By \cite[Ex.15.5.5b]{Matousek:02},
$\ell_2$ isometrically embeds into $\ell_1$.
By \cite[Ex.15.5.2c]{Matousek:02} every finite subset $\cV$ of $\ell_1$ of size $|\cV|=m$
iso-embeds into $(\SetR^{(\!{m\atop 2}\!)},||\cdot||_1)$,
and indeed into $[a,b]^{(\!{m\atop 2}\!)}$ for some $-\infty<a<b<\infty$, since $\cV$ is finite.
Finally $[a,b]^{(\!{m\atop 2}\!)}$ sequence-embeds into $\BsdKs$ by \cref{cor:OneNormEinK}.
\end{proof}

Despite most AIT results suffering from additive slacks,
we were able to define $d_K$ to be an exact metric without any slack.
We can also ask whether $d_K$ is exactly Euclidean on some subspace of $\SetB^*$.

\begin{proposition}[$d_K$ is exactly Euclidean on some finite subsets of $\SetB^*$]\label{prop:dKEsubset}
There are arbitrarily large finite subsets of $\SetB^*$ on which $d_K$ is Euclidean, 
and the points in these sets are mapped to linearly independent vectors.
\end{proposition}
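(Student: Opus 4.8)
The plan is to find, for each large $N$, a set of $M=M(N)$ strings that is \emph{almost} equidistant under $d_K$, and to exploit the robustness of the conditionally-negative-definite (CND) characterization of Euclidean metrics in \cref{lem:metric}$(iii)$: an exactly equidistant set (regular simplex) is Euclidean, and any sufficiently small perturbation of its squared-distance matrix stays CND. Exact equidistance cannot be arranged in $d_K$ because of the ubiquitous $O(\log)$ slacks, but near-equidistance will suffice. So I would fix a $K$-random string $z\in\SetB^{MN}$ (so $K(z)\geqa MN$, which exists by counting) with $M:=\lfloor N/(C\log N)\rfloor$ for a constant $C$ fixed below, and let $x_1,\dots,x_M$ be its consecutive length-$N$ blocks; these are pairwise distinct since equal blocks would compress $z$.

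First I would check that $d_K(x_i,x_j)=\bar N\pm O(\log N)$ for all $i\neq j$, where $\bar N:=N+c$. The upper bound $K(x_i|x_j)\leqa\ell(x_i)+O(\log N)$ is \cref{lem:Kprop}$(i,ii)$, and the lower bound $K(x_i|x_j)\geqs N-O(\log N)$ follows from $K(x_i,x_j)\geqs 2N-O(\log N)$ (reconstruct $z$ from $x_i,x_j$, their two indices in $\{1,\dots,M\}$, and the remaining $M-2$ blocks) together with $K(x_i,x_j)\equa K(x_j)+K(x_i|x_j)$. Hence the distance matrix is $D_{ij}=\bar N+\eps_{ij}$ with $|\eps_{ij}|\leq w=O(\log N)$ off the diagonal.

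The core step is the CND estimate. For any $c_1,\dots,c_M$ with $\sum_i c_i=0$, writing $D_{ij}^2=\bar N^2(1-\delta_{ij})+r_{ij}$ with $|r_{ij}|\leq 2\bar N w+w^2=:\eta$ off-diagonal and $r_{ii}=0$, the all-pairs term telescopes (using $\sum_i c_i=0$) to $-\bar N^2\sum_i c_i^2$, while the remainder is controlled by Cauchy--Schwarz:
\[ \sum_{i,j} c_i c_j D_{ij}^2 ~=~ -\bar N^2\sum_i c_i^2 + \sum_{i\neq j} c_i c_j r_{ij} ~\leq~ \big(\eta M-\bar N^2\big)\sum_i c_i^2. \]
Since $\eta\leq 3\bar N w$ (as $w\leq\bar N$) and $M\leq N/(C\log N)$, choosing $C$ large makes $\eta M<\bar N^2$, so the bracket is a fixed negative constant and the form is \emph{strictly} negative for every nonzero admissible $c$. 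Thus $D^2$ is strictly CND, so by \cref{lem:metric}$(iii)$ the metric $d_K$ restricted to $\{x_1,\dots,x_M\}$ is Euclidean; and as $N\to\infty$ we have $M=\lfloor N/(C\log N)\rfloor\to\infty$, yielding arbitrarily large Euclidean subsets.

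Linear independence would then come for free from \emph{strictness}: strict CND means the embedding furnished by \cref{lem:metric}$(iii)$ places the $M$ points in affinely independent position in $\SetR^{M-1}$, and appending one common nonzero coordinate lifts them to $M$ linearly independent vectors in $\SetR^{M}$ without changing any distance. I expect the main obstacle to be exactly the interplay inside the CND estimate: one must simultaneously keep the fluctuation window $w$ down to $O(\log N)$ (which forces $z$ to be a single random string of length only $MN\leq N^2$, so that block indices cost just $O(\log N)$) while letting the point count $M$ grow, and the tolerance $\eta M<\bar N^2$ is precisely what couples the two and pins down the achievable growth rate $M=\Theta(N/\log N)$.
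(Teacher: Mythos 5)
Your proof is correct, and at its core it follows the same route as the paper's: build a nearly equidistant point set out of $K$-random strings and exploit the fact that a small perturbation of the equidistant ($d_{01}$, regular-simplex) metric remains Euclidean. The differences are in execution. The paper fixes the number of points $m$, takes length-$s$ prefixes of $m$ pairwise $K$-independent $K$-random sequences so that $d_K/s\to d_{01}$, and disposes of the perturbation step by asserting that a small distortion of $d_{01}$ still embeds into a slightly distorted simplex (left as an exercise); linear independence is read off from the simplex corners being unit vectors. You instead cut a single $K$-random string into $M$ blocks, verify near-equidistance with explicit $O(\log N)$ windows, and carry out the perturbation step quantitatively through Schoenberg's criterion (\cref{lem:metric}$(iii)$): the telescoping of the $\bar N^2$ term on the hyperplane $\sum_i c_i=0$ plus a Cauchy--Schwarz bound on the error matrix gives \emph{strict} CND whenever $\eta M<\bar N^2$. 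This buys two things the paper's proof does not make explicit: a concrete growth rate $M=\Theta(N/\log N)$ for the size of the Euclidean subset as a function of string length, and a clean derivation of linear independence (strict CND implies affine independence of the embedded points, then lift by appending a common nonzero coordinate) in place of the distorted-simplex exercise. The one point worth flagging is the non-circularity of your constants: the fluctuation window $w\leq c'\log N$ must not depend on the constant $C$ chosen in $M=\lfloor N/(C\log N)\rfloor$; it does not, since the index and length encodings cost $O(\log N)$ uniformly once $M\leq N$, so $C$ can indeed be fixed after $c'$ is known.
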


Without the independence qualifier, the proposition would already follow from \cref{thm:Interval2K},
since $\SetR^1$ is an Euclidean space ($|\cdot|=||\cdot||_2$), and $[0,1]$ is infinite.
Linear independence means that the points span a space of arbitrarily large finite dimension.

\begin{proof}
Let $x^1,...,x^m\in\SetB^\infty$ be 
$m$ (pairwise) $K$-independent $K$-random binary sequences,
i.e.\ $K(x^i_{1:s}|x^j_{1:s}) \equs s$ for $i\neq j$.
Hence $d_K(x^i_{1:s},x^j_{1:s})/s\to[\![i\neq j]\!]=d_{01}(i,j)$.
Let $\Delta=\{e^1,...,e^m\}$ be the corners of a standard $m-1$-simplex in $\SetR^m$, 
where $e^i$ is the $i$th unit vector. Then $||e^i-e^j||_2/\sqrt{2}=d_{01}(i,j)$, which is Euclidean.
Hence $\phi_s(e^i)=x_{1:s}^i$ sequence-embeds $(\Delta,||\cdot||_2/\sqrt{2})$ into $(\SetB^\infty,d_K)$,
and $\Delta$ spans an $m-1$-dimensional sub-space of $\SetR^m$.
This proves $\lim_{s\to\infty}d_K/s$ is Euclidean.
For sufficiently large $s$, $d_K/s$ is $\eps$-close to $d_{01}$.
For small-enough $\eps$, $d_{01}$ embeds into a slightly distorted standard simplex (Exercise).
Therefore, $d_K$ is exactly Euclidean on $\{x_{1:s}^1,...,x_{1:s}^m\}$ for all sufficienly large $s$.
\end{proof}

For any finite metric space $(\cX,d)$, 
$d^\alpha$ is Euclidean for sufficiently small $\alpha>0$.
This follows simply from the fact that 
$d(x,y)^\alpha\to[\![x\neq y]\!]=d_{01}(x,y)$ for $\alpha\to 0$, which is Euclidean.
But the larger $\cX$ the smaller $\alpha$ may need to be, as the next theorem shows,
which is used in the proof of \cref{thm:dKnotE} for $\alpha\leq\fr12$.

\begin{theorem}[Graphs for which $d_G^\alpha$ is not Euclidean for $\alpha>0$]\label{thm:dGnotE} 
Consider the complete (unweighted) bipartite graph $G=K_{m,n}$ of $m+n$ nodes ($m,n\geq 2$)
and its induced metric $d_G$.
Then for any $0<\alpha\leq 1$, there exist $m,n$ such that $d_G^\alpha$ is \emph{not} Euclidean.
More precisely, $d_G^\alpha$ is Euclidean iff $4^\alpha\leq 2/(2-\fr1m-\fr1n)$.
For $G=K_{\infty,\infty}$, $d_G^\alpha$ is not Euclidean for all $\alpha>0$. 
\end{theorem}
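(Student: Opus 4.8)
The plan is to apply the characterization in \cref{lem:metric}$(iii)$: since $d_G^\alpha$ is a genuine metric for $\alpha\in[0,1]$ by \cref{lem:metric}$(i)$, it is Euclidean exactly when its square $d_G^{2\alpha}$ is conditionally negative definite (CND). So the whole problem reduces to deciding for which $q:=4^\alpha=2^{2\alpha}$ the matrix $M_{ij}:=d_G(x_i,x_j)^{2\alpha}$ is CND. First I would write down the distance matrix of $K_{m,n}$. Splitting the vertices into the two parts $A$ (size $m$) and $B$ (size $n$), the path metric gives $d_G=2$ between any two distinct vertices of the same part, $d_G=1$ between vertices of different parts, and $0$ on the diagonal. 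Hence $M$ has entries $q$ within each part (off-diagonal), $1$ across parts, and $0$ on the diagonal.

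The heart of the proof is evaluating the quadratic form $c^\top M c$ over all $c$ with $\sum_i c_i=0$ and checking its sign. Writing $S_A,S_B$ for the coordinate sums within each part and $Q_A,Q_B$ for the corresponding sums of squares, a direct expansion (using $\sum_{i\neq j}c_ic_j=S^2-Q$ within each part and the cross term $2S_AS_B$) gives
$$
  c^\top M c ~=~ q(S_A^2-Q_A)+q(S_B^2-Q_B)+2S_AS_B .
$$
The key simplification comes from the constraint $S_A+S_B=0$: setting $S_A=s=-S_B$ collapses this to $c^\top Mc=2s^2(q-1)-q(Q_A+Q_B)$. Since the coefficient of $Q_A+Q_B$ is negative, the form is maximized by making $Q_A+Q_B$ as small as possible for given $s$; by Cauchy--Schwarz the minimum is $Q_A+Q_B=s^2(\fr1m+\fr1n)$, attained by spreading the mass equally within each part. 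Substituting, the maximal value becomes $s^2[\,q(2-\fr1m-\fr1n)-2\,]$, so $M$ is CND \emph{iff} this bracket is $\leq 0$, i.e.\ $q\leq 2/(2-\fr1m-\fr1n)$. This is exactly the claimed threshold $4^\alpha\leq 2/(2-\fr1m-\fr1n)$, and I expect this optimization step---recognizing that the $\sum c_i=0$ constraint lets one reduce to the single scalar $s$ and that the extremal $c$ is the symmetric one---to be the only genuinely delicate point.

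The two remaining assertions then fall out. For fixed $\alpha>0$ we have $4^\alpha>1$, while $2/(2-\fr1m-\fr1n)\to 1$ as $m,n\to\infty$; concretely the threshold is violated as soon as $\fr1m+\fr1n<2(1-4^{-\alpha})$, whose right-hand side is positive, so suitable $m,n$ exist and $d_G^\alpha$ is then not Euclidean. For $K_{\infty,\infty}$ I would invoke \cref{lem:metric}$(iv)$: any finite vertex set consisting of $m$ vertices from one part and $n$ from the other inherits precisely the $K_{m,n}$ path metric, and for every $\alpha>0$ one can choose $m,n$ large enough that $d_{K_{m,n}}^\alpha$ is non-Euclidean. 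Since a metric is Euclidean iff it is Euclidean on every finite subset, $d_{K_{\infty,\infty}}^\alpha$ cannot be Euclidean for any $\alpha>0$ (equivalently, the threshold with $\fr1m+\fr1n=0$ reads $4^\alpha\leq 1$, forcing $\alpha\leq 0$).
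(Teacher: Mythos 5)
Your proof is correct and follows essentially the same route as the paper: reduce via Schoenberg's criterion (\cref{lem:metric}$(iii)$, together with $(i)$ for the metric property) to deciding whether $d_G^{2\alpha}$ is CND, and test the quadratic form on the two-part structure of $K_{m,n}$ --- the extremal vector your Cauchy--Schwarz optimization produces is exactly the $c=(\frac1m,\dots,\frac1m,-\frac1n,\dots,-\frac1n)$ that the paper plugs in directly, and the $K_{\infty,\infty}$ case is handled identically by restriction to finite subgraphs. The only difference is that you carry out the maximization over all $c$ with $\sum_i c_i=0$ explicitly, thereby supplying the converse (``Euclidean if $4^\alpha\leq 2/(2-\frac1m-\frac1n)$'') direction that the paper states but leaves as an exercise.
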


\begin{proof}
For $G=K_{m,n}$ we have 
$(d_G)_{ij}=2$ if $i\neq j$ are both $1\leq i,j\leq m$ or both $m<i,j\leq m+n$,
and $1$ otherwise, and of course $(d_G)_{ii}=0~\forall i$.
For instance, for $m=3$ and $n=2$,
$$
  d_G ~=~ \scriptsize\arraycolsep=2pt \left(\begin{array}{ccc|cc}
    0 & 2 & 2 & 1 & 1 \\
    2 & 0 & 2 & 1 & 1 \\
    2 & 2 & 0 & 1 & 1 \\ \hline
    1 & 1 & 1 & 0 & 2 \\
    1 & 1 & 1 & 2 & 0 \\
  \end{array}\right)
$$
Consider the element-wise power $M_{ij}:=(d_G)_{ij}^{2\alpha}$.
For $c=(\fr1n,...,\fr1n,\fr{-1}m,...,\fr{-1}m)\in\SetR^{m+n}$ 
($m\times$ $\fr1m$ and $n\times$ $\fr{-1}n$, which sums to $0$) we get
$$
  c^\top\! M c ~=~ (2-\fr1m - \fr1n)4^\alpha-2 ~>~ 0 
  ~~~\text{iff}~~~ 4^\alpha ~>~ \fr{2}{2-1/m-1/n}
$$
For any $\alpha>0$ we can choose $m,n$ large enough such that the condition is satisfied,
so $M$ is not CND in this case. 
By \cref{lem:metric}$(iii)$ this implies that $d_G^\alpha$ is not an Euclidean metric iff
$4^\alpha>\fr{2}{2-1/m-1/n}$.
The converse is also true by noting that among all $c$
with $\sum_i c_i=0$, the above $c$, maximizes $c^\top\!M c$ (Exercise).
For $G=K_{\infty,\infty}$ and any given $\alpha>0$, choose suitably large $m,n$.
Since $K_{m,n}$ is a subgraph of $K_{\infty,\infty}$,
$d_G^\alpha$ is not Euclidean.
\end{proof}

In particular, for $K_{3,2}$ and $\alpha=\fr12$, $4^\alpha=2\not\leq 12/7=2/(2-\fr13-\fr12)$, 
hence $\sqrt{d_{K_{3,2}}}$ is not Euclidean, but $(\SetB^{m'},\sqrt{d_H})$ is Euclidean.
Therefore $K_{3,2}$ cannot be embedded into the Hamming cube.
This prevents the Hamming route via \cref{thm:Ham2K} to strengthen \cref{thm:dKnotE} to $\alpha\leq\fr12$.
Furthermore, by \cref{lem:metric}$(vii)$, $\sqrt{d_{K_{3,2}}}$ 
does not embed into $(\SetR^m,\sqrt{||\cdot||_1})$ either, 
hence $K_{3,2}$ does not embed into $(\SetR^m,||\cdot||_1)$,
therefore the route via \cref{cor:OneNormEinK} is blocked too.
While these indirect embedding strategies fail,
it actually \emph{is} possible to scale-embed $K_{3,2}$ directly into $\BsdKs$.

This still does not settle whether many interesting metrics 
that do not embed into $\ell_p$ could nevertheless embed into $\BsdKs$.
In particular:

\begin{open}[Embedding bipartite graphs]\label{open:dKbipartite}
Can the bipartite graphs $K_{m,n}$ be scale-embedded into $\BsdKs$ for all $m,n$?
\end{open}

A positive solution would, via \cref{thm:dGnotE}, imply 
that $d_K^\alpha$ is not Euclidean for any $\alpha>0$ (settling \cref{open:dKEuclid}) not just $\alpha>0.3$.
On the other hand, the fact that $K_{3,2}$ iso-embeds into $\BsdKs$ but not $\ell_2$,
prevents iso-embedding $\BsdKs$ into $\ell_p$ also for $p<2$.

\begin{theorem}[$d_K$ does not iso-embed into $\ell_p$ for $1\leq p\leq 2$]\label{thm:dHnotinl1}
\end{theorem}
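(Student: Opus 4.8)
The plan is to exploit the single finite obstruction $K_{3,2}$, turning the (approximate) scale-embedding of $K_{3,2}$ into $\BsdKs$ against the negative-type structure of $\ell_p$. Suppose for contradiction that $\psi:\SetB^*\to\ell_p$ is an isometric embedding, i.e.\ $\|\psi(x)-\psi(y)\|_p=d_K(x,y)$ for all $x,y$. By the proof of \cref{prop:K32indK}, $K_{3,2}$ scale-embeds into $\BsdKs$ via some $\phi_s$, so composing and writing $y_v^s:=\psi(\phi_s(v))$ for the five vertices $v$ of $K_{3,2}$ gives $\|y_v^s-y_{v'}^s\|_p = d_K(\phi_s(v),\phi_s(v')) \equs s\,d_{K_{3,2}}(v,v')$. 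I would never need an \emph{exact} copy of $K_{3,2}$ inside $\ell_p$; the approximate points suffice.

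The structural fact about the target is that for $1\le p\le 2$ the kernel $(x,y)\mapsto\|x-y\|_p^p$ is conditionally negative definite on $\ell_p$. Indeed $(\ell_p,\|\cdot\|_p^{p/2})$ is Euclidean by \cref{lem:metric}$(vii)$, and by \cref{lem:metric}$(iii)$ a (semi)metric $d$ is Euclidean iff $d^2$ is CND; here $d=\|\cdot\|_p^{p/2}$ and $d^2=\|\cdot\|_p^p$, so $\|\cdot\|_p^p$ is CND. Consequently, for the balanced coefficient vector $c=(\fr13,\fr13,\fr13,-\fr12,-\fr12)$ (which sums to $0$), property (CND) forces $\sum_{v,v'} c_v c_{v'}\,\|y_v^s-y_{v'}^s\|_p^p \le 0$ for every $s$.

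Next I would evaluate the same sum directly and show it is strictly positive for large $s$. Since $d_{K_{3,2}}(v,v')\ge 1$ for $v\neq v'$, the $O(\log s)$ slack yields $\|y_v^s-y_{v'}^s\|_p^p = s^p\,d_{K_{3,2}}(v,v')^p + O(s^{p-1}\log s)$, so the sum equals $s^p\,(c^\top M c) + O(s^{p-1}\log s)$ with $M_{vv'}:=d_{K_{3,2}}(v,v')^p$. This is precisely the quantity computed in \cref{thm:dGnotE} with $\alpha=p/2$: $c^\top M c = (2-\fr13-\fr12)4^{p/2}-2 = \fr76 2^p - 2$, which equals $\fr13$ at $p=1$ and is increasing, hence strictly positive for all $p\in[1,2]$. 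For $s$ large the leading $s^p$ term dominates the error, so the sum is positive, contradicting the CND bound. Therefore no isometric embedding $\psi$ exists, proving the theorem.

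The step that needs genuine care is the interplay between the scale-embedding slack and the passage to the $p$-th power: one must verify that the leading term $s^p\,(c^\top M c)$, with its \emph{strictly positive} constant, really dominates the $O(s^{p-1}\log s)$ error \emph{uniformly} over the finitely many point pairs, which is exactly what the $v,v'$-independent accuracy of \cref{def:equs} guarantees. Working with the CND inequality on the approximate points is what lets me sidestep a compactness argument that would otherwise be required to first realize $K_{3,2}$ exactly in $\ell_p$; everything else is bookkeeping — isolating the CND property of $\ell_p$ from \cref{lem:metric} and reusing the witness vector and the value $\fr76 2^p-2$ from \cref{thm:dGnotE}.
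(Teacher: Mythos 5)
Your argument is correct, and it takes a genuinely different route from the paper's. The paper also hinges on $K_{3,2}$ as the obstruction, but it argues by composition of (non-)embeddings: $K_{3,2}$ scale-embeds into $\BsdKs$ (\cref{prop:K32indK}), $\ell_p$ iso-embeds into $\ell_1$ for $1\leq p\leq 2$ by the Bretagnolle--Dacunha-Castelle--Krivine theorem, and $K_{3,2}$ does not embed into $\ell_1$ (since $\sqrt{d_{K_{3,2}}}$ is not Euclidean while $\sqrt{\|\cdot\|_1}$ is, by \cref{thm:dGnotE} and \cref{lem:metric}$(vii)$); the contradiction follows. You instead bypass the external $\ell_p\hookrightarrow\ell_1$ theorem entirely and work directly with the fact that $\|\cdot\|_p^p$ is CND on $\ell_p$ (correctly extracted from \cref{lem:metric}$(vii)$ and $(iii)$), evaluating the quadratic form with the witness vector $c=(\fr13,\fr13,\fr13,-\fr12,-\fr12)$ on the \emph{approximate} image points and showing the leading term $s^p(\fr76 2^p-2)\geq\fr13 s^p$ dominates the $O(s^{p-1}\log s)$ error. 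This buys two things: it is more self-contained (only in-paper lemmas plus the scale-embedding of $K_{3,2}$), and it explicitly handles the passage from the exact non-embeddability of $K_{3,2}$ into $\ell_1$ to the \emph{scale}-non-embeddability, a quantitative step the paper's sketch asserts but leaves implicit. What it gives up is the cleaner structural picture of the paper's diagram, and it repeats for each $p$ a computation the paper delegates once to the $\ell_1$ case. Both proofs rest on the same unproved-in-detail input, namely that $K_{3,2}$ scale-embeds into $\BsdKs$ via \cref{prop:K32indK}.
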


\begin{proof}{\bf(sketch)}
The result follows from the following diagram:
$$
\begin{array}{ccccc}
          & \rotatebox[origin=tr]{30}{$\stackrel{s}{\longrightarrow}$} &     d_K      & \hspace{4pt}/\hspace{-10pt}\smash{\stackrel{?!}\longrightarrow} & \ell_p ~(1\leq p\leq 2)\hspace{-10ex} \\
  K_{3,2} &          & ^?\downarrow\!\!\!\!^{-!} &             & \downarrow          \\
          & \smash{\rotatebox[origin=br]{-30}{$\hspace{4pt}|\hspace{-10pt}\longrightarrow$}} & \ell_2     & \longrightarrow & \ell_1
\end{array}
$$
\Cref{thm:dGnotE} shows that the complete bipartite graph $K_{3,2}$ is not Euclidean 
but scale-embeds by \Cref{prop:K32indK} into $\BsdKs$,
hence $d_K$ cannot be Euclidean without causing a contradiction.
A deep and remarkable theorem in metric spaces \cite{Bretagnolle:66} 
shows that $\ell_p$ iso-embeds into $\ell_1$ for $1\leq p\leq 2$,
while $K_{3,2}$ does not scale-embed into $\ell_1$ 
by the comment after the proof of \cref{thm:dGnotE},
hence $d_K$ cannot iso-embed into $\ell_p$ either.
\end{proof}

Given \cref{thm:dKnotE,thm:dHnotinl1}, we can lower our expectations and ask:

\begin{open}[Is $(\SetB^*,d_K^\alpha)$ Euclidean for some $\alpha>0$?]\label{open:dKEuclid}
\end{open}

A weak argument for its truth might be that $\sqrt{d_H}$ is Euclidean and 
embeds into $\sqrt{d_K}$, and both are linearly additive in nature.
A stronger argument against is that \cref{thm:dKnotE} already excludes $\sqrt{d_K}$ 
and the additive nature only holds
if parts of the strings are $K$-independent, 
which is true for random strings but not true in general.
Also, given its generality and unusual nature, 
one would expect $d_K$ to satisfy very few ``classical'' properties.
One precedence of an information distance that is not a metric, 
but its square-root is a metric, is the Jensen--Shannon divergence $d_{JS}$,
albeit between distributions, not between random variables.

\section{Conclusion}\label{sec:Disc}

We investigated properties of the complexity distance $d_K$
over the space of binary strings $\SetB^*$,
defined as the symmetrized conditional Kolmogorov complexity $K(x|y)$
(with some additive correction to make it a proper metric).
We did so by asking which other metric spaces can be scale-embedded into $\BsdKs$.
For instance, we showed that $([0,1]^m,||\cdot||_1)$ does so for $m<\infty$ but does not for $m=\infty$.
We also showed that $d_K$ is not Euclidean,
but many questions have been left open, e.g.\ whether $d_K$ raised to some small power is Euclidean.
We do not even know whether all finite metric spaces can be scale-embedded into $\BsdKs$.

Apart from solving the stated 
\cref{open:dKuniversal,open:dKauto,open:PSEinK,open:dKgrid,open:dKbipartite,open:dKEuclid}
in the text, especially the grand \cref{open:dKuniversal},
there are many related questions we have not touched:
Natural is to consider the \emph{Normalized} version of the Information/Compression Distance (NID/NCD),
which enjoyed great practical success \cite{Cilibrasi:05}.
We could ask whether $\BsdKs$ possesses any of the other properties some classical metrics possess,
for instance whether it is convex or geodesic, 
provided these notions can be suitably adapted to make sense in $\BsdKs$.
Since every metric induces a topology, we could also ask about the weaker topological properties.
We could also relax the conditions for and hence easing scale-isometries: The weakest condition
that keeps the sprit of $\phi_s$ being asymptotically exact would be to only require that
$d_K(\phi_s(v),\phi_s(v'))/s$ converges to $d(v,v')$ for $s\to\infty$,
or any notion between these two extremes.
If convergence is non-uniform in $v,v'$, this would imply that for each fixed $s$,
there are $v,v'$ for which the accuracy is arbitrarily bad, so this notion may be of limited interest.
There is a close connection between Shannon entropy and Kolmogorov complexity \cite[Tab.2.22]{Hutter:24uaibook2},
for instance, all linear inequalities that hold for one also hold for the other \cite{Hammer:00},
so it is natural to explore whether this connection can help here as well.
Even more so, under mild conditions, Shannon entropy converges to Kolmogorov
complexity \cite{Austern:20}, hence $d_I=d_K\pm O(1/\sqrt{s})$.
We could weaken the requirement of asymptotic scale-embeddings further to quasi-isometries,
or even further to approximate isometries with distortions growing with the number of points
like in the famous Johnson--Lindenstrauss flattening lemma \cite[Chp.15]{Matousek:02}.


\section*{References}\label{sec:Bib}
\addcontentsline{toc}{section}{\refname}

\def\refname{\vspace{-4ex}}
\begin{small}

\end{small}

\onecolumn\clearpage\appendix

\section{List of Notation}\label{app:Notation}

\par\vspace{0pt plus \textheight}
\begin{samepage} 
\begin{tabbing} 
  \hspace{0.13\textwidth} \= \hspace{0.73\textwidth} \= \kill
  {\bf Symbol }     \> {\bf Explanation}                                                    \\
  $a/bc=a/(bc)$     \> ~ while $a/b\cdot c=(a/b)\cdot c$ though we actually always bracket the latter \\
  $m$               \> dimension or size of set. By default $\in\SetN$                      \\
  $i,j\in\SetN$     \> vector or matrix indices $\in\{1,...,m\}$                            \\
  $n$               \> generic natural number                                               \\
  $k,d$             \> kernels, distances $\cX\times\cX\to\SetR$                            \\
  $(\cX,d_\cX)$     \> metric space. Set $\cX$ with metric $d_\cX$                          \\
  $K,\Km$           \> prefix Kolmogorov complexity, monotone complexity                    \\
  $\SetB$           \> $=\{0,1\}$                                                           \\
  $[\![\text{\it bool}]\!]$ \> =1 if {\it bool}=True, =0 if {\it bool}=False (Iverson bracket) \\
  $|\cX|$           \> size of set $\cX$.                                                   \\
  $\cA$             \> finite alphabet like $\{a,...,z\}$ or ASCII or $\SetB=\{0,1\}$.      \\
  $\cA^n,\cA^*$     \> set of all (length $n$, finite) strings over alphabet $\cA$          \\
  $\cA^\infty$      \> set of all infinite sequences over alphabet $\cA$                    \\
  $x,y,z\in\SetB^*$  \> finite binary strings in metric space $\BsdKx$               \\
  $x,y,z\in\SetB^\infty$ \> infinite binary sequences, initials used in $\BsdKx$     \\
  $\BsdKs$          \> $\SetB^*$ equipped with metric $d_K$.                                \\
  $\ell(x)$         \> length of string $x$                                                 \\
  $x_{1:s}\in\cA^s$ \> string of length $s$                                                 \\
  $x_{<t}\in\cA^{t-1}$\> string of length $t-1$                                             \\
  iff               \> if and only if                                                       \\  
  w.l.g.            \> without loss of generality                                           \\
  w.p. ...          \> with probability ...                                                 \\
  $\eps,\epstr$     \> small number $>0$, empty string                                      \\
  $\lb$             \> binary (base-$2$) logarithm $\log_2$                                 \\
  $\SetR,\SetN,...$ \> set of real,natural numbers                                          \\
  $\ell_p$          \> $:=\{v\in\SetR^\infty:||v||_p<\infty\}$ for some $p\leq\infty$       \\
  $O()$             \> classical $O()$ notation                                             \\
  $\leqa,\geqa,\equa$ \> $\leq,\geq,=$ within a universal additive constant $c_U$           \\
  $\leqs,\geqs,\equs$ \> $\leq,\geq,=$ within $\pm O(\log s)$                               \\
  $:=,~\equiv$      \> definition, equal by earlier definition                              \\
  $u,v,w$           \> elements of metric spaces ($\cV,d_\cV)$ etc.                         \\
  $v',x',v'',x''$   \> generic elements of the same set as $v$, $x$, etc.                   \\
  $\dot v, \dot x$  \> element of $\dot\cV\subseteq\cV$, $\dot\cX\subseteq\cX$, etc.        \\
  $\cU,\cV,\cW$     \> metric spaces with metric $d_\cU,d_\cV,d_\cW$                        \\
  $s\in\SetN$       \> metric scale factor                                                  \\
  $\phi_s$          \> scale-embedding, (e.g.\ $x_u^s:=\phi_s(u)$ from $(\cU,d_\cU)$) into $\BsdKs$ \\
  $\phi_*$          \> scale-embedding (e.g.\ from $(\cV,d_\cV)$) into $((\SetB^*)^\infty,d_K)$ \\
  $\phi$            \> uniform scale-embeddings (e.g. from $(\cW,d_\cW)$) to $(\SetB^\infty,d_K)$ \\
  $P(x|y)$          \> short for $P[X=x|Y=y]$. Probability that $X=x$ given $Y=y$           \\
  $P[X|Y]$          \> random variable $f(X,Y)$, where $f(x,y):=P(x|y)$                     
\end{tabbing}
\end{samepage}

\end{document}